\documentclass[english,12pt]{article}
\usepackage[latin9]{inputenc}
\usepackage{calc}
\usepackage{mathrsfs}
\usepackage{mathtools}
\usepackage{amsmath}
\usepackage{amsthm}
\usepackage{amssymb}
\usepackage[round,compress,sort]{natbib}
\usepackage{svg}

\usepackage{url} 
\usepackage{soul}

\usepackage{enumitem}
\usepackage{hyperref}
\hypersetup{hidelinks}
\usepackage[ruled,vlined]{algorithm2e}
\makeatletter
\usepackage{setspace}
\usepackage{booktabs} % For better horizontal rules
\usepackage{caption} % For customizing captions
\usepackage{siunitx} % For improved decimal alignment
\usepackage{threeparttable}
\usepackage{bbm}
\usepackage{xcolor}
\usepackage{sectsty}
\usepackage{threeparttable}
\usepackage{longtable}

\newcommand{\diag}{\mathrm{diag}}

\theoremstyle{plain}
\newtheorem{thm}{\protect\theoremname}
\theoremstyle{plain}
\newtheorem{assumption}[thm]{\protect\assumptionname}
\theoremstyle{definition}
\newtheorem{defn}[thm]{\protect\definitionname}
\theoremstyle{remark}
 
\theoremstyle{plain}
\newtheorem{lem}[thm]{\protect\lemmaname}
\theoremstyle{plain}

\theoremstyle{plain}
\newtheorem{cor}[thm]{\protect\corollaryname}
\newtheorem{proposition}{Proposition}
\theoremstyle{plain}
\newtheorem{example}{Example}
\@ifundefined{date}{}{\date{}}
\usepackage{amsthm}\usepackage{mathrsfs}
\usepackage{bm}

\usepackage{authblk}
\usepackage{enumitem}
\usepackage{babel}
\usepackage{float}
\usepackage{graphicx}
\usepackage{tikz}
\usepackage{xcolor}% for \textcolor
\usepackage{soul}% for \hl (highlight)
\definecolor{myblue}{HTML}{0B5394} % a darker blue
\usetikzlibrary{arrows.meta, positioning}

\makeatother

\usepackage{babel}
\providecommand{\assumptionname}{Assumption}
\providecommand{\corollaryname}{Corollary}
\providecommand{\definitionname}{Definition}
\providecommand{\lemmaname}{Lemma}
\providecommand{\questionname}{Question}
\providecommand{\theoremname}{Theorem}

\usepackage{bm}
\newcommand{\vect}[1]{\bm{#1}} % vectors
\newcommand{\E}{\mathbb{E}}

\newcommand{\Var}{\mathrm{Var}}
\newcommand{\Cov}{\mathrm{Cov}}
\newcommand{\Prob}{\mathbb{P}}
\newcommand{\indep}{\perp\!\!\!\perp}

\begin{document}
\title{\bf  Shrinkage priors for Bayesian Substitute Confounders}
\author[1]{Yordan P. Raykov}
\author[2]{Hengrui Luo}
\author[3]{Justin D. Strait}
\author[1]{Wasiur R. KhudaBukhsh}
\affil[1]{School of Mathematical Sciences, University of Nottingham, Nottingham, UK} 
\affil[2]{Department of Statistics, Rice University, USA; Lawrence Berkeley National Laboratory, USA}
\affil[3]{Statistical Sciences Group, Los Alamos National Laboratory, USA}

\maketitle
\begin{abstract}

Multi-cause observational studies contain information about unmeasured confounding through the dependence structure among causes. However, literal imputation of the unobserved confounder is often more complex than learning a lower-dimensional substitute score that preserves the shared assignment variation needed for stable causal adjustment. The \emph{deconfounder} \citep{wang2019blessings} and related substitute confounder methods exploit this idea, but flexible
assignment models can fit the joint distribution of the causes while producing scores that over-encode the treatment vector, collapse overlap, or capture single cause variation. We develop a Bayesian factor assignment framework for learning sparse substitute confounders that retain coarse multi-cause dependence with shrinkage priors. The theory is stated at the level of posterior concentration, factor score contraction, and overlap preserving assignment geometry, and therefore does not rely on a particular shrinkage prior. Under these conditions, the proposed regression-adjusted estimators are consistent for mean potential outcomes when the corresponding latent variable identification assumptions hold. Shrinkage priors provide a natural tool for latent structural learning: they favour low-dimensional factors supported by multiple causes, discourage effectively single-cause factors, and induce an ordering of the latent factors through progressive shrinkage. Synthetic experiments illustrate the roles of signal strength, outcome validity, and geometry-aware regularization. In an Alzheimer's Disease Neuroimaging Initiative (ADNI) baseline analysis, sparse substitute scores recover much of the adjustment obtained by directly conditioning on invasive cerebrospinal-fluid biomarkers, while collapse diagnostics identify when fitted factors reduce to individual observed measurements.
\end{abstract}
\emph{Keywords: Unobserved confounding, Causal inference, Factor models, Shrinkage priors.}
\tableofcontents
%\newpage
\def\spacingset#1{\renewcommand{\baselinestretch}%
{#1}\small\normalsize} \spacingset{1.} 
\section{Introduction}

Modern observational studies often record many causes whose effects are simultaneously of interest, as in genomics \citep{yu2006unified}, neuroimaging \citep{mueller2005alzheimer}, and digital health \citep{evers2020real}.  The same multiplicity that makes these studies challenging can also be informative: dependence among non-randomly assigned causes may reveal shared variation in the assignment mechanism, and hence carry partial information about unmeasured common causes.  The inferential task is not necessarily to impute the latent confounder itself: it is often more plausible to learn a lower-dimensional substitute score that preserves the shared assignment variation needed for stable adjustment.  However, high-dimensional assignment laws are rarely well supported in all directions, so weighting, regression adjustment, and extrapolation can become unstable for treatment contrasts that lie in weak overlap regions. Regularization, or causal targets that deliberately avoid such directions, is therefore part of the identification and estimation problem rather than merely a numerical convenience.

Our starting point is the deconfounder framework of \citet{wang2019blessings,wang2019rejoinder}.  Its central insight is that, when many causes are observed, their joint assignment distribution may contain information about unmeasured common causes.  This is formalized by fitting an \emph{assignment model} for the causes, in which a low-dimensional latent variable explains their dependence, and then using the fitted latent variable as a substitute adjustment in the outcome model.  Related latent variable approaches develop the same principle in implicit or variational forms \citep{tran2017implicit,ranganath2018multiple}. However, with many causes and flexible assignment models, the learned latent representation can behave pathologically.  Factors may load on only a single cause, encode nearly injective functions of the treatment vector, or remove so much residual variation that the conditional variability of the causes given the latent variable collapses.  Conditioning on such a representation can destroy overlap and invalidate causal adjustment, even when a useful low-dimensional confounding signal exists \citep{ogburn2019comment,damour2019multi}.  At the same time, a growing literature shows that shared latent confounding across many causes is generally insufficient for point identification without additional restrictions on the outcome model, the causal target, or auxiliary variables such as instruments or negative controls
\citep{damour2019multi,kong2022identifiability,veitch2019using}.  Multi-cause settings provide extra leverage, but do not remove the need for structure.

We take a complementary view and focus on stability of estimation under substitution, not identification with an oracle latent confounder alone.  A coarse substitute score can denoise shared assignment variation, whereas an over-rich score can over-condition on the treatment vector and collapse overlap. Thus, the central theoretical question is not simply whether an oracle latent confounder would identify a causal functional, but whether replacing it by an estimated substitute score produces controlled error for the causal target being reported.  For full mean potential outcome targets, we show that regression-adjusted functionals are stable when the outcome regression is calibrated at the fitted scores and the substitute scores contract toward an admissible latent confounder.  When some treatment directions remain poorly supported after adjustment, we instead define projected or regularized targets that avoid those directions. This leads to a geometric view of substitute confounding: the assignment model not only produces adjustment scores; it also identifies directions of the treatment space that are persistently explained by shared latent factors.  These are directions of high confounding uncertainty and should be projected out or shrunk in downstream causal estimation.  This connects substitute confounder learning to regularized causal estimators, including ridge-regularized balancing and regression \citep{bruns2025augmented}.

In this paper, we propose a geometric substitute confounder approach with two components. On the \emph{identification} side, we separate two targets. Full mean potential outcome targets require latent ignorability and latent-level positivity. When effects along the latent confounding directions are not identified, we instead define \emph{projected} causal estimands that depend only on treatment variation orthogonal to the confounding subspace. These restrictions accommodate nonparametric confounding pathways while making explicit which treatment directions are identified and which are excluded or regularized. On the \emph{estimation} side, we show how to learn a stable, overlap-preserving substitute confounder from many causes using sparse Gaussian factor assignment models with shrinkage priors on the factor loadings.

We implement this idea using global-local and ordered shrinkage priors on the factor-loading matrix \citep{bhattacharya2011sparse,legramanti2020bayesian}.
These priors concentrate posterior mass on low-dimensional shared multi-cause structure, suppress spurious near single-cause factors, and help preserve conditional variability of the causes given the factors.  Since factor models are identifiable only up to rotations, and continuous shrinkage priors do not
produce exact zeros, we formulate the needed assignment stage property through effective oriented loadings and an explicit \emph{factor score contraction} condition. Substituting these scores into the identified outcome models yields consistent
frequentist plug-in and Bayesian posterior predictive estimators, with bias controlled by the contraction rate.  Thus, outcome restrictions define the causal target, while shrinkage regularizes the assignment model toward the effective sparsity and contraction properties needed for stable,
geometry-aware estimation.

Section~\ref{sec:prelims-deconf} introduces the multi-cause potential outcome setup, the substitute confounder premise via assignment models, and the stability lens that motivates geometry-aware estimands. Section~\ref{sec:sparse-assignment} formalizes failure modes of flexible assignment models, develops sparse factor assignment structure that preserves conditional variability, and states plug-in stability results linking factor score error to causal error. Section~\ref{sec:bayesian-assignment} places the assignment stage in a Bayesian framework, describes shrinkage priors and the induced posterior confounding geometry, and proves consistency of posterior averaged causal functionals under factor score contraction. Section~\ref{sec:experiments} presents synthetic experiments, an adaptive shrinkage study connecting the learned geometry to regularized causal
estimation, and an ADNI case study evaluating adjustment recovery and factor collapse diagnostics.

\section{Related work}
\label{sec:related}

In the causal inference literature, a long-standing response to unmeasured confounding is to quantify robustness rather than seek point identification. An established approach is sensitivity analysis which formalizes how large an unobserved bias would need to be to overturn conclusions, beginning with matched-study formulations \citep{rosenbaum1983assessing} and early bounding arguments in epidemiology \citep{cornfield1959smoking}. Modern variants provide interpretable summary measures such as E-values \citep{vanderweele2017sensitivity} and connect robustness to omitted-variable bias calculations that benchmark unmeasured confounding against observed covariates \citep{cinelli2020making}.
In parallel, partial-identification approaches derive sharp bounds under minimal assumptions \citep{manski1990nonparametric}. These tools are essential for assessing sensitivity and robustness, but they do not by themselves explain how to construct adjustment variables or estimands that remain stable when the treatment is high-dimensional and overlap is weak.

A different route would be to introduce auxiliary variables (e.g. proxies, negative controls, or instruments) to recover information about latent confounders. Negative-control and proxy-variable methods identify causal effects by introducing observed proxies for the unmeasured confounder and solving a confounding-bridge
equation; identification requires the associated conditional expectation operator to be injective, often formulated as a completeness condition \citep{miao2018identifying}. The broader negative-control literature discusses how negative controls can be deployed in observational settings \citep{tchetgen2020introduction,jin2025directed}. Related work gives graphical criteria for proxy sufficiency \citep{kuroki2014measurement,veitch2019using}, while
instrumental variable methods rest on exclusion and relevance assumptions \citep{angrist1996identification}. These approaches can deliver point identification, but their validity hinges on structural assumptions about the
auxiliary variables that are typically untestable and often difficult to justify in applications with complex treatment mechanisms.

Multi-cause regimes motivate a third perspective: dependence among many causes may reflect shared latent variation that simultaneously drives multiple components of the treatment vector. The deconfounder framework
\citep{wang2019blessings,wang2019rejoinder} operationalizes this idea by fitting an \emph{assignment model} in which a low-dimensional latent variable renders the causes conditionally independent, and then using factor score summaries as a substitute confounder for outcome adjustment. Closely related latent variable adjustment approaches include implicit causal models and neural architectures for high-dimensional observational and genetic settings \citep{tran2017implicit,ranganath2018multiple}. This line of work has also warranted rigorous examination of the underlying assumptions required for the deconfounder framework. \citet{ogburn2019comment} emphasize that core assumptions (notably, the absence of unmeasured single-cause confounders) are fundamentally untestable from observed data, and \citet{damour2019multi} show that multi-cause structure alone does not guarantee point identification without additional restrictions on the outcome or auxiliary variables. Identification results that do hold typically require strong asymptotic structure, such as latent confounders affecting infinitely many causes \citep{kong2022identifiability}. Extensions adapt the multi-cause latent variable idea to time series \citep{bica2020estimating}, spatial regimes \citep{khot2025spatial}, as well as others, but the central tension remains: a flexible factor model can fit the joint distribution of $\vect X$ while producing latent summaries that are unhelpful for causal adjustment, either by over-encoding $\vect X$ (destroying overlap) or by capturing the wrong low-dimensional structure.

These observations motivate a stability view of substitute confounding: beyond whether an ideal latent adjustment would identify a target, one must control what happens when that latent variable is replaced by a fitted score and when treatment directions have weak support. This places our work alongside regularized causal inference in high dimensions, where conditioning and overlap can make estimators poorly conditioned even when confounders are observed. A large body of literature develops valid inference with high-dimensional nuisance estimation, including post-selection methods \citep{belloni2014inference} and orthogonalized estimators in double/debiased machine learning \citep{chernozhukov2018double}, with related analysis for propensity-based adjustment \citep{farrell2015robust} and for heterogeneous effects via causal forests \citep{wager2018estimation}. More directly connected to our geometric framing, ridge-regularized augmented balancing weights interpret stability as a metric/penalty choice and demonstrate improved finite-sample tradeoffs when sparsity is implausible \citep{bruns2025augmented}. We build on this viewpoint, but shift the source of geometry: in our setting the metric is induced by the learned confounding structure in the assignment model, and regularization is motivated by overlap preservation and by controlling the substitution error incurred when $Z$ is replaced by $\hat Z$. Our assignment models draw on sparse Bayesian factor analysis, especially global-local and ordered shrinkage priors that adaptively regularize
factor loadings \citep{bhattacharya2011sparse,legramanti2020bayesian}. We use
this literature as a modeling device for overlap-preserving substitute scores; the causal role of shrinkage is developed in Sections~\ref{sec:sparse-assignment}
and~\ref{sec:bayesian-assignment}.

\section{Preliminaries}
\label{sec:prelims-deconf} 
 
Suppose we have $n$ independent and identically distributed (i.i.d.) units from a superpopulation distribution
$\Prob_{\vect X,\vect V, Y}$. For unit $i\in\{1,\dots,n\}$ we observe
\begin{align}
  (\vect x_i,\vect v_i,y_i)\in \mathcal X\times\mathcal V\times\mathbb R,
\end{align}
where $\vect x_i=(X_{i1},\dots,X_{im})^\top\in\mathcal X\subseteq\mathbb R^m$ is a vector of $m$ causes, $\vect v_i\in\mathcal V\subseteq\mathbb R^d$ are observed pre-treatment covariates, and $ y_i\in\mathbb R$ is the outcome. Bold uppercase letters $\vect{X}$ denote random vectors and bold lowercase letters denote their realizations $\vect{x}$; letters not in bold indicate scalars. The notation $\vect{Y}\indep \vect{X}\mid \vect{V}$ denotes conditional independence between random vectors $\vect{Y}$ and $\vect{X}$ given the random vector $\vect{V}$, and $\Prob(\cdot)$, $\E[\cdot]$, $\Var(\cdot)$, and $\Cov(\cdot)$ denote probability, expectation, variance, and covariance, respectively.
 
We adopt the potential outcome framework of \citet{rubin1974estimating},
but now with a \emph{vector} of causes. For each treatment regime $\vect x\in\mathcal{X}$ we posit a corresponding potential outcome
\begin{align}
Y^{(\vect{x})} : (\Omega,\mathcal{A}) \longrightarrow (\mathbb{R},\mathcal{B}(\mathbb{R})),  \qquad \vect{x}\in\mathcal{X},
\end{align}
interpreted as the response we would observe \emph{had all} causes been set to $\vect{X}=\vect{x}$. Thus $\{Y^{(\vect x)}:\vect x\in\mathcal X\}$ is a family of random variables indexed by treatment regimes $\vect x$, all defined on the same probability space as $\vect X$. The observed outcome is linked to the potential outcomes through the usual consistency relation \(Y = Y^{(\vect X)}.\)

We focus on the mean potential outcome and the average treatment effect (ATE):
\begin{align}
  \mu(\vect x)=\E\!\bigl[ Y^{(\vect x)}\bigr],
  \qquad
  \Delta(\vect x,\vect x')=\mu(\vect x')-\mu(\vect x),
  \qquad \vect x,\vect x'\in\mathcal X.
\end{align}
More generally, let
\[
\eta_{\vect x}(A):=\Prob\!\bigl(Y^{(\vect x)}\in A\bigr),
\qquad A\in\mathcal B(\mathbb R),
\]
denote the distribution of the potential outcome under treatment regime $\vect x$.
Other causal functionals follow analogously. For example, we may consider marginal effects of a single component $x_k$ holding the remaining causes fixed, or functionals of the entire distribution of $Y^{(\vect x)}$ such as quantiles or risk measures. These objects are standard in multivalued and multi-treatment settings and can be handled by replacing $\mu(\vect x)$ with the corresponding functionals \citep{imbens2000role,lopez2017estimation}.

\subsection{Classical identification via observed covariates}
\label{subsec:prelims-backdoor}

Classically, a standard route to identifying $\mu(\vect x)$ (and hence $\Delta(\vect x,\vect x')$) is adjustment on observed covariates $\vect V$ that block back-door paths \citep{pearl1995causal}. This relies on ignorability of the potential outcomes given $\vect V$ and joint overlap of the treatment vector within covariate strata. 

\begin{assumption}[Ignorability / unconfoundedness]
\label{assum:ignorability}
For every treatment level $\vect{x} \in \mathcal{X}$,
\begin{align}
Y^{(\vect{x})} \;\indep\; \vect{X} \;\bigm|\; \vect{V}.
\end{align}
\end{assumption}

\begin{assumption}[Positivity / overlap]
\label{assum:positivity}
For every treatment level $\vect{x}$ in the support of interest,
\begin{align}
p_{\vect{X}|\vect{V}}(\vect{x} \mid \vect{v}) > 0
\end{align}
for $\Prob_{\vect{V}}$-almost every $\vect{v}\in \mathcal{V}$. When needed for estimation, we strengthen this to the conditional density being bounded away from zero on the support of interest.
\end{assumption}

Under Assumptions~\ref{assum:ignorability}-\ref{assum:positivity}, together with consistency, the
back-door adjustment identifies $\eta_{\vect x}$ for $A\in\mathcal B(\mathbb R)$ via
\begin{equation}
\label{eq:eta-backdoor}
  \eta_{\vect x}(A)
  =
  \int_{\mathcal V}
    \Prob\!\bigl( Y\in A \mid \vect X=\vect x,\vect V=\vect v\bigr)\,
  d\Prob_{\vect V}(\vect v).
\end{equation}
This classical baseline is appropriate when (i) the observed covariates
$\vect V$ are sufficiently rich to block all back-door paths and (ii) the joint overlap condition is plausible. In high-dimensional multi-cause regimes, both conditions can be difficult to justify; this motivates substitute confounder approaches that leverage structure in the joint distribution of $\vect X$.

\subsection{Substitute confounders: representation and existence}
\label{subsec:deconfounder}

For unit-level statements below, we write $Y_i^{(\vect x)}$ for the potential outcome of unit $i$ under treatment regime $\vect x$. The deconfounder viewpoint \citep{wang2019blessings} is that when the treatment vector $\vect X_i$ is high-dimensional, dependence among causes may reflect latent variables that
simultaneously affect multiple causes. A substitute confounder approach posits a latent variable $\vect Z_i\in\mathcal Z$ such that, informally, $\vect X_i=(X_{i1},\dots,X_{im})^\top$ is generated from shared latent variation plus idiosyncratic noise. We write
\[
\vect X_{1:n}:=(\vect X_1,\dots,\vect X_n)
\]
for the full sample of treatment vectors. Classical adjustment
\eqref{eq:eta-backdoor} relies on observed pre-treatment covariates $\vect V_i$ to render treatment assignment ignorable; the substitute confounder strategy instead learns an adjustment variable from the dependence structure of the
causes themselves.

A common formalization is an assignment model for the causes with a latent variable $\vect Z_i$ such that the causes are conditionally independent given $\vect Z_i$:
\begin{align}
\label{eq:assignment-model}
  \vect Z_i &\sim q_\alpha,
  &
  p_\theta(\vect x_i\mid \vect Z_i=\vect z_i)
  &= \prod_{j=1}^m p_{\theta_j}(x_{ij}\mid \vect z_i),
\end{align}
where $q_\alpha$ denotes the latent law and $p_{\theta_j}$ denotes the cause-specific conditional law.  We use different symbols for these two parts of the assignment model because only the conditional product structure is used to define the substitute confounder representation.

The procedure consists of two stages.  In the first stage, the assignment model is fitted to the full treatment sample $\vect X_{1:n}$, yielding estimates or posterior draws of the common assignment parameters.  A score for unit $i$ is then computed from its own treatment vector using those sample-level fitted parameters, for example
\[
  \hat{\vect Z}_{i,nm}=f_{\hat\alpha,\hat\theta}(\vect x_i;\vect X_{1:n})
  :=\mathbb E_{\hat\alpha,\hat\theta}(\vect Z_i\mid \vect X_i=\vect x_i),
\]
or by the corresponding posterior mode.  Thus $\hat{\vect Z}_{i,nm}$ is not estimated from a single observation alone: the common map is learned from $\vect X_{1:n}$ and then evaluated at $\vect x_i$.  The fitted scores are therefore dependent across units through the common first-stage fit.  The large sample results below allow for this dependence by formulating score accuracy as an average condition over $i=1,\ldots,n$; sample splitting or cross-fitting could also be used if one wanted independent first- and second-stage scores. Throughout the frequentist asymptotic sections we use $\hat{\vect Z}_{i,nm}$ to emphasize that score accuracy may depend on both the number of units $n$ and the number of causes $m$; in Bayesian statements, $\hat{\vect Z}_i(\theta_X)$ denotes the draw-specific score under assignment parameter $\theta_X$. In Section~\ref{sec:sparse-assignment}, we propose a structured sparse factor assignment model to estimate $\hat{\vect Z}_{i,nm}$ in a way that preserves multi-cause structure while avoiding unnecessary overlap deterioration.

A key structural ingredient in \citet{wang2019blessings} is a Kallenberg construction of the assignment mechanism.  In our setting, such a representation is combined with an explicit exogeneity condition on the assignment noise (Definition~\ref{def:kallenberg} and Lemma~\ref{lem:kallenberg-implies-unconf}) to yield the weak unconfoundedness
statement needed for causal adjustment.

\begin{defn}[Kallenberg construction]
\label{def:kallenberg}
We say that $\vect X_i$ admits a \emph{Kallenberg construction} from the latent
variable $\vect Z_i$ if there exist mutually independent random variables
$U_{i1},\dots,U_{im}\sim \mathrm U(0,1)$, independent of $\vect Z_i$, and
measurable functions $f_1,\dots,f_m$ such that
\begin{align}
\label{eq:kallenberg_con}
  X_{ij}=f_j(\vect Z_i,U_{ij}),\qquad j=1,\dots,m.
\end{align}
\end{defn}

A representation of the form $\vect X_i=f(\vect Z_i,U_i)$ with $U_i$ mutually independent uniforms can be constructed to match a given \emph{assignment distribution}.  However, the additional independence condition
$U_i\indep \{Y_i^{(\vect x)}:\vect x\in\mathcal X\}\mid \vect Z_i$ is a \emph{causal} assumption on the full data law and cannot be deduced from the observed distribution of $\vect X$ alone; see \citet{ogburn2019comment} for a detailed discussion of this point.

The next lemma records the pointwise weak-unconfoundedness statement used in \citet{wang2019blessings} and first introduced in \citet{imbens2000role}.

\begin{lem}[Kallenberg construction implies unconfoundedness]
\label{lem:kallenberg-implies-unconf}
Fix a unit $i$. Suppose there exist mutually independent random variables $U_{i1},\dots,U_{im}\sim \mathrm{Unif}(0,1)$ and measurable functions $f_1,\dots,f_m$ such that
\[
  X_{ij}=f_j(\vect Z_i,U_{ij}) \quad \text{a.s. for all } j\in\{1,\dots,m\}.
\]
Assume moreover that $U_i:=(U_{i1},\dots,U_{im})$ is independent of
$\vect Z_i$ and is conditionally independent of the potential outcome table
\(\{Y_i^{(\vect x)}:\vect x\in\mathcal X\}\) given $\vect Z_i$:
\[
  U_i \;\indep\; \vect Z_i,
  \qquad
  U_i \;\indep\; \{Y_i^{(\vect x)}:\vect x\in\mathcal X\}\mid \vect Z_i .
\]
Then
\[
  \vect X_i \;\indep\; \{Y_i^{(\vect x)}:\vect x\in\mathcal X\}\ \bigm|\ \vect Z_i.
\]
In particular, for each fixed $\vect x\in\mathcal X$,
$Y_i^{(\vect x)}\indep \vect X_i\mid \vect Z_i$.
\end{lem}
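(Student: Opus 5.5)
The plan is to reduce the statement to a standard property of conditional independence: if $W \indep T\mid \vect Z$, then any measurable function of $(\vect Z,W)$ is conditionally independent of $T$ given $\vect Z$. Write $T_i:=\{Y_i^{(\vect x)}:\vect x\in\mathcal X\}$ for the potential outcome table, viewed as a random element of a product space with its product $\sigma$-algebra. Write $g(\vect z,\vect u):=(f_1(\vect z,u_1),\dots,f_m(\vect z,u_m))$, so that $\vect X_i=g(\vect Z_i,U_i)$ almost surely, with $g$ jointly measurable.

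The key steps are as follows.

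First, the hypothesis $U_i\indep T_i\mid \vect Z_i$ is used directly. For bounded measurable $h$ on the range of $U_i$ and $k$ on the range of $T_i$, it gives
\[
\E[h(U_i)k(T_i)\mid\vect Z_i]=\E[h(U_i)\mid\vect Z_i]\,\E[k(T_i)\mid\vect Z_i].
\]

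Second, this is extended to functions of $(\vect Z_i,U_i)$. I would show that for every bounded measurable $\phi(\vect z,\vect u)$,
\[
\E[\phi(\vect Z_i,U_i)k(T_i)\mid\vect Z_i]=\E[\phi(\vect Z_i,U_i)\mid\vect Z_i]\,\E[k(T_i)\mid\vect Z_i].
\]
For products $\phi=a(\vect z)b(\vect u)$, this follows by pulling the $\sigma(\vect Z_i)$-measurable factor $a(\vect Z_i)$ out of both conditional expectations and applying the first step. The identity then passes to all bounded measurable $\phi$ by a monotone class ($\pi$–$\lambda$) argument, because measurable rectangles form a $\pi$-system that generates the product $\sigma$-algebra. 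Equivalently, one can note that $U_i\indep T_i\mid\vect Z_i$ is the same as $(\vect Z_i,U_i)\indep T_i\mid \vect Z_i$.

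Third, take $\phi=\mathbbm 1\{g(\vect z,\vect u)\in B\}$ for Borel $B\subseteq\mathbb R^m$. This yields
\[
\Prob(\vect X_i\in B,\,T_i\in C\mid\vect Z_i)=\Prob(\vect X_i\in B\mid \vect Z_i)\,\Prob(T_i\in C\mid\vect Z_i),
\]
which is the claimed independence $\vect X_i\indep T_i\mid\vect Z_i$. The almost-sure equality $\vect X_i=g(\vect Z_i,U_i)$ does not affect conditional probabilities. The pointwise statement for a fixed $\vect x$ then follows because $Y_i^{(\vect x)}$ is a coordinate projection of $T_i$, and conditional independence is preserved under measurable maps of either argument.

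The hypothesis $U_i\indep\vect Z_i$ and the mutual independence of the $U_{ij}$ are not needed for the implication itself; they serve only to make the construction a genuine Kallenberg construction matching the assignment model \eqref{eq:assignment-model}. I would remark on this explicitly.

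The main obstacle is measure-theoretic rather than conceptual. When $\mathcal X$ is uncountable, the potential outcome table is a random element of $\mathbb R^{\mathcal X}$, and conditional independence must be read through the product $\sigma$-algebra, which is generated by finite-dimensional cylinders. I would handle this by proving the statement for every finite subfamily $(Y_i^{(\vect x_1)},\dots,Y_i^{(\vect x_r)})$, where steps one to three apply verbatim. I would then extend to the whole product $\sigma$-algebra by a second $\pi$–$\lambda$ argument, since the cylinder sets form a $\pi$-system.
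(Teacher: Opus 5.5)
Your proposal is correct and follows essentially the paper's argument: both write $\vect X_i=F(\vect Z_i,U_i)$ and use $U_i\indep\{Y_i^{(\vect x)}:\vect x\in\mathcal X\}\mid\vect Z_i$ to factor $\Prob(\vect X_i\in\mathcal A,\ \{Y_i^{(\cdot)}\}\in\mathcal B\mid\vect Z_i)$, with the paper going through the tower property over $\sigma(\vect Z_i,\{Y_i^{(\cdot)}\})$. Your monotone-class step, which upgrades $U_i\indep T_i\mid\vect Z_i$ to $(\vect Z_i,U_i)\indep T_i\mid\vect Z_i$, makes rigorous the one line the paper only asserts, namely that $\E[\mathbf 1\{F(\vect Z_i,U_i)\in\mathcal A\}\mid\vect Z_i,\{Y_i^{(\cdot)}\}]$ does not depend on the potential-outcome table; your separate cylinder-set reduction is harmless but unnecessary, since the hypothesis is already stated for the whole table as a random element of the product space.
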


The proof follows \citet[Lemma~1]{wang2019blessings} and the conditional representation results of Kallenberg; for completeness, we summarize the argument in Appendix~\ref{sec:Kallenberg-construction-proof}.

The remainder of the paper estimates a substitute confounder
$\hat{\vect Z}_{i,nm}$ from the causes (via a structured assignment model) and then uses this fitted score for outcome adjustment. The next proposition records a basic plug-in stability property: if $\hat{\vect Z}_{i,nm}$ consistently approximates $\vect Z_i$ in mean square and the regression functional is Lipschitz in the latent score, then replacing $\vect Z_i$ by $\hat{\vect Z}_{i,nm}$ perturbs the target expectation by at most order $\varepsilon_{n,m}$ along the joint sequence for the number of units and causes. We will denote by $\hat{\vect Z}_{i,nm}=f_{nm}(\vect X_i;\vect X_{1:n})$ a fitted
score computed by learning a common assignment map from $\vect X_{1:n}$ and then evaluating it at unit $i$. 

\begin{proposition}[Stability under estimated scores]
\label{prop:plugin-stability}
Fix a treatment regime $\vect x\in\mathcal X$ and define
\begin{align}
  r(\vect x,\vect z):=\mathbb E[Y_i\mid \vect X_i=\vect x,\vect Z_i=\vect z].
\end{align}
Assume:
\begin{enumerate}[label=(S\arabic*)]
\item \label{ass:S1} (Lipschitz continuity in $\vect z$) There exists a set
$\mathcal Z_{n,m}\subseteq\mathbb R^H$ containing both $\vect Z_i$ and
$\hat{\vect Z}_{i,nm}$ almost surely, and a constant $L(\vect x)<\infty$, such
that
\[
  |r(\vect x,\vect z)-r(\vect x,\vect z')|
  \le L(\vect x)\|\vect z-\vect z'\|
  \qquad\text{for all }\vect z,\vect z'\in\mathcal Z_{n,m}.
\]

\item \label{ass:S2} (Score accuracy) Along the joint asymptotic sequence under consideration,
\[
  \mathbb E_0\|\hat{\vect Z}_{i,nm}-\vect Z_i\|^2\le \varepsilon_{n,m}^2,
  \qquad \varepsilon_{n,m}\to0.
\]

\item \label{ass:S3} (Integrability) $\mathbb E_0|r(\vect x,\vect Z_i)|^2<\infty$.
\end{enumerate}
Then
\begin{align}
  \mathbb E_0\{r(\vect x,\hat{\vect Z}_{i,nm})-r(\vect x,\vect Z_i)\}^2
  \le L(\vect x)^2\varepsilon_{n,m}^2.
\end{align}
\end{proposition}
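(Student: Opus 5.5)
The argument is a direct pointwise-to-mean-square transfer, and I would carry it out in three steps. First, I fix $\vect x$ and work on the almost-sure event on which both $\vect Z_i$ and $\hat{\vect Z}_{i,nm}$ lie in $\mathcal Z_{n,m}$. On that event, assumption \ref{ass:S1} applied with $\vect z=\hat{\vect Z}_{i,nm}$ and $\vect z'=\vect Z_i$ gives the pathwise bound
\[
  |r(\vect x,\hat{\vect Z}_{i,nm})-r(\vect x,\vect Z_i)|\le L(\vect x)\,\|\hat{\vect Z}_{i,nm}-\vect Z_i\|.
\]
Because the exceptional event has probability zero, this inequality holds $\Prob_0$-almost surely.

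Second, I square both sides, which preserves the inequality since both sides are nonnegative, and take $\mathbb E_0$ expectations. Monotonicity of expectation for nonnegative random variables is valid even when a side is infinite, so
\[
  \mathbb E_0\{r(\vect x,\hat{\vect Z}_{i,nm})-r(\vect x,\vect Z_i)\}^2\le L(\vect x)^2\,\mathbb E_0\|\hat{\vect Z}_{i,nm}-\vect Z_i\|^2.
\]
Third, I invoke the score-accuracy assumption \ref{ass:S2} to bound the right-hand side by $L(\vect x)^2\varepsilon_{n,m}^2$, which is the claimed inequality.

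The only step that needs care is measurability. The composite $r(\vect x,\hat{\vect Z}_{i,nm})$ must be a random variable, even though $\hat{\vect Z}_{i,nm}=f_{nm}(\vect X_i;\vect X_{1:n})$ depends on the whole sample, so that the expectation on the left is well defined.
\begin{itemize}
\item I would take $r(\vect x,\cdot)$ to be a fixed Borel version of the regression function. Its restriction to $\mathcal Z_{n,m}$ is Lipschitz, hence continuous, and therefore Borel there.
\item Composing with the measurable map $f_{nm}$ then gives a measurable random variable.
\item The dependence of $\hat{\vect Z}_{i,nm}$ on other units plays no role, because the bound is pathwise and needs no independence.
\end{itemize}

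Assumption \ref{ass:S3} is not needed for the inequality itself. It guarantees that $r(\vect x,\vect Z_i)\in L^2$, so combined with the bound and Minkowski's inequality it also gives $r(\vect x,\hat{\vect Z}_{i,nm})\in L^2$. This is what allows the later conclusion that the plug-in regression converges to $r(\vect x,\vect Z_i)$ in $L^2$, and hence in $L^1$, at rate $\varepsilon_{n,m}$ as $\varepsilon_{n,m}\to0$.
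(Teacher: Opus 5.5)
Your proposal is correct and matches the paper's approach: the paper describes the result as ``simply a Lipschitz continuous-mapping argument'' (a pathwise Lipschitz bound on the almost-sure containment event, then squaring, taking $\mathbb E_0$, and invoking (S2)), and it reuses the same step for term (IIa) in the proof of Theorem~\ref{thm:mu-consistency-ATE}. You are also right that (S3) is not needed for the inequality itself and only matters for the subsequent $L^2$ and mean-convergence consequences.
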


From Proposition \ref{prop:plugin-stability} it also immediately follows
\begin{align}
  \mathbb E_0\{r(\vect x,\hat{\vect Z}_{i,nm})\}
  \to \mathbb E_0\{r(\vect x,\vect Z_i)\}
\end{align}
and
\begin{align}
  \left|\mathbb E_0\{r(\vect x,\hat{\vect Z}_{i,nm})\}
  -\mathbb E_0\{r(\vect x,\vect Z_i)\}\right|
  \le L(\vect x)\varepsilon_{n,m}.
\end{align}

The proposition is simply a Lipschitz continuous-mapping argument.  Its role is to isolate what must be supplied by the assignment model: an $L^2$ contraction bound for the fitted scores, indexed by both the number of units and the number of causes. The set $\mathcal Z_{n,m}$ may be replaced by any set containing the true and fitted scores along the joint sequence; if containment holds only with
probability tending to one, the squared contribution from the complement must be $o(\varepsilon_{n,m}^2)$.  The score accuracy condition may follow from posterior contraction or frequentist consistency of the fitted factor model, but
is not implied by increasing the number of causes alone without additional structure and sample size conditions.

\begin{example}[Using Proposition~\ref{prop:plugin-stability}]
Suppose the latent variable is one-dimensional and
$r(\vect x,z)=x_1+z$, which is 1-Lipschitz in $z$.  If
$\mathbb E_0|\hat Z_{i,nm}-Z_i|^2\le\varepsilon_{n,m}^2$, then
\[
  \left|\mathbb E_0\{r(\vect x,\hat Z_{i,nm})\}
  -\mathbb E_0\{r(\vect x,Z_i)\}\right|\le\varepsilon_{n,m}.
\]
The rate $\varepsilon_{n,m}$ depends on the statistical problem used to learn the assignment map.  For example, it may involve both $n$ and $m$ under a high-dimensional factor-model contraction theorem.
\end{example}

For completeness, Appendix~\ref{sec:stability-under-consistent-substitute} establishes
 a stronger conditional stability statement under consistent
substitution, showing that conditioning on $\hat{\vect Z}_{i,nm}$ asymptotically
preserves the relevant conditional expectations when
$\hat{\vect Z}_{i,nm} \to \vect Z_i$.

\section{Sparse Factor Assignment Models and Substitute Confounders}
\label{sec:sparse-assignment}
Section~\ref{subsec:deconfounder} introduced the deconfounder framework: a latent variable $\vect Z_i$ renders the causes conditionally independent through an assignment model of the form \eqref{eq:assignment-model}, and an estimator
$\hat{\vect Z}_{i,nm}=f_{\hat\alpha,\hat\theta}(\vect X_i;\vect X_{1:n})$ (e.g.\ a posterior mean or mode
under the fitted assignment model $p_{\hat\theta}$) is used for outcome adjustment. In this section, we specify a concrete assignment model and explain why structural restrictions are needed for $\hat{\vect Z}_{i,nm}$ to behave as a
\emph{multi-cause} substitute confounder in high-dimensional regimes. This plan works only if the assignment model actually recovers the shared source of variation that confounds the causes. When the number of causes $m$ is large, this becomes delicate. A latent-factor model with too much freedom can
fit the joint distribution of $\vect X$ in ways that are unhelpful for causal inference and require far more samples to be estimable.

\subsection{Why very flexible assignment models can fail}
\label{subsec:why-sparsity}

Lemma~\ref{lem:kallenberg-implies-unconf} formalizes the substitute confounder premise via a Kallenberg representation: conditional on $\vect Z_i=z$, the residual variability in each cause is generated by an independent idiosyncratic draw $U_{ij}$. Thus, after conditioning on $\vect Z_i$, the remaining variation in the causes is carried by the independent $U_{ij}$. For deconfounding, it is therefore important that $X_{ij}\mid \vect Z_i=z$ remain non-degenerate for many causes and relevant values of $z$; equivalently, we want $\Var(X_{ij}\mid \vect Z_i=z)>0$ rather than a conditional law concentrated at a point.

\medskip

If the assignment model is too flexible, the estimated factor score $\hat{\vect Z}_{i,nm}=f_{\hat\alpha,\hat\theta}(\vect X_i;\vect X_{1:n})$ can encode not only shared structure but also much of the cause-specific randomness in \eqref{eq:kallenberg_con}. A strong sufficient condition for loss of latent overlap is the existence of a measurable map $g$ such that
\begin{equation}
\label{eq:suff-stat}
  \vect X_i = g(\hat{\vect Z}_{i,nm})\qquad\text{a.s.}
\end{equation}
In that case, conditioning on $\hat{\vect Z}_{i,nm}$ leaves essentially no residual
variation in $\vect X_i$: the conditional distribution
$\vect X_i\mid \hat{\vect Z}_{i,nm}$ is concentrated at a single point, so its
support collapses and latent-level positivity cannot hold on any nontrivial treatment region. A useful quantitative diagnostic for this pathology is conditional variance collapse:
\[
  \mathbb E\!\left[\|\vect X_i-\mathbb E(\vect X_i\mid \hat{\vect Z}_{i,nm})\|^2\right].
  %\approx 0.
\]
When this quantity is close to zero, $\vect X_i$ is nearly determined by
$\hat{\vect Z}_{i,nm}$, indicating that the estimated score is acting almost as a
sufficient statistic for the causes rather than as a coarse substitute
confounder.
\medskip

These are two distinct failure modes: the first is \emph{over-encoding}, where $\hat{\vect Z}_{i,nm}$ retains so much information about $\vect X_i$ that
conditioning on $\hat{\vect Z}_{i,nm}$ nearly pins down the causes; the second is \emph{under-sharing}, where latent components explain only one or a few causes and therefore fail to capture genuine multi-cause confounding structure. The
second pathology is closely related to the critique in \citet{damour2019multi}: once latent coordinates behave like single-cause or
near single-cause summaries, the multi-cause logic underlying Lemma~\ref{lem:kallenberg-implies-unconf} no longer applies cleanly. These considerations motivate \emph{regularizing} the assignment model so that the fitted
latent representation concentrates on \emph{low-rank} and \emph{multi-cause} dependence patterns,
preserving non-degeneracy of $\vect X_i\mid \hat{\vect Z}_{i,nm}$ and discouraging single-cause artifacts.

\subsection{Gaussian factor assignment model}
\label{sec:gaussian-factor-assignment}

To avoid both over-encoding and under-sharing, we estimate $\vect Z_i$ from the
high-dimensional cause vector $\vect X_i$ using \emph{structured regularisation} that promotes (i) low effective rank and (ii) factor loadings shared across multiple causes. In this section we treat the resulting score map $\hat{\vect Z}_{i,nm}=f_{\hat\alpha,\hat\theta}(\vect X_i;\vect X_{1:n})$ abstractly, and summarize its statistical accuracy through mean square contraction rates; these rates are
the key quantities controlling the substitution error incurred when $\vect Z_i$ is replaced by $\hat{\vect Z}_{i,nm}$ in regression adjustment.

We now introduce a concrete assignment model that makes the Kallenberg-type representation in Section~\ref{subsec:deconfounder} explicit. We focus on the
Gaussian factor model as a convenient baseline: it yields closed-form conditional distributions and makes transparent
how conditional independence and residual variability arise in
$\vect X_i\mid \vect Z_i$. The subsequent regularisation conditions are then imposed to encode the multi-cause requirement and to rule out single-cause latent coordinates.

Let $\vect X_i = (X_{i1},\dots,X_{im})^\top \in \mathbb R^m$ denote the $m$ causes for unit $i$. For fixed integer $H \le m$, a standard Gaussian factor model assumes that $\vect X_i$ is generated from $H$ latent factors
$\vect Z_i \in \mathbb R^H$ via
\begin{equation}\label{eq:sparse-FA}
  \vect X_i
  =
  \Lambda \vect Z_i + \boldsymbol\varepsilon_i,
  \qquad
  \vect Z_i \stackrel{\text{i.i.d.}}{\sim} \mathcal N_H(\mathbf 0, \mathbf I_H),
  \qquad
  \boldsymbol\varepsilon_i \stackrel{\text{i.i.d.}}{\sim} \mathcal N_m(\mathbf 0, \Psi),
\end{equation}

where $\Psi = \operatorname{diag}(\sigma_1^2,\dots,\sigma_m^2) \in \mathbb R^{m\times m}$. This implies that, conditional on $\vect Z_i$, the coordinates have continuous
Gaussian marginals and factorize as:
\[
  X_{ij}\mid \vect Z_i \sim \mathcal N\!\bigl((\Lambda \vect Z_i)_j,\sigma_j^2\bigr),
  \qquad
  p_\theta(\vect X_i\mid \vect Z_i)=\prod_{j=1}^m p_{\theta_j}(X_{ij}\mid \vect Z_i).
\]
Because $\Psi=\diag(\sigma_1^2,\ldots,\sigma_m^2)$ in
\eqref{eq:sparse-FA}, the idiosyncratic errors are independent across
$j$, so the coordinates $\{X_{ij}\}_{j=1}^m$ are independent given
$\vect Z_i$.  Equivalently, taking mutually independent
$U_{i1},\ldots,U_{im}\sim\mathrm U(0,1)$, independent of $\vect Z_i$,
one may write
\[
  X_{ij}=f_j(\vect Z_i,U_{ij})
  :=(\Lambda\vect Z_i)_j+\sigma_j\Phi^{-1}(U_{ij}),
  \qquad j=1,\ldots,m.
\]
This is the Kallenberg representation associated with the Gaussian factor
assignment model.  The further condition
\begin{align}
\label{eq:exogeneity-U}
  (U_{i1},\ldots,U_{im})\indep
  \{Y_i^{(\vect x)}:\vect x\in\mathcal X\}\mid \vect Z_i
\end{align}
is a causal exogeneity assumption on the full data law; it is not implied by the Gaussian factor model for $\vect X$ alone.  Under
\eqref{eq:exogeneity-U}, Lemma~\ref{lem:kallenberg-implies-unconf} gives
weak unconfoundedness conditional on $\vect Z_i$.

To encode the desired multi-cause structure in the columns of $\Lambda$, we make the following estimation-side assumption. Assumption~\ref{assump:SC} is not needed for the Kallenberg construction itself:
%. It is an estimation-side restriction: 
it solely discourages estimating latent coordinates that primarily describe just a single cause. Since factor models can be rotated without changing the likelihood, the condition is imposed after the orientation convention used for factor identification, such as an ordered-shrinkage ordering, sign convention, lower-triangular constraint, or other prespecified orientation.

\begin{assumption}[Oriented effective sparsity and shared loading]
\label{assump:SC}
Let \(\delta_m>0\) be a deterministic threshold, possibly decreasing with \(m\),
and let \(\Lambda\) be written in the orientation used by the factor model.  This
orientation is part of the model specification; without it, column-wise sparsity
is not meaningful because \(\Lambda\) and \(\Lambda Q\), for an orthogonal matrix
\(Q\), give the same covariance \(\Lambda\Lambda^\top\).

For each factor \(h=1,\dots,H\), define the \emph{effective support}
\begin{align}
  \mathcal S_h(\delta_m)
  :=
  \{j\in\{1,\dots,m\}: |\lambda_{jh}|>\delta_m\},
  \qquad
  s_h(\delta_m):=|\mathcal S_h(\delta_m)|.
\end{align}
A factor is effectively inactive if \(s_h(\delta_m)=0\).  If it is active, it
must satisfy two requirements.  First, it must load non-negligibly on at least
two causes:
\begin{align}
  s_h(\delta_m)\ge 2 .
\end{align}
Second, no single cause may dominate the factor.  There exists a constant
\(\kappa_\Lambda\in(0,1)\), not depending on \(m\), such that every active factor
satisfies
\begin{align}
  \max_{1\le j\le m}
  \frac{\lambda_{jh}^2}{\sum_{k=1}^m\lambda_{kh}^2}
  \le 1-\kappa_\Lambda .
  \label{eq:loading-leverage}
\end{align}

\end{assumption}
The first requirement rules out exactly one effectively nonzero loading.  The second rules out an almost single-cause factor, for example one very large loading and one loading just above \(\delta_m\).  Ordered-shrinkage priors may still switch off unnecessary factors by shrinking all their loadings below \(\delta_m\). We note that continuous shrinkage priors such as the Normal-Gamma, multiplicative gamma process (MGP), and cumulative shrinkage process (CSP) priors do not set loadings exactly equal to zero with positive posterior probability. Thus, the condition is stated in terms of an effective threshold \(\delta_m\) and
an oriented loading matrix.  The relevant posterior concentration claim is not that \(\lambda_{jh}=0\) exactly, but that posterior mass concentrates on an orientation in which active factors have shared, non-dominated loadings and the remaining entries are negligible at the scale \(\delta_m\).

The preceding display shows that the Gaussian factor assignment model \eqref{eq:sparse-FA} has the required Kallenberg representation: the conditional product form follows from the diagonal noise covariance, and the uniform variables can be mapped to Gaussian errors through the inverse normal CDF. If the exogeneity condition \eqref{eq:exogeneity-U} is also imposed,
Lemma~\ref{lem:kallenberg-implies-unconf} yields
$\vect X_i\indep\{Y_i^{(\vect x)}:\vect x\in\mathcal X\}\mid\vect Z_i$. The same representation extends beyond Gaussian idiosyncratic errors.  Indeed, if
$\varepsilon_{i1},\dots,\varepsilon_{im}$ are independent conditional on $\vect Z_i$
(and hence $X_{i1},\dots,X_{im}$ are independent conditional on $\vect Z_i$),
and each marginal error distribution admits a strictly increasing CDF $F_{\varepsilon_j}$,
then with independent
$U_{i1},\dots,U_{im}\sim \mathrm U(0,1)$, independent of $\vect Z_i$, one may set
\[
  \varepsilon_{ij}=F_{\varepsilon_j}^{-1}(U_{ij}),
  \qquad
  X_{ij}=(\Lambda\vect Z_i)_j+F_{\varepsilon_j}^{-1}(U_{ij})
  =: f_j(\vect Z_i,U_{ij}),
\]
which is a Kallenberg construction.

\subsection{Projected outcome restrictions and identifiable projected targets}
\label{subsec:outcome-identifiability}

The preceding sections focus on learning a substitute confounder from the joint distribution of the causes.  This learning step is useful, but it is not by itself a proof of causality.  A model for \(\vect X\) alone cannot show that an unobserved variable also removes all confounding between \(\vect X\) and the potential outcomes.  We therefore separate two questions:  what target is being identified? Under what assumptions is that target a genuine mean potential outcome? This subsection defines a \emph{projected target}.  The target uses only the
part of the treatment vector that lies outside the latent confounding subspace. It is a deliberately restricted target: it does not try to estimate causal variation along directions that are strongly entangled with the latent
confounder.  Under an additional outcome restriction stated below, the projected target equals the usual mean potential outcome \(\mu(\vect x)=\E\{Y^{(\vect x)}\}\). Without that restriction, it should be read as a working, geometry-restricted target rather than as the full intervention mean.

Under the Gaussian factor assignment model \eqref{eq:sparse-FA}, dependence
among causes is carried by the low-dimensional component
\(\Lambda \vect Z_i\in\mathbb R^m\), while the idiosyncratic noise
\(\varepsilon_i\in\mathbb R^m\) is independent across causes.  This gives a
simple decomposition of the treatment space.  The \emph{confounding subspace}
\(\mathcal C:=\operatorname{span}(\Lambda)\) contains directions in which the
conditional mean of \(\vect X\) changes with \(\vect Z\).  Its orthogonal
complement \(\mathcal C^\perp\) contains directions in which \(\vect X\) varies
only through idiosyncratic noise.  Let \(P_{\mathcal C}\) and
\(P_{\mathcal C}^\perp\) be the corresponding orthogonal projections.  Since
\(P_{\mathcal C}^\perp\Lambda=0\),
\begin{equation}
\label{eq:proj-factor-disappears}
  P_{\mathcal C}^\perp \vect X_i
  =
  P_{\mathcal C}^\perp(\Lambda \vect Z_i+\varepsilon_i)
  =
  P_{\mathcal C}^\perp\varepsilon_i,
\end{equation}
so \(\E[P_{\mathcal C}^\perp \vect X_i\mid \vect Z_i]=0\).  In words, after
projecting away the confounding subspace, the remaining treatment variation has
mean zero within levels of the latent confounder.

\begin{assumption}[Projected partially linear outcome regression]
\label{ass:proj-outcome}
There exist a measurable function \(g:\mathbb R^H\to\mathbb R\) and a vector
\(\beta\in\mathbb R^m\) such that, with \(U_i:=P_{\mathcal C}^\perp\vect X_i\),
\begin{equation}
\label{eq:proj-plm}
  \E\!\left[Y_i\mid \vect X_i,\vect Z_i\right]
  = g(\vect Z_i)+\beta^\top U_i .
\end{equation}
Moreover, the potential outcome mean depends on an intervention \(\vect x\) only
through its projected component:
\begin{equation}
\label{eq:proj-potential}
  \E\!\left[Y_i^{(\vect x)}\mid \vect Z_i\right]
  = g(\vect Z_i)+\beta^\top P_{\mathcal C}^\perp\vect x,
  \qquad \vect x\in\mathbb R^m .
\end{equation}
\end{assumption}

Let $P_0$ denote the data-generating law; convergence in $P_0$-probability is
written as $\xrightarrow{P_0}$.

\begin{assumption}[Known or consistently estimated projection]
\label{ass:projection-known}
The projection \(P_{\mathcal C}^\perp\) is either known from design or scientific
structure, or there is an estimator \(\widehat P_{\mathcal C}^\perp\), based only
on the assignment model for \(\vect X\), such that
\begin{align}
  \|\widehat P_{\mathcal C}^\perp-P_{\mathcal C}^\perp\|_{\mathrm{op}}
  \xrightarrow{P_0}0 .
  \label{eq:projection-consistency}
\end{align}
Here \(\|\cdot\|_{\mathrm{op}}\) is the operator norm, the largest amount by which
a matrix can stretch a unit vector.  If this condition is not imposed, the
projected identification result below should be read as conditional on the
chosen projection, not as a claim that the projection has been identified from
\(\vect X\).
\end{assumption}

Only the projected coefficient \(\beta_\perp:=P_{\mathcal C}^\perp\beta\) is identified.  Components of \(\beta\) inside \(\mathcal C\) do not enter \eqref{eq:proj-plm} or \eqref{eq:proj-potential}.  This is a mathematical necessity: \(U=P_{\mathcal C}^\perp\vect X\) lies in the lower-dimensional subspace \(\mathcal C^\perp\), so \(\Var(U)\) is singular as an \(m\times m\) matrix unless \(\mathcal C=\{0\}\).  We therefore use coordinates inside \(\mathcal C^\perp\), or equivalently a Moore-Penrose inverse.

\begin{proposition}[Identification of the projected target]
\label{prop:proj-identification}
Let \(\vect X\in\mathbb R^m\) denote the cause vector and let \(\vect Z\) be a
latent confounder.  Treat \(\mathcal C\subset\mathbb R^m\) as a fixed linear
subspace with known projection \(P_{\mathcal C}^\perp\).  Define
\(U:=P_{\mathcal C}^\perp\vect X\) and assume \(\E[U\mid \vect Z]=0\).  Let
\(B_{\mathcal C}\in\mathbb R^{m\times r}\), \(r=\dim(\mathcal C^\perp)\), have
orthonormal columns spanning \(\mathcal C^\perp\), and set
\(W:=B_{\mathcal C}^\top\vect X\).  Suppose that \(\Var(W)\) is nonsingular and
that Assumption~\ref{ass:proj-outcome} holds.

Then the identifiable projected coefficient is
\begin{equation}
\label{eq:gamma-proj-id}
  \gamma
  :=B_{\mathcal C}^\top\beta_\perp
  =\Var(W)^{-1}\Cov(W,Y),
  \qquad
  \beta_\perp:=B_{\mathcal C}\gamma
  =\Var(U)^+\Cov(U,Y),
\end{equation}
where \(\Var(U)^+\) denotes the Moore-Penrose inverse.  Thus \(\beta_\perp\) is
identified from the observed law of \((Y,\vect X)\) together with the fixed
projection \(P_{\mathcal C}^\perp\).

Define the projected working target
\begin{equation}
\label{eq:mu-perp-id}
  \mu_\perp(\vect x)
  :=
  \E\!\left[Y-\beta_\perp^\top U\right]
  +
  \beta_\perp^\top P_{\mathcal C}^\perp\vect x .
\end{equation}
This target is identified.  It is not, in general, the full mean potential
outcome.  Under Assumption~\ref{ass:proj-outcome}, which rules out causal
variation along \(\mathcal C\), \(\mu_\perp(\vect x)=\mu(\vect x)\).  Without that
assumption, \(\mu_\perp\) should be interpreted as a geometry-restricted target
that deliberately avoids the confounded directions.
\end{proposition}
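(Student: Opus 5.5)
The plan is to treat the statement as a population least-squares identity inside the coordinates of \(\mathcal C^\perp\), and then verify the target equality by iterated expectations.

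\emph{Step 1: coordinates.} Since \(B_{\mathcal C}\) has orthonormal columns spanning \(\mathcal C^\perp\), we have \(P_{\mathcal C}^\perp=B_{\mathcal C}B_{\mathcal C}^\top\). Hence \(U=B_{\mathcal C}W\) and
\[
\beta^\top U=(B_{\mathcal C}^\top\beta)^\top W=\gamma^\top W,
\qquad
\beta_\perp=P_{\mathcal C}^\perp\beta=B_{\mathcal C}\gamma .
\]

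\emph{Step 2: identify \(\gamma\).} By \eqref{eq:proj-plm}, \(Y=g(\vect Z)+\gamma^\top W+e\) with \(\E[e\mid\vect X,\vect Z]=0\). Then
\[
\Cov(W,Y)=\Cov(W,g(\vect Z))+\Var(W)\gamma+\Cov(W,e).
\]
The last term vanishes because \(W\) is a function of \(\vect X\). For the first term, note \(\E[W\mid\vect Z]=B_{\mathcal C}^\top\E[U\mid\vect Z]=0\). Iterated expectations then give \(\E[W]=0\) and \(\E[Wg(\vect Z)]=\E[\E[W\mid\vect Z]g(\vect Z)]=0\), assuming \(g(\vect Z)\) and \(W\) are square integrable, which I would add as an implicit moment condition. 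Invertibility of \(\Var(W)\) yields \(\gamma=\Var(W)^{-1}\Cov(W,Y)\).

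For the Moore--Penrose form, write \(\Var(U)=B_{\mathcal C}\Var(W)B_{\mathcal C}^\top\). Its pseudoinverse is \(B_{\mathcal C}\Var(W)^{-1}B_{\mathcal C}^\top\), which can be checked via the four Penrose conditions using \(B_{\mathcal C}^\top B_{\mathcal C}=I_r\). Together with \(\Cov(U,Y)=B_{\mathcal C}\Cov(W,Y)\), this gives \(\Var(U)^+\Cov(U,Y)=B_{\mathcal C}\gamma=\beta_\perp\). The right-hand sides involve only the law of \((\vect X,Y)\) and the fixed projection, so \(\beta_\perp\) and hence \(\mu_\perp\) are identified.

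\emph{Step 3: \(\mu_\perp=\mu\) under Assumption~\ref{ass:proj-outcome}.} Since \(\beta^\top U=\beta_\perp^\top U\), iterated expectations give
\[
\E[Y-\beta_\perp^\top U]=\E[g(\vect Z)].
\]
By \eqref{eq:proj-potential} and the tower property,
\[
\mu(\vect x)=\E[g(\vect Z)]+\beta^\top P_{\mathcal C}^\perp\vect x=\E[g(\vect Z)]+\beta_\perp^\top P_{\mathcal C}^\perp\vect x .
\]
This uses that \(P_{\mathcal C}^\perp\) is symmetric and idempotent, so \(\beta^\top P_{\mathcal C}^\perp\vect x=\beta_\perp^\top P_{\mathcal C}^\perp\vect x\). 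The two expressions coincide, so \(\mu_\perp(\vect x)=\mu(\vect x)\).

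The only delicate points are the orthogonality \(\Cov(W,g(\vect Z))=0\), which rests entirely on \(\E[U\mid\vect Z]=0\) (a conditional mean-zero condition, not mere uncorrelatedness), and the pseudoinverse verification; both are short.
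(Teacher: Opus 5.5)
Your proposal is correct and follows the paper's proof essentially step for step. You pass to the coordinates $W=B_{\mathcal C}^\top\vect X$, remove $\Cov(W,g(\vect Z))$ using $\E[W\mid\vect Z]=0$, invert $\Var(W)$, write $\Var(U)^+=B_{\mathcal C}\Var(W)^{-1}B_{\mathcal C}^\top$, and then match $\E[Y-\beta_\perp^\top U]=\E[g(\vect Z)]$ with the tower-property expression for $\mu(\vect x)$; your explicit square-integrability condition and the Penrose-condition check are harmless additions that the paper leaves implicit.
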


The proof of Proposition~\ref{prop:proj-identification} is a moment calculation
on the coordinates \(W=B_{\mathcal C}^\top\vect X\) and is included in
Appendix~\ref{sec:appendix-proof-proj-id}.  If the projection is estimated, the
same formula can be used as a plug-in estimator; consistency then requires
Assumption~\ref{ass:projection-known} together with the usual continuity and
eigenvalue-separation conditions ensuring that \(\Var(W)\) remains nonsingular in
the limit.

Under the Gaussian factor assignment model \eqref{eq:sparse-FA} with
\(\mathcal C=\mathrm{span}(\Lambda)\) and \(\E[\varepsilon_i\mid\vect Z_i]=0\),
\eqref{eq:proj-factor-disappears} gives
\(U_i=P_{\mathcal C}^\perp\vect X_i=P_{\mathcal C}^\perp\varepsilon_i\), hence
\(\E[U_i\mid\vect Z_i]=0\).  Therefore Proposition~\ref{prop:proj-identification}
applies once the projection condition in Assumption~\ref{ass:projection-known}
is accepted.  Section~\ref{sec:bayesian-assignment} returns to this geometry in a Bayesian setting, where uncertainty about \(\mathcal C\) is propagated or used for regularization rather than treated as a causal fact learned from \(\vect X\) alone. Appendix~\ref{sec:appendix-outcome-identifiability} also records a soft regularized target that permits coefficients along \(\mathcal C\) but shrinks them.

\subsection{Asymptotic properties under a consistent substitute}
\label{subsec:asymptotics-sparse-deconf}

We study the consistency of regression adjustment using factor score summaries
$\hat{\vect Z}_{i,nm}$ in place of the latent confounder $\vect Z_i$. Recall
\[
  \mu(\vect x)=\E\!\left[Y_i^{(\vect x)}\right],
  \qquad
  \Delta(\vect x,\vect x')=\mu(\vect x')-\mu(\vect x),
\]
and define the outcome regression
\[
  r(\vect x,\vect z)
  := \E\!\left[ Y_i \mid \vect X_i=\vect x,\ \vect Z_i=\vect z \right].
\]

We continue to write $\Prob_0$ and $\E_0$ for probability and expectation under the data-generating law $P_0$. The projected target \(\mu_\perp(\vect x)\) in Proposition~\ref{prop:proj-identification} is a geometry-restricted estimand. It is useful when one deliberately avoids treatment directions in the confounding subspace, and it equals the full mean potential outcome \(\mu(\vect x)\) only under the outcome restriction in Assumption~\ref{ass:proj-outcome}.  In contrast, this subsection studies consistency for the full target \(\mu(\vect x)\).  We therefore impose latent-level ignorability and positivity directly; these assumptions identify \(\mu(\vect x)=\E\{r(\vect x,\vect Z_i)\}\), while the remaining conditions below control the error from replacing \(\vect Z_i\) by \(\hat{\vect Z}_{i,nm}\).

\begin{assumption}[Latent ignorability and positivity]
\label{ass:latent-ign-positivity}
There exists a latent variable $\vect Z_i$ such that:
\begin{enumerate}
  \item (Weak unconfoundedness) for all $\vect x\in\mathcal X$,
  \[
    Y_i^{(\vect x)} \;\indep\; \vect X_i \mid \vect Z_i.
  \]
  \item (Latent-level positivity) for all $(\vect x,\vect z)$ in the support of interest,
  \[
    p_{\vect X\mid \vect Z}(\vect x\mid \vect z)>0
    \quad\text{(or, for discrete $\vect X$, }\Prob(\vect X_i=\vect x\mid \vect Z_i=\vect z)>0\text{)}.
  \]
\end{enumerate}
\end{assumption}

Given an estimator $\hat r_n(\vect x,\vect z)$ of $r(\vect x,\vect z)$ and factor score summaries $\hat{\vect Z}_{i,nm}$, define
\begin{align}
  \hat\mu_n(\vect x)
  &:= \frac{1}{n} \sum_{i=1}^n \hat r_n\!\bigl(\vect x,\hat{\vect Z}_{i,nm}\bigr),
  \label{eq:mu-hat-ATE}
  \\
  \hat\Delta_n(\vect x,\vect x')
  &:= \hat\mu_n(\vect x')-\hat\mu_n(\vect x).
  \label{eq:Delta-hat-ATE}
\end{align}

\begin{thm}[Consistency of regression-adjusted mean potential outcomes]
\label{thm:mu-consistency-ATE}
Fix $\vect x\in\mathcal X$. Suppose Assumption~\ref{ass:latent-ign-positivity} holds and:
\begin{enumerate}[label=(C\arabic*), leftmargin=1.4em, itemsep=0.25em, topsep=0.25em]
\item (\emph{Lipschitz in $\vect z$}) There exists a set
$\mathcal Z_{n,m}(\vect x)\subseteq\mathbb R^H$ containing the true latent scores
$\vect Z_i$ and the fitted scores $\hat{\vect Z}_{i,nm}$, $i=1,\ldots,n$, almost
surely along the joint sequence, and a constant $L(\vect x)<\infty$ such that
\[
|r(\vect x,\vect z)-r(\vect x,\vect z')|
\le L(\vect x)\,\|\vect z-\vect z'\|
\qquad\forall\,\vect z,\vect z'\in\mathcal Z_{n,m}(\vect x).
\]
If containment in $\mathcal Z_{n,m}(\vect x)$ holds only with probability tending to one, the squared contribution from its complement is assumed to be negligible.
\item (\emph{Average score accuracy}) The fitted scores satisfy
\[
  \E_0\left[\frac1n\sum_{i=1}^n
  \|\hat{\vect Z}_{i,nm}-\vect Z_i\|^2\right]
  \le \varepsilon_{n,m}^2,
  \qquad \varepsilon_{n,m}\to0,
\]
along the joint sequence $(n,m)\to\infty$.  This expectation includes the
randomness of the first-stage fit, so the fitted scores may be dependent across
units through the common assignment model.
\item (\emph{Outcome regression calibration at the fitted scores}) The outcome
regression estimator is accurate at the score values used by the estimator:
\[
  \E_0\left[\frac1n\sum_{i=1}^n
  \bigl|\hat r_n(\vect x,\hat{\vect Z}_{i,nm})
  -r(\vect x,\hat{\vect Z}_{i,nm})\bigr|\right]
  \to0.
\]
The treatment level $\vect x$ is fixed; the randomness comes from the sample,
the fitted outcome regression, and the fitted scores.
\item (\emph{Oracle empirical average along the joint sequence}) The ideal latent
regression satisfies
\[
  \E_0\left|\frac1n\sum_{i=1}^n r(\vect x,\vect Z_i)
  -\E_0\{r(\vect x,\vect Z_i)\}\right|\to0.
\]
For independent units, a sufficient condition is
$\Var_0\{r(\vect x,\vect Z_i)\}/n\to0$; in particular this holds if $\sup_m\E_0\{r(\vect x,\vect Z_i)^2\}<\infty$ along the joint sequence.
\end{enumerate}
Then, as $(n,m)\to\infty$ with $\varepsilon_{n,m}\to0$,
\[
  \hat\mu_n(\vect x)\xrightarrow{\Prob_0}\mu(\vect x).
\]
Moreover,
\begin{equation}
\label{eq:mu-bias-ATE}
  \bigl|\E_0[\hat\mu_n(\vect x)]-\mu(\vect x)\bigr|
  \le L(\vect x)\,\varepsilon_{n,m} + o(1),
\end{equation}
where $o(1)\to 0$ along the same joint sequence $(n,m)\to\infty$.
\end{thm}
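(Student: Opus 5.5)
The plan is to first identify the target and then decompose the estimation error into three pieces, each controlled by one of (C2)--(C4).

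\emph{Identification.} Under Assumption~\ref{ass:latent-ign-positivity}, weak unconfoundedness and consistency give, for $\Prob_{\vect Z}$-almost every $\vect z$ in the support of interest,
\[
\E[Y_i^{(\vect x)}\mid \vect Z_i=\vect z]=\E[Y_i^{(\vect x)}\mid \vect X_i=\vect x,\vect Z_i=\vect z]=\E[Y_i\mid \vect X_i=\vect x,\vect Z_i=\vect z]=r(\vect x,\vect z).
\]
Positivity is what makes the conditioning event $\{\vect X_i=\vect x\}$ well defined, so that $r(\vect x,\cdot)$ is pinned down by the observed-data law. Iterating expectations then yields $\mu(\vect x)=\E_0\{r(\vect x,\vect Z_i)\}$.

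\emph{Decomposition.} Write
\[
\hat\mu_n(\vect x)-\mu(\vect x)=T_1+T_2+T_3,
\]
where
\[
T_1=\frac1n\sum_{i=1}^n\{\hat r_n(\vect x,\hat{\vect Z}_{i,nm})-r(\vect x,\hat{\vect Z}_{i,nm})\},\qquad
T_2=\frac1n\sum_{i=1}^n\{r(\vect x,\hat{\vect Z}_{i,nm})-r(\vect x,\vect Z_i)\},
\]
and
\[
T_3=\frac1n\sum_{i=1}^n r(\vect x,\vect Z_i)-\E_0\{r(\vect x,\vect Z_i)\}.
\]
The three terms are controlled as follows.
\begin{itemize}
\item By the triangle inequality and (C3), $\E_0|T_1|\to0$.
\item By (C4), $\E_0|T_3|\to0$.
\item For $T_2$, apply the Lipschitz bound (C1) unit by unit, as in Proposition~\ref{prop:plugin-stability}. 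Then use Cauchy--Schwarz on the empirical average together with Jensen's inequality on $\E_0$:
\[
\E_0|T_2|\le L(\vect x)\,\E_0\Bigl[\frac1n\sum_i\|\hat{\vect Z}_{i,nm}-\vect Z_i\|\Bigr]\le L(\vect x)\Bigl(\E_0\Bigl[\frac1n\sum_i\|\hat{\vect Z}_{i,nm}-\vect Z_i\|^2\Bigr]\Bigr)^{1/2}\le L(\vect x)\varepsilon_{n,m}.
\]
\end{itemize}
This argument only uses the averaged condition (C2), so dependence of the fitted scores across units through the common first-stage fit causes no difficulty.

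\emph{Conclusion.} Combining the three bounds gives $\E_0|\hat\mu_n(\vect x)-\mu(\vect x)|\le L(\vect x)\varepsilon_{n,m}+o(1)\to0$. Convergence in $L^1(P_0)$ implies convergence in probability by Markov's inequality. For the bias bound \eqref{eq:mu-bias-ATE}, note that $|\E_0[\hat\mu_n(\vect x)]-\mu(\vect x)|\le \E_0|T_2|+\E_0|T_1|+\E_0|T_3|$, and the last two terms are $o(1)$. The same decomposition applied at $\vect x'$ gives the corresponding result for $\hat\Delta_n$.

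\emph{Main obstacle: the complement of the containment set.} The one delicate step is the case where containment in $\mathcal Z_{n,m}(\vect x)$ holds only with probability tending to one. On the containment event the argument above goes through. Off it, I would bound
\[
\E_0\Bigl[\frac1n\sum_i|r(\vect x,\hat{\vect Z}_{i,nm})-r(\vect x,\vect Z_i)|\,\mathbbm 1\{\text{not contained}\}\Bigr]
\]
by the square root of the assumed-negligible squared contribution, using Cauchy--Schwarz. This makes the term $o(1)$ and absorbs it into the remainder in \eqref{eq:mu-bias-ATE}.

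A second point to check is that the identification step needs $r(\vect x,\cdot)$ to be defined on the support of $\vect Z_i$. This is exactly what latent positivity supplies, so I would state explicitly that the support of interest contains $\{\vect x\}\times\mathrm{supp}(\vect Z_i)$.
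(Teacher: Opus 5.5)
Your proposal is correct and follows essentially the same route as the paper. It uses the same identification step $\mu(\vect x)=\E_0\{r(\vect x,\vect Z_i)\}$ and the same three-way split: your $T_1,T_2,T_3$ are the paper's (I), (IIa), (IIb), controlled respectively by (C3), by the Lipschitz bound with Cauchy--Schwarz/Jensen and (C2), and by (C4). Markov's inequality then gives consistency and the triangle inequality gives the bias bound, with your explicit Cauchy--Schwarz treatment of the event where containment in $\mathcal Z_{n,m}(\vect x)$ fails being a small step the paper leaves implicit.
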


Full proof of Theorem~\ref{thm:mu-consistency-ATE} is in Appendix~\ref{sec:proof-mu-consistency}. The bound \eqref{eq:mu-bias-ATE} shows that the substitution bias induced by using
estimated scores $\hat{\vect Z}_{i,nm}$ in place of the ideal latent confounder $\vect Z_i$
is of order $\varepsilon_{n,m}$. In particular, if $\varepsilon_{n,m}=O(m^{-1/2})$ along a particular joint sequence, then the substitution component of the bias decays at that rate.

\begin{cor}[Consistency and bias of the ATE estimator]
\label{cor:ATE-consistency-ATE}
Let $\vect x,\vect x'\in\mathcal X$ and define
$\Delta(\vect x,\vect x'):=\mu(\vect x')-\mu(\vect x)$ and
$\hat\Delta_n(\vect x,\vect x'):=\hat\mu_n(\vect x')-\hat\mu_n(\vect x)$ as in
\eqref{eq:Delta-hat-ATE}. Suppose that the conditions of
Theorem~\ref{thm:mu-consistency-ATE} hold for both treatment regimes $\vect x$
and $\vect x'$. Then
\[
  \hat\Delta_n(\vect x,\vect x') \xrightarrow{\Prob_0} \Delta(\vect x,\vect x')
  \qquad (n\to\infty,\; m\to\infty,\; \varepsilon_{n,m}\to0),
\]
and the bias satisfies
\begin{equation}
\label{eq:ATE-bias-main}
  \bigl|\E_0[\hat\Delta_n(\vect x,\vect x')] - \Delta(\vect x,\vect x')\bigr|
  \le \bigl(L(\vect x)+L(\vect x')\bigr)\,\varepsilon_{n,m} + o(1),
\end{equation}
where $L(\vect x)$ and $L(\vect x')$ are the Lipschitz constants appearing in
Theorem~\ref{thm:mu-consistency-ATE}.
\end{cor}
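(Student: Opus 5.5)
The corollary follows from Theorem~\ref{thm:mu-consistency-ATE}, applied separately at $\vect x$ and at $\vect x'$, combined through linearity and the triangle inequality. Since the conditions (C1)--(C4) hold at both regimes, the theorem gives $\hat\mu_n(\vect x)\xrightarrow{\Prob_0}\mu(\vect x)$ and $\hat\mu_n(\vect x')\xrightarrow{\Prob_0}\mu(\vect x')$ along the same joint sequence $(n,m)\to\infty$. Two sequences converging in probability on a common probability space have a difference that converges in probability to the difference of the limits. This follows from the continuous mapping theorem for the map $(a,b)\mapsto b-a$. More elementarily, for any $\epsilon>0$ we have $\Prob_0(|\hat\Delta_n-\Delta|>\epsilon)\le \Prob_0(|\hat\mu_n(\vect x')-\mu(\vect x')|>\epsilon/2)+\Prob_0(|\hat\mu_n(\vect x)-\mu(\vect x)|>\epsilon/2)\to0$. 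This establishes the consistency claim.

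For the bias bound, linearity of expectation gives $\E_0[\hat\Delta_n(\vect x,\vect x')]-\Delta(\vect x,\vect x')=\{\E_0[\hat\mu_n(\vect x')]-\mu(\vect x')\}-\{\E_0[\hat\mu_n(\vect x)]-\mu(\vect x)\}$. We take absolute values, apply the triangle inequality, and use \eqref{eq:mu-bias-ATE} at each regime. This yields the bound $L(\vect x')\varepsilon_{n,m}+L(\vect x)\varepsilon_{n,m}+o(1)+o(1)$, and the sum of two $o(1)$ terms along the same sequence is again $o(1)$. One point needs care here: $\varepsilon_{n,m}$ must be the same rate at both regimes. In (C2) it bounds the average score accuracy, which does not depend on $\vect x$, so one $\varepsilon_{n,m}$ serves both. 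If separate rates had been supplied, we would replace them with their maximum.

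The main step that needs checking is that the expectations $\E_0[\hat\mu_n(\vect x)]$ and $\E_0[\hat\mu_n(\vect x')]$ are finite, so that linearity of expectation is legitimate. This is implicit in \eqref{eq:mu-bias-ATE}: the theorem's bound already asserts that each expectation lies within a finite distance of $\mu(\cdot)$. Once that is granted, the rest is routine.
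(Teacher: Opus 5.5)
Your proposal is correct and takes the same approach as the paper, which only says that the corollary follows immediately from Theorem~\ref{thm:mu-consistency-ATE} applied at $\vect x$ and $\vect x'$ and omits the proof. Your union bound for consistency, the triangle inequality for the bias, and your remark that the rate $\varepsilon_{n,m}$ in (C2) does not depend on the treatment regime supply exactly the omitted details.
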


The proof of Corollary~\ref{cor:ATE-consistency-ATE} follows immediately from  Theorem~\ref{thm:mu-consistency-ATE} applied to $\vect x$ and $\vect x'$ and is omitted for brevity.

%\paragraph{When calibration is plausible.}
Condition (C3) is intentionally stated at the score values used by the estimator, rather than uniformly over all latent values.  In a correctly specified parametric linear outcome model with bounded second moments, it follows from the consistency of the regression coefficients and an average moment bound on $(\vect X_i,\hat{\vect Z}_{i,nm})$.  More flexible nonparametric outcome regressions require corresponding prediction-risk conditions on the empirical
distribution of the fitted scores; the theorem does not assume such calibration follows automatically from the assignment model.

Theorem~\ref{thm:mu-consistency-ATE} isolates the key mechanism underlying all subsequent results: causal estimation is stable under substitution $\vect Z_i \rightsquigarrow \hat{\vect Z}_{i,nm}$ provided the substitution error contracts. In Section~\ref{sec:bayesian-assignment}, we show that Bayesian assignment models implement this same mechanism while propagating uncertainty in the confounding structure. Global-local shrinkage is the mechanism that preserves overlap in $(\vect x,\hat{\vect Z}_{nm})$. By enforcing a low-dimensional multi-cause structure and suppressing single-cause variation, sparsity prevents the factor scores from becoming near-sufficient statistics for $\vect X$. Thus positivity is required only at the latent level $(\vect X,\vect Z)$, and the mean square contraction condition in Theorem~\ref{thm:mu-consistency-ATE} ensures
that $\hat{\vect Z}_{i,nm}$ is an asymptotically valid proxy for $\vect Z_i$. This avoids the overlap collapse and other latent pathologies documented in \cite{ogburn2019comment,damour2019multi}, while yielding consistent estimates of mean potential outcomes and ATEs.

\section{Bayesian assignment models}
\label{sec:bayesian-assignment}

Section~\ref{sec:sparse-assignment} demonstrated why structured regularisation is necessary in assignment models to enable the substitute deconfounder framework in high-dimensional regimes. In this section, we propose using a Bayesian framework to estimate $\vect Z_i$ with the desired low-rank and multi-cause dependence structure. This has two primary benefits. First, we use
established shrinkage priors to concentrate posterior mass on assignment mechanisms that satisfy the structural conditions for valid substitute confounders (conditional independence, overlap, and finite-dimensional latent structure). Second, the posterior distribution for $\vect Z_i$ characterizes uncertainty in the confounding geometry itself, rather than committing to a single point estimate. In particular, the posterior over assignment parameters induces a data-adaptive metric on the space of causes, encoding both dominant confounding directions and uncertainty about them.

Throughout this section, we focus on the Gaussian factor assignment model in
\eqref{eq:sparse-FA},
\[
  \vect X_i = \Lambda \vect Z_i + \boldsymbol\varepsilon_i,
\]
with $\Lambda$ and $\Psi$ as defined in Section~\ref{sec:gaussian-factor-assignment}.
Let $\theta_X := (\Lambda,\Psi,\alpha)$ collect the \emph{assignment} parameters,
where $\alpha$ governs the latent prior for $\vect Z_i$. Let $\theta_Y$ denote the
parameters of the outcome regression model, and write $\theta:=(\theta_X,\theta_Y)$
for the full parameter set.

We write $\Pi_n(\cdot)\equiv \Pi(\cdot\mid \vect X_{1:n},Y_{1:n})$ for the joint
posterior distribution of $\theta$ under the assignment--outcome model.
When discussing the assignment mechanism and the induced confounding geometry, we
also use the assignment posterior based only on the causes,
\[
  \Pi_n^{\mathrm{assn}}(\cdot)\equiv \Pi(\theta_X\in\cdot\mid \vect X_{1:n}),
\]
i.e.\ the marginal posterior for $\theta_X$ learned from $\vect X$ alone.

To prevent outcome-to-assignment feedback, we adopt the corresponding modular
(\emph{cut}) posterior \citep{bayarri2009cut}:
\[
  \Pi_n^{\mathrm{cut}}(d\theta_X,d\theta_Y)
  :=
  \Pi_n^{\mathrm{assn}}(d\theta_X)\,
  \Pi(d\theta_Y\mid \vect X_{1:n},Y_{1:n},\theta_X),
\]
which updates the assignment parameters using $\vect X_{1:n}$ only, while the
outcome stage conditions on $\theta_X$ (equivalently, on factor score summaries
$\hat{\vect Z}_i(\theta_X)$). In this formulation, the substitute confounder is
the posterior distribution of $\vect Z_i$ given $\vect X_i$ under $\theta_X$, and
causal functionals propagate uncertainty by averaging over posterior draws of
$(\theta_X,\theta_Y,\vect Z_{1:n})$ under $\Pi_n^{\mathrm{cut}}$.

We consider two complementary classes of shrinkage priors for the loadings
$\Lambda=(\lambda_{j\ell})_{m\times H}$.

\smallskip
\noindent\emph{Global-local shrinkage.}
A canonical specification is the Normal-Gamma prior
\citep{brown2010inference,carvalho2008high}:
\[
  \lambda_{j\ell}\mid \tau_\ell,\phi_{j\ell}\sim \mathcal N\!\left(0,\;(\tau_\ell\phi_{j\ell})^{-1}\right),
  \qquad
  \phi_{j\ell}\sim \mathrm{Ga}\!\left(\nu/2,\nu/2\right),
\]
with factor-specific global precisions $\{\tau_\ell\}$ (endowed with a prior)
controlling overall rank, and local precisions $\{\phi_{j\ell}\}$ suppressing idiosyncratic (near single-cause) loadings.

\smallskip
\noindent\emph{Ordered shrinkage.}
To encourage automatic rank reduction, we also use ordered constructions such as
the multiplicative gamma process (MGP) prior \citep{bhattacharya2011sparse}:
\[
  \lambda_{j\ell}\mid \tau_\ell,\phi_{j\ell}\sim \mathcal N\!\left(0,\;(\tau_\ell\phi_{j\ell})^{-1}\right),
  \qquad
  \tau_{\ell} = \prod_{h=1}^\ell \delta_h,
  \qquad
  \delta_h \sim \mathrm{Ga}(a_h,1),
\]
or the
cumulative shrinkage process (CSP) \citep{legramanti2020bayesian}:
\begin{equation*}
\begin{gathered}
    \lambda_{j\ell}\mid \theta_\ell
    \sim \mathcal N(0,\theta_\ell),
    \qquad
    \theta_\ell \mid \pi_\ell
    \sim (1-\pi_\ell)P_0 + \pi_\ell \delta_{\theta_\infty}, \\
    \pi_\ell = \sum_{h=1}^{\ell}\omega_h, 
    \qquad
    \omega_h = v_h \prod_{m=1}^{h-1}(1-v_m), 
    \qquad
    v_h \sim \mathrm{Beta}(1,\alpha).
\end{gathered}
\end{equation*}
These impose increasing shrinkage across factor indices $\ell$ so that higher-indexed factors are progressively shrunk toward zero.  

\subsection{Posterior identification set}
\label{subsec:posterior-id-set}

Let $p_{\theta_X}(\vect X)$ denote the marginal density of the causes induced by
an assignment model at parameter $\theta_X$. The conditions below describe when an assignment model has the structural form needed for substitute confounder adjustment. They are assignment-side conditions only: by themselves they do not
establish causal ignorability, because ignorability is a property of the full law
of $(\vect X,\{Y^{(\vect x)}\}_{\vect x\in\mathcal X})$, not of the marginal law
of $\vect X$.

Recall from Section~\ref{subsec:deconfounder} that only a restricted class of assignment mechanisms lead to valid substitute confounders.

\begin{defn}[Identification-compatible assignment mechanisms]
\label{def:posterior-ID}
An assignment model $p_{\theta_X}(\vect X)$ is \emph{assignment-compatible} if
there exists a latent variable $\vect Z_{\theta_X}$ such that:

\smallskip
\noindent\textit{Conditional independence.}
The causes are conditionally independent given $\vect Z_{\theta_X}$:
\begin{align}
  p_{\theta_X}(\vect X \mid \vect Z_{\theta_X})
  =
  \prod_{j=1}^m p_{\theta_X,j}(X_j \mid \vect Z_{\theta_X}).
\end{align}

\smallskip
\noindent\textit{Quantitative latent overlap.}
There is a constant \(\kappa_X>0\) and a set of treatment directions \(\mathcal D\)
that contains the directions used by the causal estimand such that, for
\(\Prob_{\vect Z_{\theta_X}}\)-almost every \(\vect z\),
\begin{align}
  \Var_{\theta_X}(a^\top\vect X\mid \vect Z_{\theta_X}=\vect z)
  \ge \kappa_X
  \qquad\text{for every } a\in\mathcal D,\ \|a\|=1 .
\end{align}
For the Gaussian factor model with diagonal \(\Psi\), a simple sufficient condition is \(\lambda_{\min}(\Psi)\ge \kappa_X\).  This quantitative condition rules out conditional variances that are positive only in a formal sense but too small to support stable comparisons. It is a stability margin strengthening the qualitative non-degeneracy condition discussed in Section \ref{subsec:why-sparsity}.

\smallskip
\noindent\textit{Finite latent complexity.}
The latent variable is finite-dimensional:
\begin{align}
  \dim(\vect Z_{\theta_X})<\infty .
\end{align}

\smallskip
\noindent\textit{Effective multi-cause structure in factor models.}
When $p_{\theta_X}$ is represented by the Gaussian factor model
\eqref{eq:sparse-FA}, its loading matrix satisfies the effective support
condition in Assumption~\ref{assump:SC} for the threshold sequence used in the
analysis.

\smallskip
\noindent
We denote the collection of all such assignment models by $\mathcal M_{\mathrm{A}}$.
\end{defn}

Although $\mathcal M_{\mathrm{A}}$ is defined as a set of assignment \emph{models}
$p_{\theta_X}$, $\Pi_n^{\mathrm{assn}}$ is a posterior on parameters $\theta_X$.
We therefore interpret
\[
  \Pi_n^{\mathrm{assn}}(\mathcal M_{\mathrm{A}}^c)
  :=
  \Pi_n^{\mathrm{assn}}\!\bigl(\{\theta_X:
  p_{\theta_X}\notin\mathcal M_{\mathrm{A}}\}\bigr).
\]

Definition~\ref{def:posterior-ID} reframes the substitute confounder assignment requirements as a target for posterior concentration. Because priors on $\theta_X$ are typically continuous, $\Pi_n^{\mathrm{assn}}$ assigns positive mass to non-compatible mechanisms at finite $n$. The relevant assignment-side question is whether
\[
  \Pi_n^{\mathrm{assn}}(\mathcal M_{\mathrm{A}}^c)\longrightarrow 0
\]
fast enough that overlap-collapsed, near-invertible, or effectively single-cause mechanisms have negligible influence on downstream regularized causal functionals. This posterior concentration is still not a test of ignorability;
it only says that the fitted assignment model has the coarse multi-cause form
needed for the subsequent causal assumptions to be meaningful.

\begin{proposition}[Posterior-induced regularization geometry]
\label{prop:posterior-geometry}
Let $\Pi^{\mathrm{assn}}_n$ be the assignment posterior and define
\[
  H_n:=\mathbb E_{\Pi^{\mathrm{assn}}_n}(\Lambda\Lambda^\top)
  \in\mathbb R^{m\times m}.
\]
Then $H_n$ is positive semidefinite and, for every treatment direction
$a\in\mathbb R^m$,
\[
  a^\top H_n a
  =\mathbb E_{\Pi^{\mathrm{assn}}_n}
  \|\Lambda^\top a\|^2.
\]
Thus $a^\top H_n a$ is the posterior mean amount of factor-explained
variation in direction $a$.  If $H_n=U\operatorname{diag}(d_1,\ldots,d_m)U^\top$
with $d_1\ge\cdots\ge d_m\ge0$, then the columns of $U$ with large $d_j$
are precisely the directions that receive large posterior factor-explained
variance.  For $\lambda>0$, the soft residual operator
\[
  P^\perp_{H_n,\lambda}=\lambda(H_n+\lambda I_m)^{-1}
\]
has eigenvalues $\lambda/(d_j+\lambda)$ in those same directions, and
therefore downweights directions in proportion to their posterior
factor-explained variation.  This gives a direction-specific regularisation
metric learned from the assignment model.  Causal interpretation of the
resulting directions still requires the identification assumptions stated
separately; the proposition only describes the geometry induced by the
posterior assignment model.
\end{proposition}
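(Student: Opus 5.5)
The plan is a direct linear-algebra argument. It rests on linearity of the posterior expectation together with the spectral calculus for symmetric matrices.

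First I will confirm that $H_n$ is well defined. This needs the posterior second moment $\mathbb E_{\Pi^{\mathrm{assn}}_n}\|\Lambda\|_F^2<\infty$, which I will state as a standing integrability condition. It holds, for example, whenever the posterior has finite second moments for the loadings. Under that condition, each entry $(H_n)_{jk}=\mathbb E_{\Pi^{\mathrm{assn}}_n}\sum_\ell\lambda_{j\ell}\lambda_{k\ell}$ is finite by Cauchy--Schwarz. Symmetry of $H_n$ is inherited from the symmetry of $\Lambda\Lambda^\top$ draw by draw.

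Next, for fixed $a\in\mathbb R^m$, I will use linearity of expectation to write $a^\top H_n a=\mathbb E_{\Pi^{\mathrm{assn}}_n}(a^\top\Lambda\Lambda^\top a)=\mathbb E_{\Pi^{\mathrm{assn}}_n}\|\Lambda^\top a\|^2$. This gives the displayed identity, and nonnegativity of the integrand gives positive semidefiniteness immediately.

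For the interpretation, I will note that under \eqref{eq:sparse-FA} with $\vect Z_i\sim\mathcal N_H(\mathbf 0,\mathbf I_H)$, the factor-explained variance in direction $a$ is $\Var_{\theta_X}(a^\top\Lambda\vect Z_i)=\|\Lambda^\top a\|^2$. Its posterior mean is therefore $a^\top H_n a$. This quantity is invariant to rotations $\Lambda\mapsto\Lambda Q$, so no orientation convention from Assumption~\ref{assump:SC} is needed.

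Since $H_n$ is symmetric PSD, the spectral theorem gives $H_n=U\operatorname{diag}(d_1,\dots,d_m)U^\top$ with orthonormal $U$ and $d_j\ge0$. For each column $u_j$ we have $u_j^\top H_nu_j=d_j$, so by the previous step $d_j$ is exactly the posterior mean factor-explained variance along $u_j$. The Courant--Fischer characterization then makes "precisely the directions" rigorous: the leading eigenvectors maximize $a^\top H_na$ over unit $a$, successively within orthogonal complements.

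For the soft residual operator, I will observe that $H_n+\lambda I_m=U\operatorname{diag}(d_j+\lambda)U^\top$ is invertible for $\lambda>0$ because $d_j+\lambda\ge\lambda>0$. Hence $P^\perp_{H_n,\lambda}=U\operatorname{diag}\bigl(\lambda/(d_j+\lambda)\bigr)U^\top$, which has the same eigenvectors and the stated eigenvalues. The map $d\mapsto\lambda/(d+\lambda)$ is strictly decreasing from $1$ at $d=0$ toward $0$. So directions with larger posterior factor-explained variance are shrunk more, and directions with $d_j=0$, those outside the posterior support of $\operatorname{span}(\Lambda)$, pass through unchanged. This makes precise the claim about downweighting "in proportion to" explained variation, meaning monotone downweighting, which is the sense I will state.

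The main obstacle is measure-theoretic rather than algebraic: guaranteeing that $H_n$ exists, i.e.\ that the second moment of $\Lambda$ is finite under the posterior. Heavy-tailed global-local priors such as Normal-Gamma with small $\nu$ could in principle threaten this. I would first try to argue that the Gaussian likelihood in $\vect X_{1:n}$ controls the tails of $\Lambda\Lambda^\top$, since the likelihood penalizes large $\Lambda\Lambda^\top+\Psi$ through the determinant and trace terms. Failing that, I would simply state finiteness of $\mathbb E_{\Pi^{\mathrm{assn}}_n}\|\Lambda\|_F^2$ as a hypothesis. The remaining claims are purely deterministic consequences of the spectral theorem.
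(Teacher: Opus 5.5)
Your argument is correct and is the same verification the paper relies on. The paper gives no separate proof of this proposition: its claims follow, as you show, from linearity of the posterior expectation ($a^\top H_n a=\mathbb E_{\Pi^{\mathrm{assn}}_n}\|\Lambda^\top a\|^2\ge 0$) together with the spectral calculus $\lambda(H_n+\lambda I_m)^{-1}=U\operatorname{diag}\{\lambda/(d_j+\lambda)\}U^\top$. Your additions are the standing condition $\mathbb E_{\Pi^{\mathrm{assn}}_n}\|\Lambda\|_F^2<\infty$ so that $H_n$ exists, the rotation invariance of $\Lambda\Lambda^\top$, and reading ``in proportion to'' as monotone downweighting; these are tightenings the paper leaves implicit. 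In your fallback tail argument, note that it is the determinant term, not the trace term, that penalizes large $\Lambda\Lambda^\top$.
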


If the factor model identifies the confounding subspace (i.e. after fixing the rotation and sign ambiguity and under a condition that separates the relevant eigenvalues), then
$H_n$ may be viewed as an estimator or posterior summary of that subspace. Without such assumptions, $H_n$ is still useful as a model-induced regularization geometry, but it is not by itself a point-identified causal object.  The projected result in Section~\ref{subsec:outcome-identifiability} therefore requires Assumption~\ref{ass:projection-known}: the projection must be known from outside the factor model or consistently estimated under additional subspace-identification conditions.

In the Bayesian assignment model, $\Lambda$ is random under
$\Pi_n^{\mathrm{assn}}$, so $\mathcal C(\theta_X):=\mathrm{span}(\Lambda)$ is a random subspace. Rather than committing to a single geometry, we either (i) average downstream causal functionals over posterior draws of the assignment parameters $\theta_X$, or (ii) use $H_n$ as a stable summary for
regularization. In particular, $H_n$ plays the role of a soft analogue of $P_{\mathcal C}$: directions with large posterior-confounding eigenvalues are persistently explained by shared latent structure, while directions with small eigenvalues behave as effectively idiosyncratic. For example, one may form a soft
projection
\[
  P^\perp_{H_n,\lambda}
  := I_m-H_n(H_n+\lambda I_m)^{-1}
  = \lambda(H_n+\lambda I_m)^{-1},
\]
which down-weights directions with large posterior confounding while retaining directions with small posterior confounding.

In practice, $H_n$ enters causal estimation either implicitly, through posterior averaging over factor scores, or explicitly, through geometry-aware outcome regularisation as in Section~\ref{subsec:outcome-identifiability}. Augmented balancing weights \citep{bruns2025augmented} can be interpreted as
enforcing balance with respect to a user-specified metric $\mathsf H$; here, Bayesian assignment models replace ad hoc or cross-validated choices of $\mathsf H$ with a posterior expectation over confounding directions learned from the joint distribution of the causes, subject to the caveat that causal interpretation requires the additional assumptions stated above.

\subsection{Bayesian causal functional consistency}
\label{subsec:bayesian-consistency}

We conclude by establishing consistency of Bayesian causal functionals under the
sparse assignment model. This result does not introduce new identification
assumptions; rather, it shows that the Bayesian procedure implements the same
stability principles as the frequentist analysis, while propagating uncertainty
about the confounding structure. We focus on the mean potential outcome $\mu(\vect x):=\E[Y_i^{(\vect x)}]$.
Recall that $\theta=(\theta_X,\theta_Y)$.  Let $\widetilde\Pi_n$ denote the posterior distribution actually used for causal estimation.  It may be the ordinary joint posterior $\Pi(\cdot\mid \vect X_{1:n},Y_{1:n})$, or it may be the modular, or \emph{cut}, posterior
\[
  \Pi_n^{\mathrm{cut}}(d\theta_X,d\theta_Y)
  =\Pi_n^{\mathrm{assn}}(d\theta_X)\,
    \Pi(d\theta_Y\mid \vect X_{1:n},Y_{1:n},\theta_X),
\]
where $\Pi_n^{\mathrm{assn}}(d\theta_X)=\Pi(d\theta_X\mid \vect X_{1:n})$.  The cut posterior learns the assignment model from the causes alone, so the outcome cannot feed back into the learned confounding geometry.  In either case, $\widetilde\Pi_{n,X}$ denotes the $\theta_X$-marginal of $\widetilde\Pi_n$ and must satisfy the assignment-stage conditions below.  For each $i$, let $\hat{\vect Z}_i(\theta_X)$ denote a measurable factor score summary under the assignment model, for example $\E_{\theta_X}[\vect Z_i\mid \vect X_i]$.

\smallskip
\noindent
The Bayesian regression-adjusted estimand is the posterior-averaged functional
\begin{equation}
\label{eq:bayes-mu}
  \mu_n^{\mathrm{pp}}(\vect x)
  :=
  \int
    \Biggl\{
      \frac{1}{n}\sum_{i=1}^n r_{\theta_Y}\!\bigl(\vect x,\hat{\vect Z}_i(\theta_X)\bigr)
    \Biggr\}\,
  \widetilde\Pi_n(d\theta),
\end{equation}
where
\[
  r_{\theta_Y}(\vect x,\vect z)
  :=
  \E_{\theta_Y}\!\left[Y_i \mid \vect X_i=\vect x,\vect Z_i=\vect z\right].
\]
Eq.~\eqref{eq:bayes-mu} is the Bayesian analogue of the regression-adjusted
estimators studied in Section~\ref{subsec:asymptotics-sparse-deconf}: instead of
conditioning on a single estimated substitute confounder, the Bayesian procedure
averages over posterior uncertainty in both the assignment mechanism (through
$\hat{\vect Z}_i(\theta_X)$) and the outcome regression (through $r_{\theta_Y}$).

\begin{thm}[Bayesian consistency of mean potential outcomes]
\label{thm:bayes-mu-consistency}
Fix \(\vect x\in\mathcal X\), and let
\(r_0(\vect x,\vect z):=\E_0[Y_i\mid \vect X_i=\vect x,\vect Z_i=\vect z]\).
Suppose the following conditions hold.
\begin{enumerate}[leftmargin=1.4em]
\item \textit{Causal identification at the latent level.}
      The full data-generating law satisfies consistency, latent ignorability
      and latent positivity as in Assumption~\ref{ass:latent-ign-positivity}, so
      that
      \[
        \mu(\vect x)=\E_0\{r_0(\vect x,\vect Z_i)\}.
      \]
      %Here latent ignorability means that, after conditioning on the true latent confounder, the observed treatment vector carries no further information about the potential outcome under \(\vect x\). We also assume the oracle empirical-average condition
      Moreover,
      \[
        \E_0\left|\frac1n\sum_{i=1}^n r_0(\vect x,\vect Z_i)
        -\E_0\{r_0(\vect x,\vect Z_i)\}\right|\to0
      \]
      along the joint sequence. %; a sufficient condition for independent units is \(\Var_0\{r_0(\vect x,\vect Z_i)\}/n\to0\), for example \(\sup_m\E_0\{r_0(\vect x,\vect Z_i)^2\}<\infty\).
\item \textit{Assignment compatibility and score contraction.}
      The true assignment mechanism belongs to \(\mathcal M_{\mathrm{A}}\), and
      the \(\theta_X\)-marginal \(\widetilde\Pi_{n,X}\) satisfies
      \(\widetilde\Pi_{n,X}(\mathcal M_{\mathrm{A}}^c)\to0\) in \(P_0\)-probability.
      Moreover, the posterior factor score summaries satisfy
      \[
        \E_0\!\left[
        \int \frac1n\sum_{i=1}^n
        \|\hat{\vect Z}_i(\theta_X)-\vect Z_i\|^2\,
        \widetilde\Pi_n(d\theta)
        \right]
        \le \varepsilon_{n,m}^2+o(1),
        \qquad \varepsilon_{n,m}\to0 .
      \]

\item \textit{Outcome posterior calibration.}
      The posterior outcome regression is calibrated at \(\vect x\) in the sense
      that
      \[
        \E_0\!\left[
        \int \frac1n\sum_{i=1}^n
        |r_{\theta_Y}(\vect x,\vect Z_i)-r_0(\vect x,\vect Z_i)|\,
        \widetilde\Pi_n(d\theta)
        \right]
        \longrightarrow 0.
      \]

\item \textit{Posterior Lipschitz and envelope conditions.}
      There are sets \(\Theta_n(\vect x)\) with
      \(\widetilde\Pi_n\{\Theta_n(\vect x)^c\}\to0\) in \(P_0\)-probability and a
      constant \(L(\vect x)<\infty\) such that, for all
      \(\theta\in\Theta_n(\vect x)\),
      \[
        |r_{\theta_Y}(\vect x,\vect z)-r_{\theta_Y}(\vect x,\vect z')|
        \le L(\vect x)\|\vect z-\vect z'\| .
      \]

      There exist nonnegative random variables \(M_{i,n}(\vect x)\), independent of the posterior draw conditional on the data, such that uniformly over
      all \(\theta\in\operatorname{supp}(\widetilde\Pi_n)\),
      \[
        |r_{\theta_Y}(\vect x,\vect Z_i)|
        + |r_{\theta_Y}(\vect x,\hat{\vect Z}_i(\theta_X))|
        \le M_{i,n}(\vect x),
        \qquad \sup_{i,n} \E_0\{M_{i,n}(\vect x)^2\}<\infty .
      \]

\end{enumerate}
Then
\[
  \mu_n^{\mathrm{pp}}(\vect x)
  \xrightarrow{P_0}
  \mu(\vect x)
  \qquad
  \text{as } n,m\to\infty,
\]
and the asymptotic bias satisfies
\[
  \bigl|
    \E_0[\mu_n^{\mathrm{pp}}(\vect x)]
    -\mu(\vect x)
  \bigr|
  \le
  L(\vect x)\varepsilon_{n,m}+o(1).
\] 
\end{thm}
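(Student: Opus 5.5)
The plan is a three-term telescoping decomposition carried out inside the posterior integral. It mirrors the proof of Theorem~\ref{thm:mu-consistency-ATE}, but the calibration step now sits at the true latent scores rather than at the fitted ones. Using the identification in condition~1, $\mu(\vect x)=\E_0\{r_0(\vect x,\vect Z_i)\}$, write
\[
\mu_n^{\mathrm{pp}}(\vect x)-\mu(\vect x)=T_1+T_2+T_3,
\]
where
\begin{align*}
T_1&:=\int \frac1n\sum_{i=1}^n\bigl\{r_{\theta_Y}(\vect x,\hat{\vect Z}_i(\theta_X))-r_{\theta_Y}(\vect x,\vect Z_i)\bigr\}\,\widetilde\Pi_n(d\theta),\\
T_2&:=\int \frac1n\sum_{i=1}^n\bigl\{r_{\theta_Y}(\vect x,\vect Z_i)-r_0(\vect x,\vect Z_i)\bigr\}\,\widetilde\Pi_n(d\theta),\\
T_3&:=\frac1n\sum_{i=1}^n r_0(\vect x,\vect Z_i)-\E_0\{r_0(\vect x,\vect Z_i)\}.
\end{align*}
It suffices to show $\E_0|T_2|\to0$, $\E_0|T_3|\to0$ and $\E_0|T_1|\le L(\vect x)\varepsilon_{n,m}+o(1)$. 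Markov's inequality then gives convergence in $P_0$-probability, and $|\E_0[\mu^{\mathrm{pp}}_n]-\mu|\le \E_0|T_1|+\E_0|T_2|+\E_0|T_3|$ gives the bias bound. $T_3$ is handled directly by the oracle empirical-average part of condition~1. For $T_2$, move the absolute value inside the integral and sum; its expectation is exactly the quantity in condition~3.

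For $T_1$, split the posterior integral over $\Theta_n(\vect x)$ and its complement. On $\Theta_n(\vect x)$, the posterior Lipschitz bound in condition~4 gives an integrand bounded by $L(\vect x)\,n^{-1}\sum_i\|\hat{\vect Z}_i(\theta_X)-\vect Z_i\|$. Apply Cauchy--Schwarz twice, once over $i$ and once over $\widetilde\Pi_n$ and $\E_0$ jointly. This bounds the expectation by $L(\vect x)\bigl(\E_0\int n^{-1}\sum_i\|\hat{\vect Z}_i-\vect Z_i\|^2\,d\widetilde\Pi_n\bigr)^{1/2}$, which is at most $L(\vect x)(\varepsilon_{n,m}^2+o(1))^{1/2}\le L(\vect x)\varepsilon_{n,m}+o(1)$ by condition~2.

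On the complement, the integrand is bounded by $n^{-1}\sum_i M_{i,n}(\vect x)$, uniformly in $\theta$, using the envelope. Since $M_{i,n}$ does not depend on the draw, the contribution is at most
\[
\Bigl(\tfrac1n\textstyle\sum_i M_{i,n}\Bigr)\,\widetilde\Pi_n\{\Theta_n(\vect x)^c\}.
\]
I expect this to be the only delicate step. The posterior mass $\widetilde\Pi_n\{\Theta_n^c\}$ tends to zero only in probability, not in $L^1$, and it is multiplied by a random average. I would handle it with Cauchy--Schwarz:
\[
\E_0\Bigl[\Bigl(\tfrac1n\textstyle\sum_i M_{i,n}\Bigr)\widetilde\Pi_n(\Theta_n^c)\Bigr]\le \Bigl(\E_0\bigl(\tfrac1n\textstyle\sum_i M_{i,n}\bigr)^2\Bigr)^{1/2}\Bigl(\E_0\,\widetilde\Pi_n(\Theta_n^c)^2\Bigr)^{1/2}.
\]
The first factor is bounded by $\sup_{i,n}\E_0 M_{i,n}^2<\infty$, by Jensen's inequality. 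The second factor tends to zero by bounded convergence, because $\widetilde\Pi_n(\Theta_n^c)\in[0,1]$ and it tends to zero in probability. Hence this contribution is $o(1)$ and is absorbed into the remainder.

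The compatibility requirement $\widetilde\Pi_{n,X}(\mathcal M_{\mathrm A}^c)\to0$ enters only implicitly: it justifies interpreting $\hat{\vect Z}_i(\theta_X)$ as a substitute confounder, while the score contraction bound carries all the quantitative work. If needed, the compatibility set can be folded into $\Theta_n(\vect x)$ and the same envelope argument applied. Finally, combining the three bounds yields both the $P_0$-consistency and the bias bound $L(\vect x)\varepsilon_{n,m}+o(1)$.
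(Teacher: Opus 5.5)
Your proposal is correct and follows essentially the same argument as the paper. The paper's split into $(A)=\mu_n^{\mathrm{pp}}-\mu_n^{\mathrm{or}}$ and $(B)=\mu_n^{\mathrm{or}}-\mu$, with $(B)$ further divided into the calibration term and the oracle-average term, is exactly your $T_1$ and $T_2+T_3$. Its Step~1 likewise splits over $\Theta_n(\vect x)$, applying the Lipschitz bound with Jensen/Cauchy--Schwarz on the typical set and the envelope on the tail.

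The only difference is minor and concerns the tail product $\bigl(n^{-1}\sum_i M_{i,n}\bigr)\widetilde\Pi_n\{\Theta_n(\vect x)^c\}$. The paper shows it is $o(1)$ via uniform integrability and Vitali's theorem. Your Cauchy--Schwarz plus bounded-convergence argument reaches the same conclusion directly from the second-moment envelope. Your bound of $M_{i,n}$ on the tail integrand is also slightly tighter than the paper's $2M_{i,n}$.
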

Theorem~\ref{thm:bayes-mu-consistency} shows that Bayesian causal inference inherits consistency from the same stability argument used in the frequentist analysis, but only after the causal and statistical requirements are separated. Latent ignorability and positivity identify the full mean potential outcome target; assignment compatibility
and shrinkage support a coarse, overlap-preserving geometry; posterior calibration controls outcome regression error; and Lipschitz continuity transfers
factor score contraction into vanishing causal bias via Proposition~\ref{prop:plugin-stability}. For independent units, the oracle empirical-average condition follows, for example, from \(\Var_0\{r_0(\vect x,\vect Z_i)\}/n\to0\). If the factor model is identifiable only up to rotation, the score-contraction condition is interpreted after the
chosen orientation constraint, or with an infimum over admissible rotations. The envelope condition is used only to control posterior mass outside the posterior-typical Lipschitz set. A proof is given in Appendix~\ref{sec:appendix-bayes-consistency}. Consistency of posterior summaries for average treatment effects follows immediately.

\section{Experiments}
\label{sec:experiments}
In this section, we present experiments which evaluate three linked aspects of the proposed framework. First, we design a synthetic grid to illustrate the stability mechanism in
Theorems~\ref{thm:mu-consistency-ATE} and
\ref{thm:bayes-mu-consistency}: replacing the latent confounder \(Z_i\) by a
learned score \(\widehat Z_i\) induces small error in mean potential outcome functionals only when the score error is small, and this conclusion still requires the relevant outcome-side assumptions. For this reason we report both assignment diagnostics and direct errors for focal mean potential outcome
contrasts, rather than only loading recovery.

Second, we present an adaptive-shrinkage experiment to ask how the learned confounding geometry should enter regularized causal estimation. It connects the sparse FA assignment model to augmented balancing weights by comparing isotropic ridge
regularization with an FA-guided direction-specific penalty, and by separating predictive tuning from tuning criteria aligned with causal error. Third, we use the
ADNI example as a real data stress test: known baseline confounders are withheld from the factor model, allowing us to examine whether sparse substitute confounders produce empirically more stable adjustment and move the estimated
mean potential outcome contrast in the direction expected from adjustment for the withheld covariates.

\subsection{Synthetic grid study}
\label{subsec:synthetic-grid-results}

We use a common synthetic family with a scalar latent confounder
\(Z_i\sim\mathcal N(0,1)\) and a high-dimensional cause vector
\(\vect A_i\in\mathbb R^m\). The confounder loads only on a block
\(S\subset\{1,\ldots,m\}\) containing the focal causes \((A_{i1},A_{i2})\),
whereas nuisance factors load on \(S^c\):
\begin{align}
  \vect A_i
  =
  Z_i\vect w^\top
  + U_{i0}\vect v_0^\top
  + \sum_{\ell=1}^{q} U_{i\ell}\vect v_\ell^\top
  + \vect E_i .
  \label{eq:synthetic-grid-assignment}
\end{align}
Here \(\operatorname{supp}(\vect w)=S\), while
\(\operatorname{supp}(\vect v_\ell)\subseteq S^c\). The first nuisance loading \(\vect v_0\) is a high-variance ``monster'' direction and the remaining \(\vect v_\ell\)'s are dense off-block nuisance directions.  The main grid varies the nominal signal ratio \(\|\vect w\|^2/\|\vect v_0\|^2\); the appendix varies the block size \(|S|/m\) and the number of nuisance directions \(q\). The default run uses \(n=220\), \(m=200\), three independent assignment replications, and 120 outcome regenerations conditional on each fitted score. The baseline outcome model is
\begin{align}
  Y_i
  =
  \beta_1 A_{i1}+\beta_2 A_{i2}+\gamma Z_i+\varepsilon_i,
  \qquad
  \varepsilon_i\sim\mathcal N(0,\sigma_Y^2),
  \label{eq:synthetic-grid-outcome}
\end{align}
with \((\beta_1,\beta_2)=(1,1)\). This is the setting in which regression adjustment on the true \(Z_i\) is correctly calibrated. We also consider two stress tests. The first adds an interaction \(0.8 A_{i1}Z_i\), making the linear outcome adjustment misspecified while preserving the shared latent confounder. The second introduces an additional single-cause latent variable that affects both \(A_{i1}\) and \(Y_i\), violating the multi-cause substitute confounder premise.

For each assignment model, we fit a one-factor representation using the causes only and form a plug-in score \(\widehat Z_i\). The Bayesian factor models use the posterior-mean score rather than draw-by-draw posterior averaging of the causal functional, so the experiment should be read as a finite-sample plug-in illustration of the stability results. We compare a naive estimator with no substitute confounder, an oracle adjustment using the true \(Z_i\), dense Gaussian factor analysis, the Normal-Gamma (NG) sparse factor model, and the MGP
and CSP ordered-shrinkage variants.

Performance is summarized at both the assignment and causal functional levels. For assignment recovery, we report the best absolute correlation between \(Z_i\) and the fitted score, after resolving the factor sign, together with \(\mathrm{ResidCorr}(A_1,A_2\mid\widehat Z)\), the residual correlation of the two focal causes after linearly adjusting for the fitted score. For downstream estimation, let \(\widehat\beta(\widehat Z)\) denote the coefficients of \((A_1,A_2)\) in the regression of \(Y\) on \((1,A_1,A_2,\widehat Z)\), omitting \(\widehat Z\) for the naive estimator and replacing it by \(Z\) for the oracle estimator. For the contrast set
\[
  \mathcal D=\{(1,0),(0,1),(1,1)\},
\]
we define \(\widehat\Delta(d)=d^\top\widehat\beta(\widehat Z)\) and compare it
with the corresponding mean potential outcome contrast \(\Delta(d)\) under the known data-generating law. The reported functional MSE is
\[
  \mathrm{FMSE}
  =
  |\mathcal D|^{-1}\sum_{d\in\mathcal D}
  \E_{\mathrm{MC}}\!\left[
    \left\{\widehat\Delta(d)-\Delta(d)\right\}^2
  \right],
\]
where the expectation is approximated by regenerating outcomes conditional on
the fitted assignment scores. In the baseline linear design, these contrasts
are focal ATEs. We also report the coefficient-vector MSE to maintain
comparability with the earlier fixed-scenario experiments.

Figure~\ref{fig:synthetic-grid-summary} summarizes how score recovery,
residual focal dependence, and functional MSE vary with the nominal
confounder-to-nuisance signal ratio and with the outcome structure. At nominal
signal ratios \(0.01\) and \(0.02\), no fitted score closely approximates the
latent confounder and the functional MSE remains close to the naive estimator;
the oracle adjustment remains near zero, showing that the difficulty is
estimation of the substitute rather than identification conditional on the true
latent variable. Around nominal signal ratio \(0.05\), Bayes-NG and Bayes-MGP
recover scores almost perfectly aligned with \(Z\), reduce the residual focal
correlation to near zero, and attain functional MSE close to the oracle
benchmark. Dense FA is less stable in this transition regime: it succeeds in
some replications but remains susceptible to nuisance directions in others. At
stronger signal levels dense FA also catches up, while CSP continues to perform comparably to the naive estimator. Thus the main advantage of the
sparse priors is not that dense FA is uniformly inconsistent, but that
shrinkage expands the finite-sample regime in which the localized confounding
direction is recovered well enough for stable causal adjustment.

\begin{figure}[t!]
    \centering
    \includegraphics[width=\linewidth]{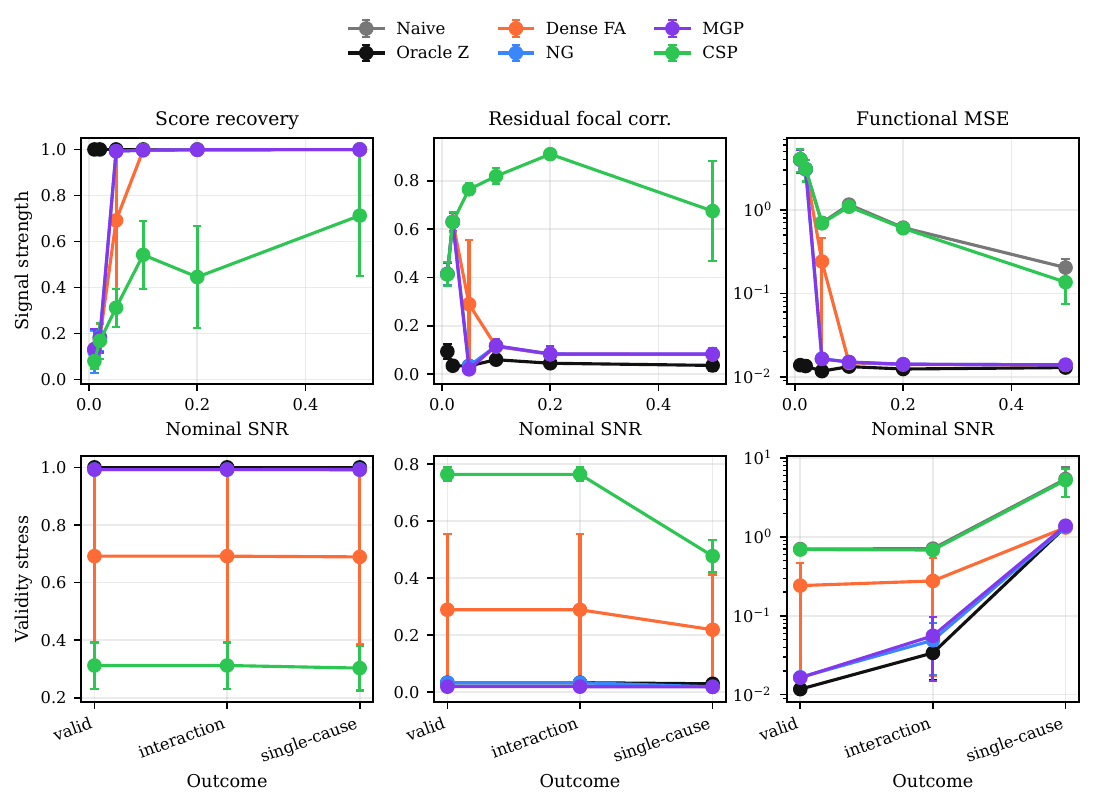}
    \caption{Synthetic grid summary. Top row: increasing nominal strength of the localised confounder relative to off-block nuisance variation. Bottom row: outcome-validity stress tests at the transition setting. The columns
    report score recovery, residual dependence between focal causes after adjustment, and MSE for focal mean-potential/ATE contrasts. Points are averages over three assignment replications; vertical bars are standard
    errors.}
    \label{fig:synthetic-grid-summary}
\end{figure}

Table~\ref{tab:synthetic-grid-main} focuses on the transition setting
(\(|S|/m=0.1\), nominal signal ratio \(0.05\), \(q=20\)). Under the valid outcome
model, NG and MGP nearly match Oracle-\(Z\): their functional MSEs are \(0.017\)
and \(0.016\), compared with \(0.012\) for the oracle and \(0.701\) for the
naive estimator. Dense FA improves over the naive estimator on average, but its
large standard error reflects replicate-level instability. CSP does not recover
the confounder under this calibration.

\begin{table}[t!]
\centering
\scriptsize
\setlength{\tabcolsep}{3.2pt}
\caption{Synthetic grid at the transition setting. Entries are means with standard errors over three assignment replications. ``Func. MSE'' is the mean-potential/ATE contrast MSE under the valid outcome model; the last two columns report the same functional MSE under outcome model and single cause
violations.}
\label{tab:synthetic-grid-main}
\begin{tabular}{lcccccc}
\toprule
Method & Score corr. & Resid. corr. & Func. MSE & Coef. MSE &
Func. MSE int. & Func. MSE single \\
\midrule
CSP & 0.312 (0.083) & 0.765 (0.024) & 0.691 (0.025) & 2.047 (0.073) & 0.682 (0.070) & 5.246 (2.076) \\
MGP & 0.992 (0.001) & 0.019 (0.013) & 0.016 (0.001) & 0.028 (0.003) & 0.056 (0.041) & 1.386 (0.067) \\
NG & 0.992 (0.002) & 0.033 (0.018) & 0.017 (0.002) & 0.027 (0.005) & 0.049 (0.031) & 1.345 (0.022) \\
Dense FA & 0.692 (0.300) & 0.289 (0.265) & 0.241 (0.224) & 0.701 (0.675) & 0.276 (0.259) & 1.314 (0.065) \\
Naive & -- & -- & 0.701 (0.032) & 2.077 (0.091) & 0.711 (0.078) & 5.418 (2.195) \\
Oracle \(Z\) & 1.000 (0.000) & 0.033 (0.005) & 0.012 (0.000) & 0.018 (0.001) & 0.034 (0.018) & 1.346 (0.064) \\
\bottomrule
\end{tabular}
\end{table}

The outcome stress tests separate assignment recovery from causal validity. The interaction setting leaves score recovery unchanged but increases functional MSE, reflecting outcome regression misspecification. The single cause violation is more severe: even Oracle-\(Z\) has functional MSE around \(1.35\). Thus the experiment does not suggest that sparse FA repairs violations of latent ignorability. Rather, it supports the logic of Sections~\ref{subsec:asymptotics-sparse-deconf}
and~\ref{subsec:bayesian-consistency}: when the outcome assumptions are valid, functional error tracks score contraction; when those assumptions are violated, excellent factor score recovery is not sufficient.

Additional diagnostic plots, the block-localisation and nuisance-rank sweeps, and the extreme scenarios figures are given in Appendix~\ref{sec:appendix-synthetic-grid}.

\subsection{Adaptive shrinkage via FA-guided direction-specific penalty}
\label{subsec:abw-adaptive}

Regularized causal estimation allows for stabilizing high-dimensional estimators: one can infer penalty parameters that determine how estimation trades bias, variance, and balance across directions of the treatment space. ABW \citep{bruns2025augmented} formalizes this idea by placing a metric on treatment imbalance; in the linear regression form used here the
corresponding ridge estimator solves
\[
  (\boldsymbol A^\top \boldsymbol A+nH)\hat\theta=\boldsymbol A^\top Y .
\]
and an isotropic ABW ridge penalty can be interpreted via \(H=\lambda I_p\).  Our question is
whether the assignment-stage FA geometry can supply a more targeted penalty,
\[
  H_{\rm FA}(\lambda_0,\lambda_1)
  =\lambda_0 I_p+\lambda_1\widehat P_{\rm conf},
\]
where \(\widehat P_{\rm conf}\) is the plug-in orthogonal projector onto the
span of the posterior mean loading vector.  This is a computationally simple
hard-projector version of the posterior geometry in
Proposition~\ref{prop:posterior-geometry}; projectors are not averaged over MCMC
draws in this experiment.

The assignment design is the block-sparse model used above,
\[
  \boldsymbol A
  =
  Z\boldsymbol w_{\rm conf}^\top
  +\boldsymbol U_{\rm nuis}\boldsymbol B_{\rm nuis}^\top
  +\boldsymbol\varepsilon_A,
  \qquad
  Y=\boldsymbol A\theta^\star+h(Z)+\varepsilon_Y ,
\]
with \(n=400\), \(p=80\), \(p_{\rm conf}=30\), two nuisance factors, and loading
scales \(\sigma_{\rm conf}=8\) and \(\sigma_{\rm nuis}=3\).  The target
coefficient is
\[
  \theta^\star=\epsilon\,\theta_F+(1-\epsilon)\theta_U,
  \qquad
  \theta_F\in{\rm span}(\boldsymbol w_{\rm conf}),\quad
  \theta_U\perp{\rm span}(\boldsymbol w_{\rm conf}),
\]
so \(\epsilon\) varies the overlap between the causal target and the confounded
direction.  The grid uses
\(\epsilon\in\{0,0.05,0.1,0.2,0.3,0.4,0.6,0.8\}\).  For each Monte Carlo
replicate we fit a one-factor Normal-Gamma FA model using the causes only.
Four chains are run and the plug-in projector is selected by the Gaussian FA
covariance score for \(\boldsymbol A\); this assignment-stage selection uses neither \(Y\), \(\theta^\star\), nor the true loading vector.  The selected
chains recover the simulated confounding direction well, with mean absolute cosine \(0.998\) (standard error \(0.001\)).

The comparison is organized around two questions:  (i) conditional on a causally aligned tuning criterion, how much is gained by using the learned confounding geometry rather than an isotropic ridge penalty?  (ii) what goes wrong if the same regularization family is tuned as a prediction problem? This mirrors the practical tuning issues emphasized by \citet{bruns2025augmented}: cross-validation based choices are natural in regularized regression implementations, but they need not target the causal functional whose error we report here. We therefore report ordinary least squares (OLS), isotropic ABW ridge tuned by an oracle causal criterion, the FA-guided direction-specific penalty tuned by an oracle causal criterion, and a true-projector directional oracle.  We also report predictive cross-validation versions of isotropic ridge and FA-guided ridge to show what happens under the naive criterion of held-out outcome prediction. Oracle methods use \(\theta^\star\) and \(\Sigma_A={\rm Cov}(\boldsymbol A)\) only for simulation benchmarking. All methods are evaluated by the population causal error
\[
  (\hat\theta-\theta^\star)^\top
  \Sigma_A
  (\hat\theta-\theta^\star).
\]
The true-projector reference is not a finite-sample lower bound for every regularized estimator: a slightly rotated plug-in projector can occasionally reduce the finite-sample oracle risk.  It is included to locate the low causal error region associated with directional shrinkage.

We display the performance of the compared regularization strategies for different values of $\epsilon$ in
Figure~\ref{fig:abw-grid-summary} and Table~\ref{tab:abw-summary}. Results suggest that directional shrinkage with a causally aligned tuning criterion has low causal MSE throughout the grid. As \(\epsilon\) increases, the target itself moves toward the confounded direction and the advantage of directional shrinkage over isotropic shrinkage narrows, but it remains large over the reported range.

The predictive-CV rows in Table~\ref{tab:abw-summary} and the center and right panels of Figure~\ref{fig:abw-grid-summary} illustrate the collapse of the CV-guided regularization penalty. Because the latent confounder affects both \(\boldsymbol A\) and \(Y\), the confounded direction is
genuinely predictive of the observed outcome.  A validation criterion based on held-out prediction is therefore rewarded for preserving that direction, whereas the causal objective is improved by shrinking it.  In the grid, predictive CV selects little or no directional penalty and its causal MSE
remains close to OLS.  The right panel of Figure~\ref{fig:abw-grid-summary} reports this directly: the causal error of
predictive-CV tuning is several-fold larger than the matched oracle for isotropic ridge and hundreds of times larger for the FA-guided penalty. The conclusion is not that predictive CV is a poor predictive tuning rule; it is that prediction risk is not a valid surrogate for the bias-variance tradeoff of
the causal functional when confounded directions are predictive.  A causally aligned tuning rule must penalize directions according to their contribution to confounding and target error, even when those directions improve prediction of
the observed outcome.

The grid summaries are complemented by a direct view of the regularization surface.  Figure~\ref{fig:abw-summary} fixes two representative target alignments, \(\epsilon=0.1\) and \(\epsilon=0.3\), and plots the causal MSE over the global and directional penalties.  The low-error region lies at nonzero
directional shrinkage, showing that the gain from FA-guided regularization is not a marginal artifact of the tuning grid but a feature of the causal-error surface.  The isotropic ABW oracle is restricted to the boundary \(\lambda_1=0\), while predictive CV selects a boundary or near-boundary point with high causal error.  Thus, the figure gives a local geometric explanation for the grid-level pattern: sparse FA can recover a useful confounding direction, but realizing its causal value requires tuning aligned with the causal estimand rather than with outcome prediction alone.

\begin{table}[!htbp]
\centering
\small
\setlength{\tabcolsep}{4pt}
\caption{Adaptive-shrinkage grid: causal MSE for selected values of
\(\epsilon\). Entries are Monte Carlo means with standard errors in
parentheses. Oracle rows are simulation benchmarks.}
\label{tab:abw-summary}
\begin{tabular}{lcccc}
\toprule
& \multicolumn{4}{c}{Target-confounded mixing \(\epsilon\)} \\
\cmidrule(lr){2-5}
Method & 0 & 0.1 & 0.3 & 0.6 \\
\midrule
True \(P_{\rm conf}\) oracle & 0.806 (0.221) & 0.705 (0.241) & 0.628 (0.232) & 0.510 (0.178) \\
FA-guided oracle & 0.366 (0.068) & 0.155 (0.019) & 0.089 (0.007) & 0.055 (0.009) \\
\midrule
ABW ridge oracle & 7.193 (1.386) & 4.573 (0.943) & 2.252 (0.390) & 0.813 (0.226) \\
\midrule
FA-guided predictive CV & 35.485 (0.415) & 35.479 (0.386) & 35.391 (0.373) & 35.186 (0.403) \\
ABW ridge predictive CV & 35.461 (0.397) & 35.451 (0.395) & 35.434 (0.390) & 35.316 (0.370) \\
OLS & 35.490 (0.376) & 35.490 (0.376) & 35.490 (0.376) & 35.490 (0.376) \\
\bottomrule
\end{tabular}
\end{table}

\begin{figure}[t!]
    \centering
    \includegraphics[width=0.95\linewidth]{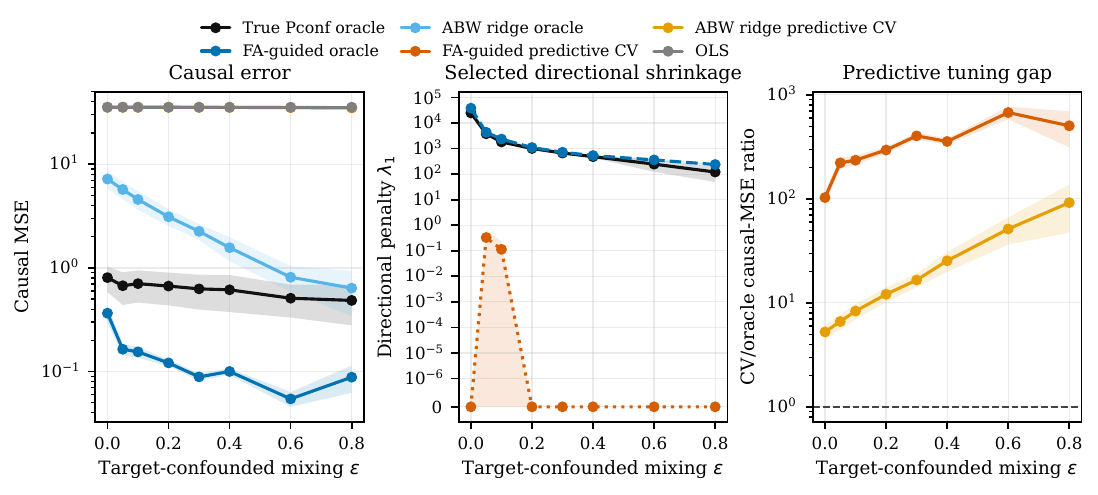}
    \caption{\emph{Adaptive-shrinkage grid.} Left: causal MSE as the target
    moves toward the confounded direction. Middle: selected directional
    penalty. Right: ratio between predictive-CV causal error and the matched
    oracle causal error, reported for the two predictive-CV methods.}
    \label{fig:abw-grid-summary}
\end{figure}

\begin{figure}[t!]
    \centering
    \includegraphics[width=0.95\linewidth]{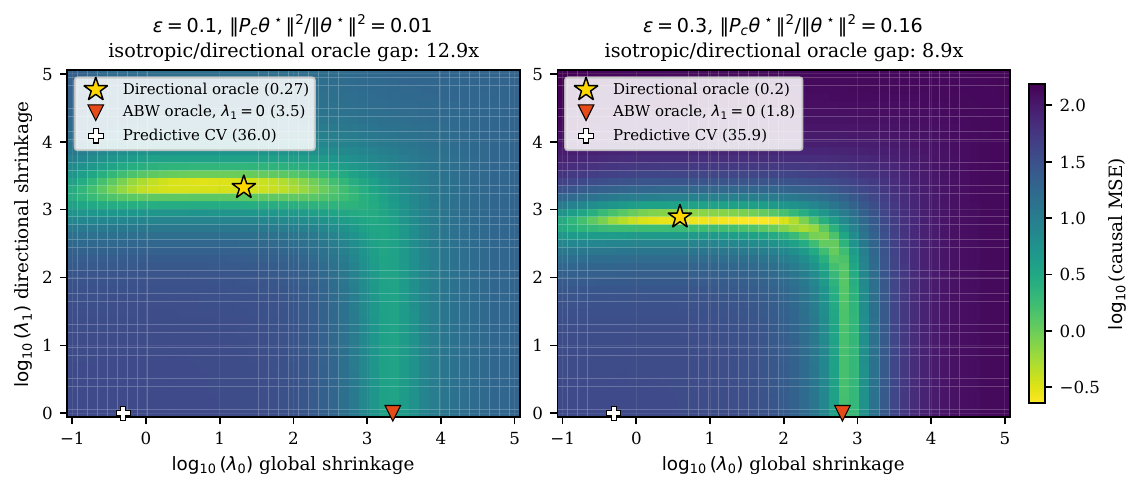}
    \caption{\emph{Causal-MSE landscapes for two reference points in the
    adaptive-shrinkage grid.} The heat maps evaluate causal MSE over global
    shrinkage \(\lambda_0\) and directional shrinkage \(\lambda_1\), using the
    true confounding projector to isolate the regularization geometry. Stars
    mark the directional oracle, triangles the isotropic ABW oracle on the
    boundary \(\lambda_1=0\), and crosses the predictive-CV selection.}
    \label{fig:abw-summary}
\end{figure}

\subsection{Alzheimer's Disease Neuroimaging Initiative}
\label{subsec:adni-example}

Next, we consider baseline data from the Alzheimer's Disease Neuroimaging
Initiative (ADNI): a multi-site study designed to combine clinical, imaging,
genetic, and fluid-biomarker information in the study of Alzheimer's disease (AD)
progression \citep{mueller2005alzheimer}. ADNI is used as an empirical case study for adjustment recovery in a highly correlated clinical data set. We consider the baseline hippocampal volume from structural MRI as an outcome, a widely used
neurodegeneration biomarker in AD that changes systematically
across the AD continuum \citep{caroli2010dynamics}. Figure~\ref{fig:hippocampal-atrophy} gives the descriptive context in the ADNI baseline cohort: hippocampal volume
declines from cognitively normal participants toward AD. The analysis below
asks how this diagnostic contrast changes after adjustment for baseline
pathology and proxy structure.

\begin{figure}[t!]
    \centering
    \includegraphics[width=0.75\linewidth]{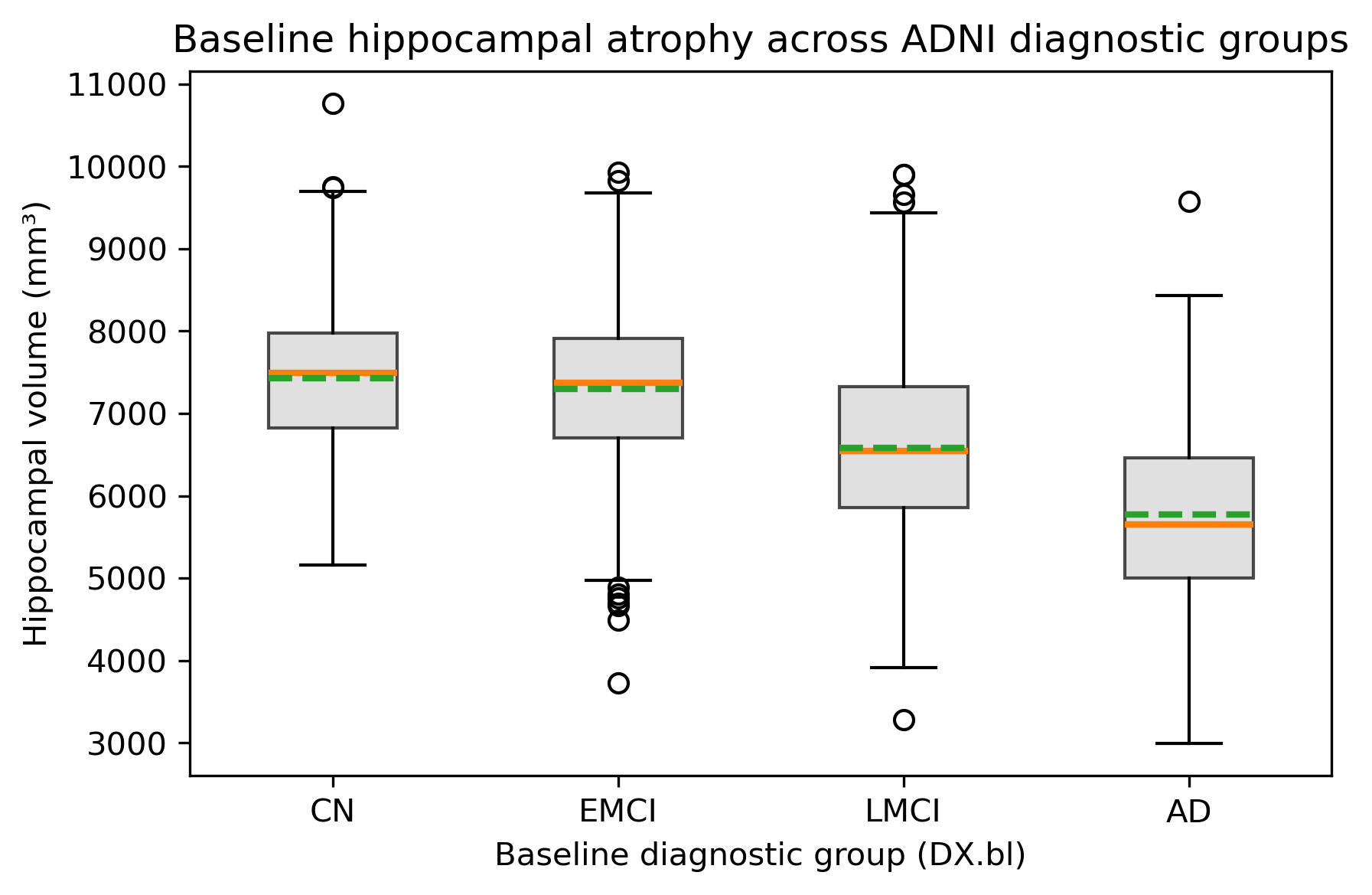}
    \caption{Baseline hippocampal volume across diagnostic groups in ADNI,
    illustrating the expected decline from cognitively normal to AD.}
    \label{fig:hippocampal-atrophy}
\end{figure}

The main empirical question is whether baseline information can recover
the adjustment mimicking the effect of directly using cerebrospinal fluid (CSF) biomarkers. This is a relevant use case because CSF assays are clinically informative but invasive, whereas demographic, genetic, and fluorodeoxyglucose (FDG) PET
measurements are more readily available. We use the full baseline diagnostic continuum with complete data (\(n=802\)):
cognitively normal (CN), early mild cognitive impairment (EMCI), late mild
cognitive impairment (LMCI), and AD. Let \(Y_i\) be hippocampal volume and let
the disease-status coefficients
\(\Delta_{\mathrm{EMCI}}, \Delta_{\mathrm{LMCI}}, \Delta_{\mathrm{AD}}\)
denote linear contrasts against CN. The disease-status indicators are the focal observed exposures in the outcome regression;
hippocampal volume is the outcome; age, education, APOE4, sex, and intracranial volume are commonly measured adjustment variables; and the fitted FA scores are candidate substitute adjustments learned from non-CSF proxies. The direct CSF benchmark additionally adjusts for log amyloid-\(\beta\), total tau,
and phosphorylated tau, with assay-limit strings parsed as their reported
limits before log transformation. This benchmark is not an intervention on CSF
biomarkers; it is an observed pathology adjustment that provides a clinically
interpretable reference for evaluating substitute confounder scores.

The factor models are fit with CSF withheld, using standardized age, education,
APOE4, and FDG as the FA inputs and \(K=2\) factors. Because FDG is the only FA
input not already included among the common outcome covariates, we also report
a measured-proxy comparator that adds FDG directly to the common adjustment.
This guards against attributing to factor analysis what is already achieved by
a single measured proxy. Collapse is diagnosed by the maximum absolute
correlation between each fitted factor score and each FA input; a factor is
flagged when this correlation exceeds \(0.95\). This diagnostic is particularly
important here because several FA inputs are also common outcome covariates: an age-aligned factor would be both a near single-cause summary and a redundant adjustment variable. ADNI's sampling design was designed to resemble an AD clinical-trial cohort, and published comparisons report that ADNI participants are healthier, more highly educated, predominantly white, and more often APOE-\(\varepsilon4\) positive
than community-based cohorts, with limited generalizability to lower-education
and ethnoculturally underrepresented groups
\citep{gianattasio2021generalizability,ashford2022screening}. This induced bias is also studied in our additional APOE4/education enrichment use-case in Appendix~\ref{sec:appendix-adni-diagnostics}. This is why we interpret the estimates reported as internal adjustment-recovery diagnostics within ADNI, not as transported disease effects.

\begin{table}[t!]
\centering
\small
\setlength{\tabcolsep}{6pt}
\caption{Cerebrospinal fluid (CSF)-withheld adjustment recovery in ADNI. The benchmark is the
diagnostic contrast after direct adjustment for log CSF amyloid-\(\beta\), tau,
and \(p\)-tau in addition to age, education, APOE4, sex, and intracranial
volume. The table averages over the EMCI, LMCI, and AD contrasts against CN.
Mean gap reduction is
\(1-|\widehat\Delta-\widehat\Delta_{\mathrm{CSF}}|/
|\widehat\Delta_{\mathrm{common}}-\widehat\Delta_{\mathrm{CSF}}|\), averaged
over the three contrasts.}
\label{tab:adni-csf-recovery}
\begin{tabular}{lccc}
\toprule
Method & Mean abs. gap (\(\mathrm{mm}^3\)) & Mean gap reduction & Collapsed factors \\
\midrule
Measured proxy adjustment & $63$ & $44\%$ & -- \\
Dense FA & $63$ & $43\%$ & -- \\
Sparse FA (NG) & $25$ & $69\%$ & -- \\
Sparse FA (MGP) & $58$ & $47\%$ & -- \\
Sparse FA (CSP) & $49$ & $55\%$ & age \\
\bottomrule
\end{tabular}
\end{table}

\begin{figure}[t!]
    \centering
    \includegraphics[width=0.99\linewidth]{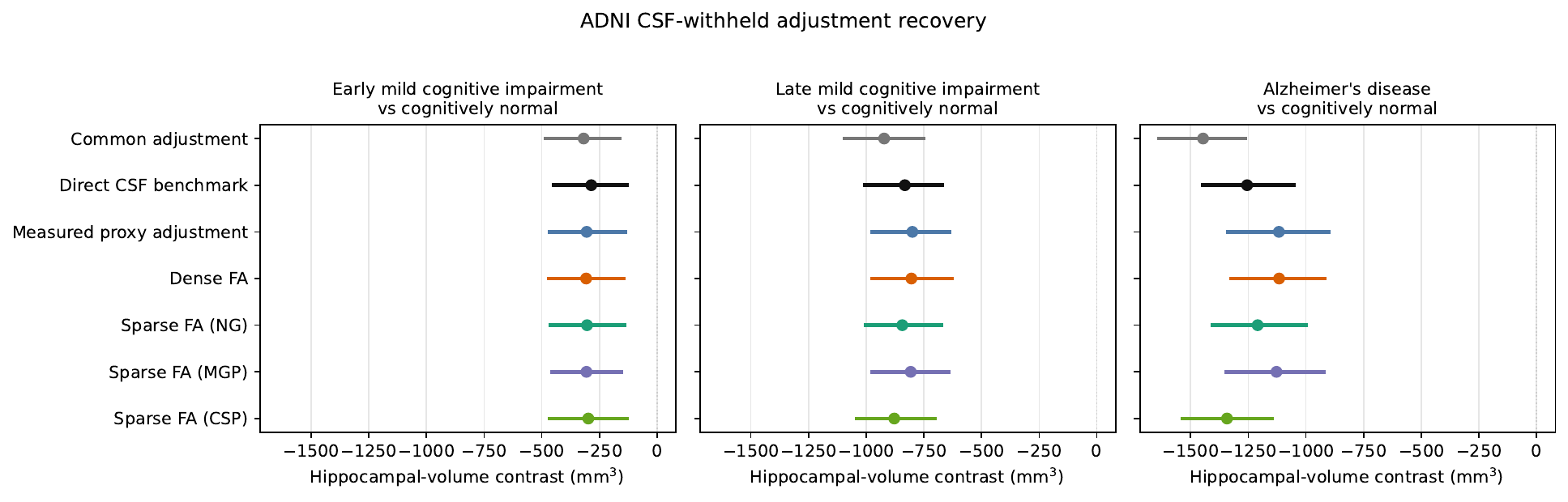}
    \caption{Posterior diagnostic contrasts for baseline hippocampal volume in
    the CSF-withheld ADNI analysis. The direct CSF benchmark adjusts for log
    amyloid-\(\beta\), tau, and \(p\)-tau; substitute confounder rows withhold
    those CSF biomarkers from the factor model.}
    \label{fig:adni-csf-recovery}
\end{figure}

The common adjustment $\hat{\Delta}_{\text{common}}$ and the direct CSF benchmark $\hat{\Delta}_{\text{CSF}}$ differ by approximately
\(33\), \(90\), and \(191\) \(\mathrm{mm}^3\) for the EMCI, LMCI, and AD contrasts,
respectively. Table~\ref{tab:adni-csf-recovery} and
Figure~\ref{fig:adni-csf-recovery} show that direct FDG adjustment and dense FA
give similar recovery of this CSF shift, with average gaps of about
\(63\) \(\mathrm{mm}^3\) from the direct CSF benchmark. The NG score is closest to the CSF benchmark, reducing the average gap to
\(25\) \(\mathrm{mm}^3\) and recovering \(69\%\) of the benchmark shift on average,
without triggering the collapse diagnostic. CSP partially recovers the CSF shift, but one
of its fitted scores has correlation \(0.987\) with age, compared with a
maximum age correlation of \(0.792\) for dense FA. In this benchmark, the NG shrinkage prior gives a substitute score that is closer to direct pathology adjustment than either FDG alone or dense FA, while the CSP row illustrates that shrinkage alone does not prevent a
factor from collapsing onto a single redundant covariate.

Appendix~\ref{sec:appendix-adni-diagnostics} reports an additional APOE4/education enrichment analysis, which is more naturally read as a stress test than as a second positive benchmark. Together with the CSF-withheld analysis, it shows why empirical substitute confounder workflows should report
both benchmark recovery and collapse diagnostics: not every clinically relevant adjustment is recoverable from the remaining measurements.
\section{Summary}

This paper develops a stability-first view of substitute confounders in high-dimensional multi-cause settings. Beyond asking whether a causal functional is identified given an \emph{ideal} latent confounder, we ask when estimation is stable after replacing that confounder by an estimate learned from the causes. Our contribution is to connect (i) outcome-side restrictions that define well-posed causal estimands under shared latent confounding with (ii) structured assignment modeling that yields overlap-preserving, multi-cause latent representations and controlled substitution error. In the Bayesian formulation, posterior concentration on assignment-compatible mechanisms and factor score contraction yield consistency of posterior-averaged regression-adjusted functionals, while the assignment posterior induces the geometry used for projection or regularization.

On the assignment side, Gaussian factor assignment models provide a transparent baseline in which conditional independence and residual variability are explicit. We emphasize two practical failure modes of overly flexible latent representations: \emph{over-encoding}, where $\hat{\vect Z}$ becomes nearly a sufficient statistic for $\vect X$ and conditional variability collapses, and \emph{under-sharing}, where latent coordinates behave like single-cause summaries. We then use structured regularization (via shrinkage priors and multi-cause loading constraints) to concentrate posterior mass on low effective rank and shared dependence patterns. The key quantity is factor score contraction: it directly controls how much error is induced when $\vect Z_i$ is replaced by $\hat{\vect Z}_i$ in downstream regression adjustment.

On the outcome side, we do not try to resolve the general
non-identifiability of arbitrary multi-cause effects from shared latent confounding alone.  Instead, we make the causal target geometry-aware.  In particular, we formalize projected restrictions that target causal variation along directions that are weakly confounded under the learned
geometry. This yields identifiable (or intentionally regularized) causal functionals without requiring that all directions in a high-dimensional treatment space be point-identified. The resulting pipeline is coherent: structured shrinkage stabilizes the learned confounding geometry; outcome restrictions define estimands compatible with that geometry; and plug-in stability bounds translate factor score contraction into controlled causal error.

The practical implication is that substitute confounder workflows should be evaluated not only by fit to $p(\vect X)$, but by whether the learned latent representation preserves conditional variability in $\vect X\mid \hat{\vect Z}$ and yields stable downstream estimation. The geometric perspective also clarifies why ridge-type and direction-specific penalties can be interpreted as overlap preservation rather than purely variance reduction: directions strongly aligned with latent confounding are precisely those where stable estimation is least robust.

Several limitations are intrinsic. First, identification requires outcome-side structure: projected restrictions deliberately exclude or shrink confounded directions, and thus should be interpreted as defining \emph{which causal
questions} are well posed under shared latent confounding. Second, our asymptotic results are ``large-$m$'' and treat factor score contraction as the primitive requirement; finite-sample performance may degrade when $m$ is moderate, loadings are dense, or nuisance structure is comparable in strength to the confounding signal. Third, factor models inherit equivalence and weak identification issues (e.g., rotations and near-redundant factors). Structured
shrinkage reduces these pathologies by breaking rotational symmetry and discouraging near-injective encodings, but sensitivity to prior choice and model misspecification remains and should be assessed via geometry-level diagnostics. Finally, we do not claim that fitting an assignment model to $\vect X$ implies ignorability; any such implication would require assumptions about the full data law. Our focus is on stability and on estimands compatible with learned confounding geometry.

Three extensions are natural: (i) \emph{Tensor and functional analogues:} replace $\vect X_i$ by structured treatments (tensors, trajectories with possible missigness \citep{luo2026wedge}) and couple multilinear or dynamic assignment models with geometry-aware projections; for functional outcomes, combine this with operator-valued regression or Fr\'echet-type targets to define estimands in Sobolev/RKHS settings. (ii) \emph{Kernelized geometry:} replace linear projections by kernel-induced operators, defining confounding geometry in an RKHS and targeting estimands that depend on weakly confounded components in that space; the main technical task is to propagate contraction from latent-score error to operator/projection error. (iii) \emph{Alternative semiparametric outcome restrictions:} beyond projected partially linear structure, one can use index restrictions, additive components, or deliberately regularized estimands defined as solutions to geometry-penalized risk minimization, interpolating between hard projection and full adjustment. Overall, substitute confounders are most defensible when paired with explicit estimands that acknowledge what is and is not identifiable under latent confounding, and with structured regularization that stabilizes learned confounding geometry.

\newpage
\begin{center}
{\LARGE\textbf{Supplementary Materials}}
\end{center}
\appendix
\section{Kallenberg construction of sparse FA models}\label{sec:Kallenberg-construction-proof}

In this section, we prove Lemma \ref{lem:kallenberg-implies-unconf}:
\begin{proof}
Let $U_i=(U_{i1},\dots,U_{im})$ and define the measurable map
\[
  F:\mathcal Z\times [0,1]^m\to \mathcal X, \qquad
  F(\vect z,\vect u)=\bigl(f_1(\vect z,u_1),\dots,f_m(\vect z,u_m)\bigr),
\]
so that $\vect X_i=F(\vect Z_i,U_i)$ almost surely.

Let $\mathcal A\subseteq \mathcal X$ be Borel and let $\mathcal B$ be a measurable
subset of the product space supporting the potential outcome table
$\{Y_i^{(\vect x)}:\vect x\in\mathcal X\}$. We prove conditional independence by showing
\[
  \Prob\!\left(\vect X_i\in \mathcal A,\ \{Y_i^{(\cdot)}\}\in \mathcal B \mid \vect Z_i\right)
  =
  \Prob\!\left(\vect X_i\in \mathcal A\mid \vect Z_i\right)\,
  \Prob\!\left(\{Y_i^{(\cdot)}\}\in \mathcal B \mid \vect Z_i\right),
\]
where $\{Y_i^{(\cdot)}\}$ denotes the entire potential outcome table.

Using $\vect X_i=F(\vect Z_i,U_i)$ and iterated conditional expectation,
\begin{align*}
\Prob\!\left(\vect X_i\in \mathcal A,\ \{Y_i^{(\cdot)}\}\in \mathcal B \mid \vect Z_i\right)
&=
\E\!\left[
\mathbf 1\{F(\vect Z_i,U_i)\in \mathcal A\}\,\mathbf 1\{\{Y_i^{(\cdot)}\}\in\mathcal B\}
\ \middle|\ \vect Z_i\right] \\
&=
\E\!\left[
\E\!\left[\mathbf 1\{F(\vect Z_i,U_i)\in \mathcal A\}\ \middle|\ \vect Z_i,\{Y_i^{(\cdot)}\}\right]
\mathbf 1\{\{Y_i^{(\cdot)}\}\in\mathcal B\}
\ \middle|\ \vect Z_i\right].
\end{align*}

By the conditional exogeneity assumption in Lemma~\ref{lem:kallenberg-implies-unconf},
$U_i\indep \{Y_i^{(\cdot)}\}\mid \vect Z_i$. Hence the inner conditional expectation
does not depend on $\{Y_i^{(\cdot)}\}$:
\[
\E\!\left[\mathbf 1\{F(\vect Z_i,U_i)\in \mathcal A\}\ \middle|\ \vect Z_i,\{Y_i^{(\cdot)}\}\right]
=
\E\!\left[\mathbf 1\{F(\vect Z_i,U_i)\in \mathcal A\}\ \middle|\ \vect Z_i\right]
=
\Prob(\vect X_i\in \mathcal A\mid \vect Z_i).
\]
Substituting back gives
\begin{align*}
\Prob\!\left(\vect X_i\in \mathcal A,\ \{Y_i^{(\cdot)}\}\in \mathcal B \mid \vect Z_i\right)
&=
\Prob(\vect X_i\in \mathcal A\mid \vect Z_i)\,
\E\!\left[\mathbf 1\{\{Y_i^{(\cdot)}\}\in\mathcal B\}\mid \vect Z_i\right] \\
&=
\Prob(\vect X_i\in \mathcal A\mid \vect Z_i)\,
\Prob(\{Y_i^{(\cdot)}\}\in\mathcal B\mid \vect Z_i),
\end{align*}
which is exactly $\vect X_i\indep \{Y_i^{(\vect x)}:\vect x\in\mathcal X\}\mid \vect Z_i$.
The marginal statement $Y_i^{(\vect x)}\indep \vect X_i\mid \vect Z_i$ for each fixed
$\vect x$ follows immediately.
\end{proof}

We also spell out the Gaussian factor representation used in
Section~\ref{sec:gaussian-factor-assignment}:
\begin{proof}
Fix a unit $i$. Under \eqref{eq:sparse-FA} we can write
\begin{align}
  X_{ij} = \lambda_{j\cdot}^\top \vect Z_i + \varepsilon_{ij},
  \qquad j=1,\dots,m,
\end{align}
where $\varepsilon_{ij}\sim \mathcal N(0,\sigma_j^2)$ are mutually independent
across $j$ and independent of $\vect Z_i$. Let $F_{\varepsilon_j}$ be the cdf of
$\mathcal N(0,\sigma_j^2)$ and define
\begin{align}
  U_{ij} := F_{\varepsilon_j}(\varepsilon_{ij}).
\end{align}
By the probability integral transform, $U_{ij}\sim \mathrm U(0,1)$. Mutual
independence across $j$ follows from the mutual independence of
$\{\varepsilon_{ij}\}_{j=1}^m$, and independence of $(U_{i1},\dots,U_{im})$ from
$\vect Z_i$ follows from independence of $\boldsymbol\varepsilon_i$ and
$\vect Z_i$. Moreover,
\begin{align}
  X_{ij}
  = \lambda_{j\cdot}^\top \vect Z_i + F_{\varepsilon_j}^{-1}(U_{ij})
  =: f_j(\vect Z_i,U_{ij}),
\end{align}
which is a Kallenberg construction of $(X_{i1},\dots,X_{im})$.

If in addition \eqref{eq:exogeneity-U} holds, then conditional on $\vect Z_i$ the
vector $(X_{i1},\dots,X_{im})$ is a measurable function of $(U_{i1},\dots,U_{im})$
and $\vect Z_i$, with $(U_{i1},\dots,U_{im})$ independent of the entire potential outcome
table given $\vect Z_i$. Therefore
\begin{align}
  (X_{i1},\dots,X_{im})
  \;\indep\;
  \{Y_i^{(\vect x)}:\vect x\in\mathcal X\}
  \;\big|\; \vect Z_i,
\end{align}
establishing weak unconfoundedness.

Finally, Assumption~\ref{assump:SC} is not needed for the construction, but it ensures that each latent coordinate corresponds to variation shared by at least two causes (no single-cause factors), matching the multi-cause confounding interpretation.
\end{proof}

\section{Stability under a consistent substitute}\label{sec:stability-under-consistent-substitute}
This appendix gives a conditional version of the simple stability idea used in
Proposition~\ref{prop:plugin-stability}.  The statement is deliberately restricted
to treatment strata with positive probability, such as discrete treatments or
coarsened treatment cells.  For continuous treatments, the same logic should be
applied to shrinking neighbourhoods of \(\vect x\), with the corresponding
regular conditional versions stated explicitly.

\begin{proposition}[Conditional stability on positive-probability strata]
\label{prop:ignorability-preserved}
Fix a treatment stratum \(A_{\vect x}:=\{\vect X_i=\vect x\}\) with
\(\Prob(A_{\vect x})>0\).  Let \(h\) be bounded.  Suppose:
\begin{enumerate}
\item \emph{Weak unconfoundedness}: for every treatment vector \(\vect x\),
\[
Y_i^{(\vect x)} \indep \vect X_i \mid \vect Z_i .
\]
\item \emph{Conditional score accuracy within the stratum}:
\[
\E\!\left[\|\hat{\vect Z}_i-\vect Z_i\|\mid A_{\vect x}\right]\to0 .
\]
\item \emph{Smooth conditional mean}: the function
\[
\phi_{\vect x}(\vect z)
:=\E\!\left[h(Y_i^{(\vect x)})\mid \vect Z_i=\vect z\right]
\]
is Lipschitz: for some finite \(L_h(\vect x)\),
\[
|\phi_{\vect x}(\vect z)-\phi_{\vect x}(\vect z')|
\le L_h(\vect x)\|\vect z-\vect z'\| .
\]
\end{enumerate}
Then
\begin{align}
&\E\!\left[
  \left|
  \E\!\left[h(Y_i^{(\vect x)})\mid A_{\vect x},\hat{\vect Z}_i\right]
  -
  \E\!\left[h(Y_i^{(\vect x)})\mid A_{\vect x},\vect Z_i\right]
  \right|
  \Bigm| A_{\vect x}
\right]
\notag\\
&\hspace{4em}\le
2L_h(\vect x)\,
\E\!\left[\|\hat{\vect Z}_i-\vect Z_i\|\mid A_{\vect x}\right]
\longrightarrow 0 .
\label{eq:conditional-stability}
\end{align}
\end{proposition}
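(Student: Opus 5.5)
Write $\mathcal F$ for conditioning on $A_{\vect x}$ together with $\vect Z_i$, and $\hat{\mathcal F}$ for conditioning on $A_{\vect x}$ together with $\hat{\vect Z}_i$. The plan is to reduce both conditional expectations to the function $\phi_{\vect x}$ and then compare $\phi_{\vect x}(\vect Z_i)$ with $\phi_{\vect x}(\hat{\vect Z}_i)$ via the Lipschitz condition.

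\textbf{Step 1: identify the first conditional expectation.} By weak unconfoundedness, $Y_i^{(\vect x)}\indep \vect X_i\mid \vect Z_i$. Since $A_{\vect x}$ is an event in $\sigma(\vect X_i)$ with positive probability, this gives
\[
\E[h(Y_i^{(\vect x)})\mid A_{\vect x},\vect Z_i]=\phi_{\vect x}(\vect Z_i)
\]
a.s.\ on $A_{\vect x}$. To justify it, I would check the defining identity $\E[h\,\mathbf 1_{A_{\vect x}}\mathbf 1_{\vect Z_i\in B}]=\E[\phi_{\vect x}(\vect Z_i)\mathbf 1_{A_{\vect x}}\mathbf 1_{\vect Z_i\in B}]$. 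This follows from conditioning on $\vect Z_i$ and factorising using the independence. Boundedness of $h$ ensures everything is integrable.

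\textbf{Step 2: write the second expectation through a tower step.} The key observation is the tower property:
\[
\E[h\mid\hat{\mathcal F}]=\E\bigl[\E[h\mid A_{\vect x},\vect Z_i,\hat{\vect Z}_i]\mid \hat{\mathcal F}\bigr].
\]
I would like the inner term to equal $\phi_{\vect x}(\vect Z_i)$. This needs $\hat{\vect Z}_i$ to carry no extra information about $Y_i^{(\vect x)}$ beyond $(\vect X_i,\vect Z_i)$. That holds when $\hat{\vect Z}_i$ is a measurable function of $\vect X_{1:n}$ and the other units are independent of unit $i$'s potential outcomes given $(\vect X_i,\vect Z_i)$. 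This is the main obstacle: the statement does not list it explicitly, so I would either invoke it as implicit in the construction $\hat{\vect Z}_i=f_{nm}(\vect X_i;\vect X_{1:n})$ with i.i.d.\ units, or add it as a standing condition. Granting it, $\E[h\mid\hat{\mathcal F}]=\E[\phi_{\vect x}(\vect Z_i)\mid\hat{\mathcal F}]$.

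\textbf{Step 3: insert and subtract $\phi_{\vect x}(\hat{\vect Z}_i)$.} Since $\phi_{\vect x}(\hat{\vect Z}_i)$ is $\hat{\mathcal F}$-measurable, the difference to bound is
\[
\bigl|\E[\phi_{\vect x}(\vect Z_i)-\phi_{\vect x}(\hat{\vect Z}_i)\mid\hat{\mathcal F}]\bigr|+\bigl|\phi_{\vect x}(\hat{\vect Z}_i)-\phi_{\vect x}(\vect Z_i)\bigr|.
\]
Apply conditional Jensen and the Lipschitz bound to each term. Each is at most $L_h(\vect x)$ times $\E[\|\hat{\vect Z}_i-\vect Z_i\|\mid\hat{\mathcal F}]$ or $\|\hat{\vect Z}_i-\vect Z_i\|$, respectively.

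\textbf{Step 4: take the outer expectation.} Take $\E[\cdot\mid A_{\vect x}]$ and use the tower property, valid because $\hat{\mathcal F}$ refines the conditioning on $A_{\vect x}$. Both terms collapse to $L_h(\vect x)\E[\|\hat{\vect Z}_i-\vect Z_i\|\mid A_{\vect x}]$, giving the constant $2$. Convergence to zero then follows from the conditional score accuracy assumption.
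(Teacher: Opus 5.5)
Your argument is correct and is essentially the paper's own proof. The paper also identifies $\E[h(Y_i^{(\vect x)})\mid A_{\vect x},\vect Z_i]=\phi_{\vect x}(\vect Z_i)$ via weak unconfoundedness, inserts and subtracts the $\sigma(A_{\vect x},\hat{\vect Z}_i)$-measurable term $\phi_{\vect x}(\hat{\vect Z}_i)$, bounds both pieces with the Lipschitz condition, and then takes $\E[\cdot\mid A_{\vect x}]$ to obtain the factor $2$.

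The one difference is in your favour. The paper silently replaces $\E[h(Y_i^{(\vect x)})\mid A_{\vect x},\hat{\vect Z}_i]$ by $\E[\phi_{\vect x}(\vect Z_i)\mid A_{\vect x},\hat{\vect Z}_i]$, and this step really does need the condition you isolate in Step 2, for instance that $\hat{\vect Z}_i$ is a measurable function of $\vect X_{1:n}$ with i.i.d.\ units. Without it, a score that encodes $h(Y_i^{(\vect x)})$ in a perturbation of size $\delta\to0$ around a discrete $\vect Z_i$ satisfies all three stated hypotheses, yet it can keep the left-hand side bounded away from zero.
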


\begin{proof}
Fix $\vect x$ and a bounded measurable function $h$.
By weak unconfoundedness,
\[
\E\!\left[h(Y_i^{(\vect x)})\mid A_{\vect x},\vect Z_i\right]
=\E\!\left[h(Y_i^{(\vect x)})\mid \vect Z_i\right]
=\phi_{\vect x}(\vect Z_i).
\]
Let \(\mathcal G=\sigma(A_{\vect x},\hat{\vect Z}_i)\).  Since
\(\phi_{\vect x}(\hat{\vect Z}_i)\) is \(\mathcal G\)-measurable,
\begin{align*}
&\left|
\E[\phi_{\vect x}(\vect Z_i)\mid\mathcal G]
-\phi_{\vect x}(\vect Z_i)
\right| \\
&\qquad\le
\left|\E[\phi_{\vect x}(\vect Z_i)-\phi_{\vect x}(\hat{\vect Z}_i)\mid\mathcal G]\right|
+
\left|\phi_{\vect x}(\hat{\vect Z}_i)-\phi_{\vect x}(\vect Z_i)\right| .
\end{align*}
Taking conditional expectations given \(A_{\vect x}\) and using the Lipschitz
condition gives the displayed bound.  The bound converges to zero by conditional
score accuracy.
\end{proof}

\section{Consistency of regression adjusted mean potential outcomes}\label{sec:proof-mu-consistency}
Next, we formally prove Theorem~\ref{thm:mu-consistency-ATE}.
\begin{proof}
Fix $\vect x\in\mathcal X$. Let $P_0$ denote the data-generating law of
$\big(\{Y^{(\vect x)}:\vect x\in\mathcal X\},\vect X,\vect Z\big)$ for a generic unit,
and write $\E_0,\Prob_0$ for expectation and probability under $P_0$.  In this
proof we write $\hat{\vect Z}_i$ for $\hat{\vect Z}_{i,nm}$ to simplify notation.

By iterated expectations, latent ignorability, positivity, and consistency,
\begin{equation}
  \mu(\vect x)
  =
  \E_0\bigl[ r(\vect x,\vect Z_i) \bigr].
  \label{eq:mu-x-id}
\end{equation}
Indeed,
\[
  \E_0\{Y_i^{(\vect x)}\mid \vect Z_i=\vect z\}
  =
  \E_0\{Y_i\mid \vect X_i=\vect x,\vect Z_i=\vect z\}
  =r(\vect x,\vect z).
\]

From \eqref{eq:mu-hat-ATE},
\begin{equation}
  \hat\mu_n(\vect x)-\mu(\vect x)
  =
  \underbrace{\frac1n\sum_{i=1}^n
    \{\hat r_n(\vect x,\hat{\vect Z}_i)-r(\vect x,\hat{\vect Z}_i)\}}
    _{\text{(I)}}
  +
  \underbrace{\left\{\frac1n\sum_{i=1}^n r(\vect x,\hat{\vect Z}_i)
      -\E_0[r(\vect x,\vect Z_i)]\right\}}
    _{\text{(II)}} .
  \label{eq:mu-decomp}
\end{equation}

For (I), the average calibration condition in Theorem~\ref{thm:mu-consistency-ATE}
gives
\[
  \E_0|\text{(I)}|
  \le
  \E_0\left[\frac1n\sum_{i=1}^n
  |\hat r_n(\vect x,\hat{\vect Z}_i)-r(\vect x,\hat{\vect Z}_i)|\right]
  =o(1).
\]
Thus \(\text{(I)}\to0\) in \(\Prob_0\) by Markov's inequality.

For (II), write
\begin{equation}
  \text{(II)}=
  \underbrace{\frac1n\sum_{i=1}^n
  \{r(\vect x,\hat{\vect Z}_i)-r(\vect x,\vect Z_i)\}}
  _{\text{(IIa)}}
  +
  \underbrace{\left\{\frac1n\sum_{i=1}^n r(\vect x,\vect Z_i)
  -\E_0[r(\vect x,\vect Z_i)]\right\}}
  _{\text{(IIb)}} .
  \label{eq:II-split}
\end{equation}
Condition (C4) gives \(\text{(IIb)}\to0\) in \(L^1(P_0)\), and hence in
\(\Prob_0\), along the joint sequence.  For (IIa), Lipschitz continuity and Cauchy-Schwarz imply
\begin{align}
  \E_0|\text{(IIa)}|
  &\le
  L(\vect x)\,
  \E_0\left[\frac1n\sum_{i=1}^n\|\hat{\vect Z}_i-\vect Z_i\|\right]
  \nonumber\\
  &\le
  L(\vect x)\,
  \E_0\left[\left\{\frac1n\sum_{i=1}^n
  \|\hat{\vect Z}_i-\vect Z_i\|^2\right\}^{1/2}\right]
  \nonumber\\
  &\le
  L(\vect x)\,
  \left(\E_0\left[\frac1n\sum_{i=1}^n
  \|\hat{\vect Z}_i-\vect Z_i\|^2\right]\right)^{1/2}
  \le L(\vect x)\varepsilon_{n,m}.
  \label{eq:IIa-mean-bound}
\end{align}
Therefore \(\text{(IIa)}\to0\) in \(\Prob_0\) whenever
\(\varepsilon_{n,m}\to0\).  Combining the three parts proves
\[
  \hat\mu_n(\vect x)\xrightarrow{\Prob_0}\mu(\vect x)
\]
along the joint sequence.

Taking expectations in \eqref{eq:mu-decomp} and using the triangle inequality,
\begin{align}
  \bigl|\E_0[\hat\mu_n(\vect x)]-\mu(\vect x)\bigr|
  &\le
  \E_0|\text{(I)}|+\E_0|\text{(IIa)}|+\E_0|\text{(IIb)}|.
\end{align}
The first term is \(o(1)\) by average calibration, and the second term is bounded
by \(L(\vect x)\varepsilon_{n,m}\) by \eqref{eq:IIa-mean-bound}.  The third term is \(o(1)\) by condition (C4).  This yields
\[
  \bigl|\E_0[\hat\mu_n(\vect x)]-\mu(\vect x)\bigr|
  \le L(\vect x)\varepsilon_{n,m}+o(1),
\]
where the remainder is along the same joint sequence.
\end{proof}

\section{Outcome-side restrictions and identifiability: proofs and additional examples}
\label{sec:appendix-outcome-identifiability}

This appendix provides proofs for the identification results in
Section~\ref{subsec:outcome-identifiability} and records additional outcome-side restrictions that yield identifiable or well-posed (regularized) causal estimands under shared latent confounding.

Throughout, consider the Gaussian factor assignment model
\begin{equation}
\label{eq:appendix-factor}
  \vect X_i = \Lambda Z_i + \varepsilon_i,
  \qquad
  Z_i \sim \mathcal N_H(0,I_H),
  \qquad
  \varepsilon_i \sim \mathcal N_m(0,\Psi),
\end{equation}
where $\Psi \succ 0$ is diagonal and $\varepsilon_i \indep Z_i$.
Let $\mathcal C := \operatorname{span}(\Lambda) \subset \mathbb R^m$, and
denote by $P_{\mathcal C}$ and $P_{\mathcal C}^\perp := I_m - P_{\mathcal C}$
the orthogonal projections onto $\mathcal C$ and its orthogonal complement. Under the factor assignment model \eqref{eq:appendix-factor} with
$\mathcal C=\operatorname{span}(\Lambda)$, we have $P_{\mathcal C}^\perp\Lambda=0$
and therefore
\begin{equation}
\label{eq:orth-identity}
P_{\mathcal C}^\perp \vect X_i
=
P_{\mathcal C}^\perp(\Lambda Z_i+\varepsilon_i)
=
P_{\mathcal C}^\perp \varepsilon_i .
\end{equation}
Since $\varepsilon_i \indep Z_i$ and $\mathbb E[\varepsilon_i]=0$, it follows that
\begin{equation}
\label{eq:orth-mean}
\mathbb E[P_{\mathcal C}^\perp \vect X_i \mid Z_i]=0,
\qquad
\Var(P_{\mathcal C}^\perp \vect X_i)=P_{\mathcal C}^\perp\Psi P_{\mathcal C}^\perp.
\end{equation}
Under Gaussianity, \eqref{eq:orth-identity} also implies
$P_{\mathcal C}^\perp \vect X_i \indep Z_i$.

%------------------------------------------------------------
\subsection{Proof of Proposition~\ref{prop:proj-identification}}
\label{sec:appendix-proof-proj-id}

\begin{proof}
Let $B_{\mathcal C}\in\mathbb R^{m\times r}$ be an orthonormal basis for
$\mathcal C^\perp$, let $W:=B_{\mathcal C}^\top\vect X$, and let
$U:=P_{\mathcal C}^\perp\vect X=B_{\mathcal C}W$. By assumption,
$\E[U\mid\vect Z]=0$, hence $\E[W\mid\vect Z]=0$. Write
$\beta_\perp=P_{\mathcal C}^\perp\beta$ and
$\gamma:=B_{\mathcal C}^\top\beta_\perp$, so that
$\beta_\perp^\top U=\gamma^\top W$.

By Assumption~\ref{ass:proj-outcome},
\begin{align}
  \E[Y\mid\vect X,\vect Z]=g(\vect Z)+\gamma^\top W .
\end{align}
Taking covariance with $W$ and using iterated expectation gives
\begin{align*}
  \Cov(W,Y)
  &=\Cov\!\left(W,\E[Y\mid\vect X,\vect Z]\right) \\
  &=\Cov\!\left(W,g(\vect Z)+\gamma^\top W\right) \\
  &=\Cov(W,g(\vect Z))+\Var(W)\gamma.
\end{align*}
The first term is zero because
$\E[W\mid\vect Z]=0$:
\[
  \E\{Wg(\vect Z)\}=\E\{\E[W\mid\vect Z]g(\vect Z)\}=0 .
\]
Since $\Var(W)$ is nonsingular, \eqref{eq:gamma-proj-id} follows:
\[
  \gamma=\Var(W)^{-1}\Cov(W,Y),
  \qquad
  \beta_\perp=B_{\mathcal C}\gamma.
\]
Equivalently, because $\Var(U)=B_{\mathcal C}\Var(W)B_{\mathcal C}^\top$ has
range $\mathcal C^\perp$,
\[
  \Var(U)^+=B_{\mathcal C}\Var(W)^{-1}B_{\mathcal C}^\top,
  \qquad
  \beta_\perp=\Var(U)^+\Cov(U,Y).
\]
All quantities on the right are functions of the observed law of $(Y,\vect X)$
and the fixed projection $P_{\mathcal C}^\perp$.

For the projected mean, Assumption~\ref{ass:proj-outcome} gives
\begin{align}
  \E[Y^{(\vect x)}\mid\vect Z]
  =g(\vect Z)+\beta_\perp^\top P_{\mathcal C}^\perp\vect x .
\end{align}
Taking expectations over $\vect Z$ gives
\begin{align}
  \E[Y^{(\vect x)}]
  =\E[g(\vect Z)]+\beta_\perp^\top P_{\mathcal C}^\perp\vect x .
\end{align}
Finally,
\begin{align}
  \E[Y-\beta_\perp^\top U]
  =\E[g(\vect Z)],
\end{align}
since $\E[U\mid\vect Z]=0$. Substitution yields
\eqref{eq:mu-perp-id}. The expression is denoted $\mu_\perp(\vect x)$ in the
main text to emphasize that only variation in $\mathcal C^\perp$ is identified;
under Assumption~\ref{ass:proj-outcome} it coincides with the corresponding mean
potential outcome.
\end{proof}

\subsection{Soft geometry regularization of outcome effects}
\label{sec:appendix-soft-geo}

In contrast to the projected restrictions of
Section~\ref{subsec:outcome-identifiability}, which exclude confounded
directions from the causal estimand, one may instead allow effects along
the confounding subspace while explicitly shrinking them. This yields
well-posed but intentionally regularized causal estimands.

Suppose the outcome regression is linear in the causes conditional on the
latent confounder:
\begin{equation}
\label{eq:soft-linear}
\mathbb E[Y \mid \vect X, \vect Z]
=
g(\vect Z) + \beta^\top \vect X,
\end{equation}
where $g$ is unrestricted. Under shared latent confounding, $\beta$ is
not point-identified without additional structure. We therefore define a
geometry-aware estimand by penalizing directions aligned with the
confounding subspace.

Let $\mathcal C \subset \mathbb R^m$ denote the confounding subspace and
$P_{\mathcal C}$ the orthogonal projection onto $\mathcal C$. For
$\lambda>0$, define
\begin{equation}
\label{eq:beta-lambda-def}
(\alpha_\lambda,\beta_\lambda)
:=
\arg\min_{\alpha\in\mathbb R,\,\beta \in \mathbb R^m}
\left\{
\mathbb E\!\left[(Y-\alpha-\beta^\top \vect X)^2\right]
+
\lambda\,\beta^\top P_{\mathcal C}\beta
\right\}.
\end{equation}
The intercept is unpenalized.  The criterion shrinks coefficients aligned with
$\mathcal C$ while leaving directions in $\mathcal C^\perp$ unpenalized.

\begin{proposition}[Well-posed geometry-regularized estimand]
\label{prop:soft-geo}
Suppose the assignment model satisfies
$\vect X = \Lambda \vect Z + \varepsilon$ with $\Var(\varepsilon)=\Psi \succ 0$.
Then for every $\lambda>0$:
\begin{enumerate}[leftmargin=1.2em]
\item the slope minimizer $\beta_\lambda$ in \eqref{eq:beta-lambda-def} exists and
      is unique, with intercept $\alpha_\lambda=\E(Y)-\beta_\lambda^\top\E(\vect X)$;
\item $\beta_\lambda$ satisfies the normal equation
      \[
      \bigl(\Var(\vect X) + \lambda P_{\mathcal C}\bigr)\beta_\lambda
      =
      \Cov(\vect X,Y),
      \]
      where all variances, covariances, and expectations are taken under the
      observed law of $(\vect X,Y)$;
\item the penalty term has the Bayesian interpretation of a Gaussian prior on
      the confounded component $P_{\mathcal C}\beta$ with precision proportional
      to $\lambda$, together with a flat (improper) prior on
      $P_{\mathcal C}^\perp\beta$.  If a proper Gaussian prior is desired in all
      directions, one may add an isotropic ridge term $\lambda_0\|\beta\|^2$ to
      \eqref{eq:beta-lambda-def}; the resulting prior precision is proportional
      to $\lambda_0 I_m+\lambda P_{\mathcal C}$.
\end{enumerate}
\end{proposition}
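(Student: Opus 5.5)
First I would eliminate the intercept. For fixed $\beta$, the objective in \eqref{eq:beta-lambda-def} is a convex quadratic in $\alpha$ and the penalty does not involve $\alpha$. So the optimal intercept is $\alpha(\beta)=\E(Y)-\beta^\top\E(\vect X)$. Substituting it back gives the profiled criterion
\[
  Q_\lambda(\beta)=\Var(Y)-2\beta^\top\Cov(\vect X,Y)+\beta^\top\bigl(\Var(\vect X)+\lambda P_{\mathcal C}\bigr)\beta .
\]
This step needs only finite second moments of $(\vect X,Y)$, which I take as implicit in the problem being well defined. It already delivers the intercept formula in item 1 once the slope is pinned down.

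The key step is to show that $M_\lambda:=\Var(\vect X)+\lambda P_{\mathcal C}$ is positive definite. Under the assignment model, with $\vect Z$ of identity covariance as in \eqref{eq:appendix-factor}, we have $\Var(\vect X)=\Lambda\Lambda^\top+\Psi$. (If $\Var(\vect Z)$ is some other positive semidefinite matrix $\Sigma_Z$, use $\Lambda\Sigma_Z\Lambda^\top+\Psi$ instead; the argument is unchanged.) Since $\Psi\succ0$, $\Var(\vect X)\succeq\Psi\succ0$. Adding $\lambda P_{\mathcal C}\succeq0$ then gives $M_\lambda\succ0$ for every $\lambda>0$. It follows that $Q_\lambda$ is strictly convex and coercive, so it has a unique minimizer. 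Setting the gradient to zero gives $M_\lambda\beta_\lambda=\Cov(\vect X,Y)$, hence $\beta_\lambda=M_\lambda^{-1}\Cov(\vect X,Y)$. This proves existence, uniqueness, and item 2. All moments entering this formula are moments of the observed law of $(\vect X,Y)$; $P_{\mathcal C}$ is treated as fixed, as in Proposition~\ref{prop:proj-identification}. Note that positive definiteness holds even for $\lambda=0$, so the penalty is not needed for well-posedness. I would remark that its role is the directional shrinkage, and that the condition $\lambda>0$ matters only for item 3's interpretation.

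For item 3, I would use the sample analogue. Take a Gaussian likelihood $Y_i\mid\vect X_i\sim\mathcal N(\alpha+\beta^\top\vect X_i,\sigma^2)$ and an improper prior density $\propto\exp\{-\tfrac{\kappa}{2}\beta^\top P_{\mathcal C}\beta\}$. This prior is flat in $\alpha$ and in $P_{\mathcal C}^\perp\beta$, and Gaussian with precision $\kappa$ on $P_{\mathcal C}\beta\in\mathcal C$, because $\beta^\top P_{\mathcal C}\beta=\|P_{\mathcal C}\beta\|^2$. The negative log posterior is then $\frac{1}{2\sigma^2}\sum_i(y_i-\alpha-\beta^\top\vect x_i)^2+\frac{\kappa}{2}\|P_{\mathcal C}\beta\|^2$. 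Its mode minimizes the empirical version of \eqref{eq:beta-lambda-def} with $\lambda=\kappa\sigma^2/n$, so the precision is proportional to $\lambda$. Adding $\lambda_0\|\beta\|^2$ gives precision matrix $\propto\lambda_0I_m+\lambda P_{\mathcal C}$, which is positive definite and so defines a proper prior.

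The main obstacle is interpretive rather than technical: the population criterion in \eqref{eq:beta-lambda-def} corresponds to a posterior mode only in this empirical, scaled sense. I would state that explicitly rather than claim an exact population-level prior.
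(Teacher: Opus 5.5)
Your proposal is correct and follows the paper's own proof. Both profile out the intercept and use $\Var(\vect X)=\Lambda\Lambda^\top+\Psi\succeq\Psi\succ0$ to make $\Var(\vect X)+\lambda P_{\mathcal C}$ positive definite, which gives uniqueness and the normal equation; both then appeal to Gaussian conjugacy for item 3. Your explicit empirical posterior-mode calculation with $\lambda=\kappa\sigma^2/n$, and your observation that well-posedness does not actually require $\lambda>0$, are accurate refinements of points the paper states only briefly.
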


\begin{proof}
For fixed $\beta$, the unpenalized intercept is
$\alpha_\beta=\E(Y)-\beta^\top\E(\vect X)$.  Profiling out the intercept reduces
\eqref{eq:beta-lambda-def} to minimizing
\[
\Var(Y-\beta^\top\vect X)+\lambda\beta^\top P_{\mathcal C}\beta .
\]
The Hessian in $\beta$ is proportional to
\[
\Var(\vect X) + \lambda P_{\mathcal C}
=
\Lambda\Lambda^\top + \Psi + \lambda P_{\mathcal C},
\]
which is positive definite because $\Psi\succ0$. Hence the slope minimizer is
unique.  Differentiating the profiled objective with respect to $\beta$ and
setting the gradient to zero yields the stated normal equation.  The Bayesian
interpretation follows from standard conjugacy in Gaussian linear regression:
a quadratic penalty only on $P_{\mathcal C}\beta$ corresponds to a Gaussian prior
on that projected component and a flat prior on the orthogonal component.  Adding
an isotropic ridge term makes the prior proper in every direction and gives the
stated precision form.
\end{proof}

The coefficient $\beta_\lambda$ is not a point-identified causal effect
vector. Instead, it defines a geometry-regularized estimand that shrinks
directions of the treatment space aligned with latent confounding. As
$\lambda \to \infty$, effects along $\mathcal C$ are suppressed, recovering
the hard projection considered in
Section~\ref{subsec:outcome-identifiability}; for finite $\lambda$, effects
along $\mathcal C$ are allowed but explicitly regularized.

In the main text, the confounding geometry $\mathcal C$ is either assumed known
or identified under additional assignment-model conditions. In the Bayesian
setting it may be represented for regularization by the posterior operator
$H_n \approx \mathbb E[\Lambda\Lambda^\top \mid \vect X_{1:n}]$. Replacing
$P_{\mathcal C}$ by $H_n$ yields a direction-specific ridge penalty aligned
with posterior confounding geometry, but this substitution should be interpreted
as regularization unless subspace-identification assumptions are imposed.
This mirrors the duality between imbalance penalties and ridge adjustment
formalized in augmented balancing weights \cite{bruns2025augmented}.

\section{Proof of Theorem~\ref{thm:bayes-mu-consistency}}
\label{sec:appendix-bayes-consistency}

This appendix proves Theorem~\ref{thm:bayes-mu-consistency} under the explicit
posterior concentration, calibration, Lipschitz, and envelope conditions stated
in the main text. We reuse the plug-in stability principle of
Proposition~\ref{prop:plugin-stability} to control the effect of replacing
$\vect Z_i$ by posterior draw-specific factor scores
$\hat{\vect Z}_i(\theta_X)$.

To target $\mu(\vect x)=\E[Y_i^{(\vect x)}]$ in the latent confounder framework,
the relevant regression functional is $\E[r(\vect x,\vect Z_i)]$ rather than a conditional functional that integrates against $p(\vect z\mid\vect X=\vect x)$. Accordingly, throughout this appendix, the Bayesian procedure produces the posterior-averaged regression-adjusted functional
\begin{equation}
\label{eq:bayes-target-appendix}
  \mu_n^{\mathrm{pp}}(\vect x)
  :=
  \int
    \Biggl\{
      \frac{1}{n}\sum_{i=1}^n
      r_{\theta_Y}\!\bigl(\vect x,\hat{\vect Z}_i(\theta_X)\bigr)
    \Biggr\}\,
  \widetilde\Pi_n(d\theta),
\end{equation}
where $\widetilde\Pi_n$ denotes either the joint posterior or the cut posterior used by the
procedure, and $\hat{\vect Z}_i(\theta_X)$ is a measurable factor score summary under the assignment draw $\theta_X$, for example $\E_{\theta_X}(\vect Z_i\mid\vect X_i)$. The fixed posterior mean score is a special case obtained by replacing $\hat{\vect Z}_i(\theta_X)$ with its assignment posterior average.

Let $\theta_0$ denote the true data-generating parameter; throughout this
appendix $P_{\theta_0}$ and $\E_{\theta_0}$ are the same data-generating
probability and expectation denoted by $P_0$ and $\E_0$ in the main text. Define
\begin{align}
  r_{\theta_0}(\vect x,\vect z)
  := \E_{\theta_0}\!\left[Y_i \mid \vect X_i=\vect x,\vect Z_i=\vect z\right].
\end{align}
Under latent ignorability and positivity (Assumption~\ref{ass:latent-ign-positivity}),
the mean potential outcome satisfies the standard representation
\begin{equation}
\label{eq:mu-id-appendix}
  \mu(\vect x)
  =
  \E_{\theta_0}\!\left[r_{\theta_0}(\vect x,\vect Z_i)\right].
\end{equation}
Theorem~\ref{thm:bayes-mu-consistency} assumes the corresponding oracle
empirical-average condition for
\(r_{\theta_0}(\vect x,\vect Z_i)\); the stronger sufficient condition
\(\sup_m\E_{\theta_0}\{r_{\theta_0}(\vect x,\vect Z_i)^2\}<\infty\) is often the
simplest way to verify it along a joint sequence.

Theorem~\ref{thm:bayes-mu-consistency} is stated at a high level.  For a
transparent proof, we make explicit the posterior concentration needed
to ensure that posterior-averaged regression functionals are close to the
truth.

\begin{assumption}[Posterior concentration on assignment-compatible mechanisms]
\label{ass:bayes-conc-ID}
Let $\mathcal M_{\mathrm{A}}$ be as in Definition~\ref{def:posterior-ID}. Then
\begin{align}
  \widetilde\Pi_{n,X}\!\bigl(\mathcal M_{\mathrm{A}}^c\bigr)
  \xrightarrow{P_{\theta_0}} 0 .
\end{align}
\end{assumption}

\noindent
Assumption~\ref{ass:bayes-conc-ID} is not used to control the regression
error directly; rather, it justifies that the posterior mass asymptotically
concentrates on assignment mechanisms for which (i) the substitute confounderrepresentation is well-defined and (ii) the factor score summaries $\hat{\vect Z}_i$ correspond to \emph{assignment-compatible} latent variables.  In particular, it rules out posterior mass on degenerate or near-invertible latent representations that would undermine latent overlap
and invalidate substitute confounder adjustment. The convergence argument below is then driven by the contraction of factor scores and calibration of the outcome regression, conditional on the posterior concentrating on such well-behaved mechanisms.

\begin{assumption}[Outcome posterior calibration on relevant latent values]
\label{ass:bayes-contraction-theta0}
For each fixed $\vect x\in\mathcal X$,
\begin{equation}
\label{eq:bayes-m-unif}
  \E_{\theta_0}\!\left[
  \int \frac1n\sum_{i=1}^n
  \bigl|r_{\theta_Y}(\vect x,\vect Z_i)-r_{\theta_0}(\vect x,\vect Z_i)\bigr|\,
  \widetilde\Pi_n(d\theta)
  \right]
  \longrightarrow 0 .
\end{equation}
\end{assumption}

\noindent
Assumption~\ref{ass:bayes-contraction-theta0} encodes the required posterior calibration of the outcome regression on the latent values that actually occur in the sample.  This is weaker than asking for uniform accuracy over the whole latent space $\mathcal Z$, which may be unrealistic when $\mathcal Z=\mathbb R^H$.

\begin{assumption}[Posterior factor score contraction]
\label{ass:bayes-score-contract}
There exists a sequence $\varepsilon_{n,m}\downarrow0$ along the joint asymptotic sequence such that the posterior
factor score summaries satisfy
\[
  \E_{\theta_0}\!\left[
  \int \frac1n\sum_{i=1}^n
  \|\hat{\vect Z}_i(\theta_X)-\vect Z_i\|^2\,
  \widetilde\Pi_n(d\theta)
  \right]
  \le \varepsilon_{n,m}^2+o(1),
  \qquad
  \varepsilon_{n,m}\to0 \quad ((n,m)\to\infty).
\]
If the factor model is identifiable only up to rotation, the norm is interpreted
after imposing the chosen orientation constraint, or with an infimum over the
admissible orthogonal transformations.
\end{assumption}

\begin{assumption}[Lipschitz outcome regression on posterior-typical parameter sets]
\label{ass:bayes-lip}
Fix $\vect x\in\mathcal X$.  There exists $L(\vect x)<\infty$ and sets
$\Theta_n(\vect x)\subseteq \Theta$ such that
$\widetilde\Pi_n(\Theta_n(\vect x)^c)\xrightarrow{P_{\theta_0}}0$ and, for all
$\theta\in\Theta_n(\vect x)$ and all $\vect z,\vect z'\in\mathcal Z$,
\begin{align}
  \bigl|r_{\theta_Y}(\vect x,\vect z)-r_{\theta_Y}(\vect x,\vect z')\bigr|
  \le L(\vect x)\,\|\vect z-\vect z'\|.
\end{align}
\end{assumption}

\begin{assumption}[Square-integrable envelope]
\label{ass:bayes-envelope}
Fix $\vect x\in\mathcal X$. There exist nonnegative random variables
$M_{i,n}(\vect x)$, allowed to depend on the data-generating triangular array and
on the fitted-score process but not on the posterior draw once the data are
fixed, such that, with $P_{\theta_0}$-probability one, the following bound holds
for all posterior parameter values $\theta$ in the support of $\widetilde\Pi_n$:
\begin{align}
  \sup_{\theta\in\operatorname{supp}(\widetilde\Pi_n)}
  \left\{
  \bigl|r_{\theta_Y}(\vect x,\vect Z_i)\bigr|
  +
  \bigl|r_{\theta_Y}(\vect x,\hat{\vect Z}_i(\theta_X))\bigr|
  \right\}
  \le M_{i,n}(\vect x),
  \qquad
  \sup_{i,n} \E_{\theta_0}\!\bigl[M_{i,n}(\vect x)^2\bigr]<\infty.
\end{align}
In addition, \(n^{-1}\sum_{i=1}^n M_{i,n}(\vect x)\) is uniformly integrable.
\end{assumption}

\noindent
Assumption~\ref{ass:bayes-envelope} is a standard domination condition stated
globally over posterior draws, so that the negligible tail
\(\Theta_n(\vect x)^c\) can be controlled. The Lipschitz condition in
Assumption~\ref{ass:bayes-lip} is required only on the posterior-typical set
\(\Theta_n(\vect x)\). If the envelope were imposed only on
\(\Theta_n(\vect x)\), an additional tail-envelope condition would be needed.

\subsection{Proof of Theorem~\ref{thm:bayes-mu-consistency}}

\begin{proof}[Proof of Theorem~\ref{thm:bayes-mu-consistency}]
Fix $\vect x\in\mathcal X$.
Define the auxiliary functional that uses the \emph{true} latent
confounder in the regression adjustment,
\begin{equation}
\label{eq:bayes-oracle}
  \mu_n^{\mathrm{or}}(\vect x)
  :=
  \int
    \Biggl\{
      \frac{1}{n}\sum_{i=1}^n r_{\theta_Y}\!\bigl(\vect x,\vect Z_i\bigr)
    \Biggr\}\,
  \widetilde\Pi_n(d\theta).
\end{equation}
We decompose
\begin{equation}
\label{eq:bayes-decomp}
  \mu_n^{\mathrm{pp}}(\vect x)-\mu(\vect x)
  =
  \underbrace{
    \bigl\{\mu_n^{\mathrm{pp}}(\vect x)-\mu_n^{\mathrm{or}}(\vect x)\bigr\}
  }_{(A)}
  +
  \underbrace{
    \bigl\{\mu_n^{\mathrm{or}}(\vect x)-\mu(\vect x)\bigr\}
  }_{(B)}.
\end{equation}

\paragraph{Step 1:} 
Write
\begin{align}
  (A)
  =
  \int
    \Biggl\{
      \frac{1}{n}\sum_{i=1}^n
      \Bigl(
        r_{\theta_Y}(\vect x,\hat{\vect Z}_i(\theta_X))-r_{\theta_Y}(\vect x,\vect Z_i)
      \Bigr)
    \Biggr\}\,
  \widetilde\Pi_n(d\theta).
\end{align}
Split the posterior integral over $\Theta_n(\vect x)$ and
$\Theta_n(\vect x)^c$.  On $\Theta_n(\vect x)$, Assumption~\ref{ass:bayes-lip}
gives, for each $i$,
\begin{align}
  \bigl|r_{\theta_Y}(\vect x,\hat{\vect Z}_i(\theta_X))-r_{\theta_Y}(\vect x,\vect Z_i)\bigr|
  \le L(\vect x)\,\|\hat{\vect Z}_i(\theta_X)-\vect Z_i\|.
\end{align}
Hence
\begin{align}
\label{eq:A-split}
  \bigl|(A)\bigr|
  &\le
  \int_{\Theta_n(\vect x)}
  \Biggl\{
    \frac{1}{n}\sum_{i=1}^n
    L(\vect x)\,\|\hat{\vect Z}_i(\theta_X)-\vect Z_i\|
  \Biggr\}\,\widetilde\Pi_n(d\theta)
  \nonumber\\
  &\quad+\;
  \int_{\Theta_n(\vect x)^c}
  \Biggl\{
    \frac{1}{n}\sum_{i=1}^n
    2M_{i,n}(\vect x)
  \Biggr\}\,\widetilde\Pi_n(d\theta).
\end{align}
By Assumption~\ref{ass:bayes-envelope},
\begin{align}
  \int_{\Theta_n(\vect x)^c}
    \Biggl\{
      \frac{1}{n}\sum_{i=1}^n
      2M_{i,n}(\vect x)
    \Biggr\}\,\widetilde\Pi_n(d\theta)
  \le 2\left\{\frac1n\sum_{i=1}^n M_{i,n}(\vect x)\right\}\,\widetilde\Pi_n\!\bigl(\Theta_n(\vect x)^c\bigr).
\end{align}
Let
\(A_n(\vect x):=n^{-1}\sum_{i=1}^n M_{i,n}(\vect x)\) and
\(B_n(\vect x):=\widetilde\Pi_n(\Theta_n(\vect x)^c)\).  Assumption~\ref{ass:bayes-envelope}
implies \(\sup_n\E_{\theta_0}\{A_n(\vect x)^2\}<\infty\), because
\((n^{-1}\sum_i M_{i,n})^2\le n^{-1}\sum_i M_{i,n}^2\).  Hence \(A_n(\vect x)\) is
uniformly integrable.  Since \(0\le B_n(\vect x)\le1\), the products
\(A_n(\vect x)B_n(\vect x)\) are also uniformly integrable.  The same
second-moment bound gives \(A_n(\vect x)=O_{P_{\theta_0}}(1)\), while
\(B_n(\vect x)\to0\) in probability; hence
\(A_n(\vect x)B_n(\vect x)\to0\) in probability.  Vitali's theorem then gives
\begin{align}
\label{eq:A-tail-L2}
  \E_{\theta_0}\!\left[2A_n(\vect x)B_n(\vect x)\right]=o(1).
\end{align}

For the first term in \eqref{eq:A-split}, Jensen and Cauchy--Schwarz yield
\begin{align}
&\E_{\theta_0}\!\left[
  \int \frac{1}{n}\sum_{i=1}^n
  \|\hat{\vect Z}_i(\theta_X)-\vect Z_i\|\,\widetilde\Pi_n(d\theta)
  \right]
  \\
&\qquad\le
  \left(
  \E_{\theta_0}\!\left[
  \int \frac{1}{n}\sum_{i=1}^n
  \|\hat{\vect Z}_i(\theta_X)-\vect Z_i\|^2\,\widetilde\Pi_n(d\theta)
  \right]
  \right)^{1/2}
  \le \varepsilon_{n,m}+o(1),
\end{align}
by Assumption~\ref{ass:bayes-score-contract}. Combining with
\eqref{eq:A-tail-L2} gives
\begin{equation}
\label{eq:A-bound}
  \E_{\theta_0}\bigl[\,|(A)|\,\bigr]
  \le
  L(\vect x)\,\varepsilon_{n,m} + o(1).
\end{equation}
In particular, Markov's inequality implies $(A)=o_{P_{\theta_0}}(1)$ when
$\varepsilon_{n,m}\to0$.

\paragraph{Step 2:}
Add and subtract $r_{\theta_0}(\vect x,\vect Z_i)$ inside the posterior integral:
\begin{align}
\label{eq:B-split}
  (B)
  &=
  \int
    \Biggl\{
      \frac{1}{n}\sum_{i=1}^n
      \Bigl(
        r_{\theta_Y}(\vect x,\vect Z_i)-r_{\theta_0}(\vect x,\vect Z_i)
      \Bigr)
    \Biggr\}\,\widetilde\Pi_n(d\theta)
  \nonumber\\
  &\quad+\;
  \Biggl\{
    \frac{1}{n}\sum_{i=1}^n r_{\theta_0}(\vect x,\vect Z_i)
    -
    \E_{\theta_0}\bigl[r_{\theta_0}(\vect x,\vect Z_i)\bigr]
  \Biggr\}.
\end{align}
The second bracket in \eqref{eq:B-split} converges to $0$ in
$L^1(P_{\theta_0})$, and hence in probability, by the oracle empirical-average
condition in Theorem~\ref{thm:bayes-mu-consistency}. For the first bracket, use
the triangle inequality and integrate with respect to
$\widetilde\Pi_n(d\theta)$:
\begin{align}
  |(B)|
  &\le
  \int
    \frac{1}{n}\sum_{i=1}^n
    \bigl|r_{\theta_Y}(\vect x,\vect Z_i)-r_{\theta_0}(\vect x,\vect Z_i)\bigr|\,
  \widetilde\Pi_n(d\theta)
  \;+\;
  o_{P_{\theta_0}}(1).
\end{align}
By \eqref{eq:bayes-m-unif} in Assumption~\ref{ass:bayes-contraction-theta0},
the posterior integral converges to zero in $L^1(P_{\theta_0})$, hence also in probability.
Therefore $(B)=o_{P_{\theta_0}}(1)$.

Combining Steps 1-2 in \eqref{eq:bayes-decomp} yields
\begin{align}
  \mu_n^{\mathrm{pp}}(\vect x)
  \xrightarrow{P_{\theta_0}}
  \mu(\vect x)
  \qquad\text{as } n,m\to\infty \text{ and } \varepsilon_{n,m}\to0,
\end{align}
which proves the asserted consistency.

For the bias bound, take expectations in \eqref{eq:bayes-decomp} and use
the triangle inequality together with \eqref{eq:A-bound}:
\begin{align}
  \bigl|
    \E_{\theta_0}[\mu_n^{\mathrm{pp}}(\vect x)]
    - \mu(\vect x)
  \bigr|
  \le
  \E_{\theta_0}|(A)|
  +
  \E_{\theta_0}|(B)|.
\end{align}
By \eqref{eq:A-bound}, $\E_{\theta_0}|(A)|\le L(\vect x)\varepsilon_{n,m}+o(1)$.
Moreover, Assumption~\ref{ass:bayes-envelope} and the calibration condition imply
$\E_{\theta_0}|(B)|=o(1)$: the posterior-calibration component is $o(1)$ in
$L^1(P_{\theta_0})$, and the empirical-average component has expectation
$o(1)$ by the oracle empirical-average condition.
Therefore,
\begin{align}
  \bigl|
    \E_{\theta_0}[\mu_n^{\mathrm{pp}}(\vect x)]
    - \mu(\vect x)
  \bigr|
  \le
  L(\vect x)\,\varepsilon_{n,m} + o(1),
\end{align}
as claimed.
\end{proof}

Under the assumptions of Theorem~\ref{thm:bayes-mu-consistency}, posterior-predictive average treatment effects satisfy
\begin{align}
  \Delta_n^{\mathrm{pp}}(\vect x,\vect x')
  :=
  \mu_n^{\mathrm{pp}}(\vect x')-\mu_n^{\mathrm{pp}}(\vect x)
  \xrightarrow{p}
  \mu(\vect x')-\mu(\vect x)
  =
  \Delta(\vect x,\vect x'),
\end{align}
and the bias bound follows by triangle inequality with Lipschitz constants
$L(\vect x)$ and $L(\vect x')$.

\section{Additional synthetic-grid diagnostics}
\label{sec:appendix-synthetic-grid}

This appendix records supporting diagnostics for the synthetic grid in
Section~\ref{sec:experiments}. Figure~\ref{fig:appendix-synthetic-grid-diagnostics}
shows the full set of grid sweeps. The signal ratio and outcome validity panels are the main paper diagnostics; the localisation and nuisance-rank panels are used as sensitivity checks. They show that performance depends on the geometry of the assignment problem, not only on a one-dimensional signal-to-noise
summary.

\begin{figure}[htbp]
    \centering
    \includegraphics[width=\linewidth]{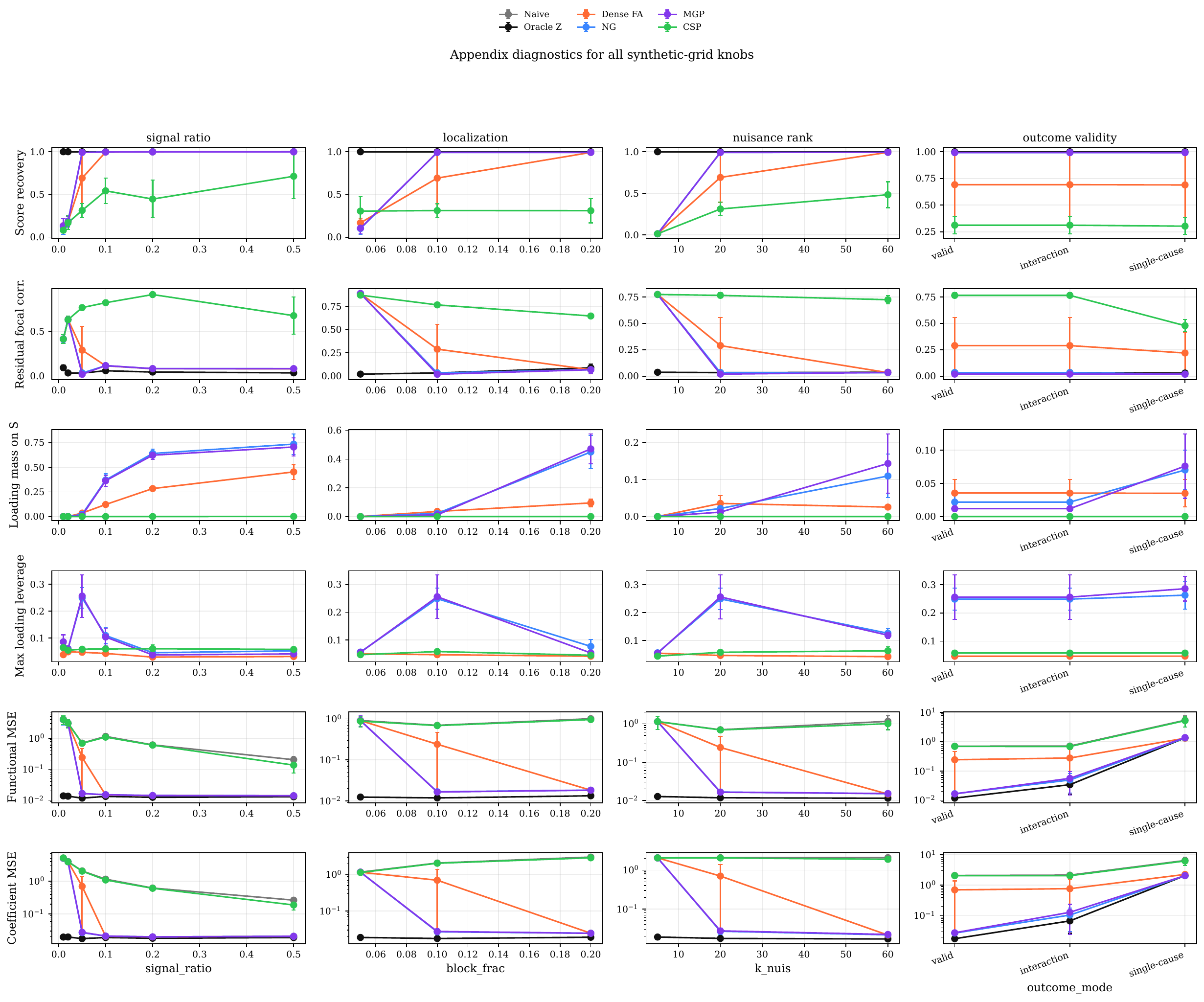}
    \caption{Full synthetic-grid diagnostics. Rows report score recovery,
    residual focal dependence, loading mass on the confounder block, maximum
    loading leverage, functional MSE, and coefficient MSE. Columns vary the
    nominal confounder signal, block localisation, nuisance rank, and outcome
    validity.}
    \label{fig:appendix-synthetic-grid-diagnostics}
\end{figure}

Figures~\ref{fig:appendix-scenario-a-dgp} and
\ref{fig:appendix-scenario-b-dgp} show the two original fixed calibrations used
to motivate the grid. They are retained here to make the assignment geometry
visible. In both designs, the focal causes lie in a small confounder block
\(S\), while high-variance nuisance structure lies mostly in \(S^c\). Scenario~B
reduces the confounder strength and block size relative to Scenario~A.

\begin{figure}[htbp]
    \centering
    \includegraphics[width=0.95\linewidth]{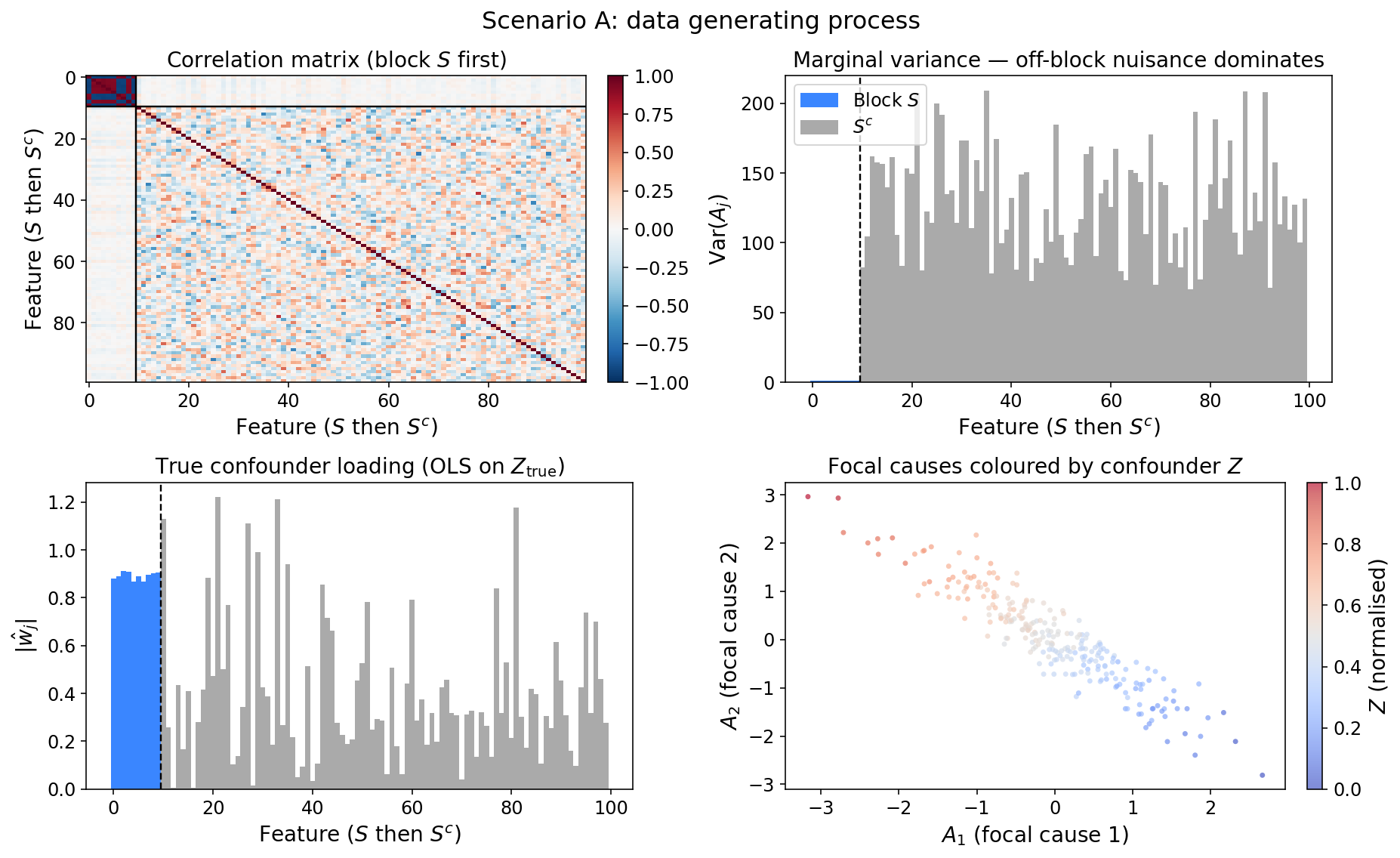}
    \caption{Original Scenario A geometry. The confounding signal is localised
    on a small block containing the focal causes, while the largest marginal
    variances arise in off-block nuisance directions.}
    \label{fig:appendix-scenario-a-dgp}
\end{figure}

\begin{figure}[htbp]
    \centering
    \includegraphics[width=0.95\linewidth]{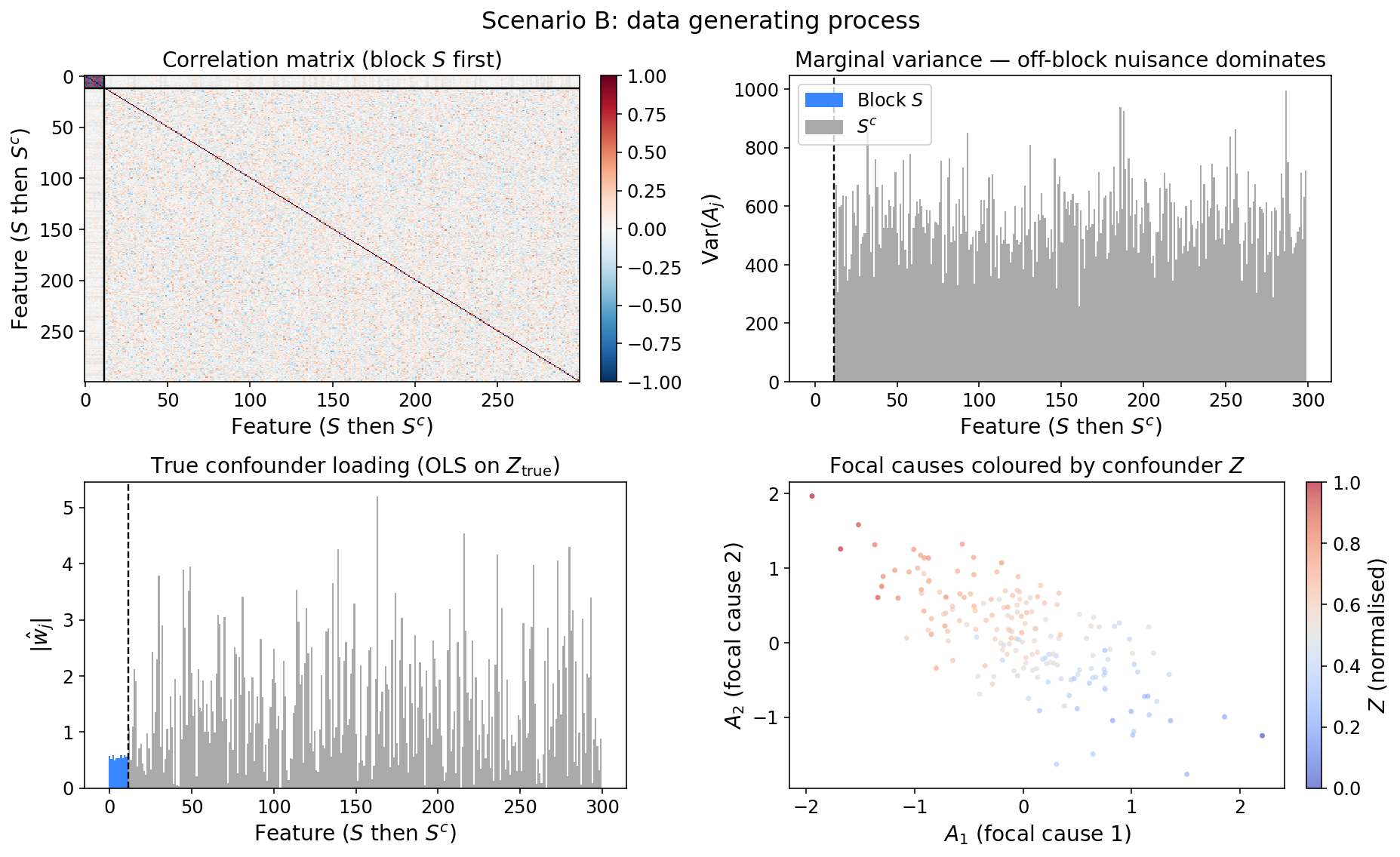}
    \caption{Original Scenario B geometry. The confounder block is smaller and
    weaker than in Scenario A, and the off-block nuisance structure is richer.}
    \label{fig:appendix-scenario-b-dgp}
\end{figure}

\section{Additional ADNI diagnostics}
\label{sec:appendix-adni-diagnostics}

This appendix records two supporting checks for the ADNI example in
Section~\ref{subsec:adni-example}. First, Table~\ref{tab:appendix-adni-apoe}
reports the APOE4/education stress test. The analysis uses an enriched CN/AD
subset in which AD cases have at least one APOE4 allele and lower education,
whereas CN controls have no APOE4 allele and higher education. Sex and
intracranial volume are included as common outcome covariates. The factor
models are fit after withholding APOE4 and education, using standardized age,
log amyloid-\(\beta\), log tau, log \(p\)-tau, and FDG as inputs. This design
asks whether the remaining baseline markers can reproduce the direct
APOE4/education adjustment.

\begin{table}[htbp]
\centering
\small
\setlength{\tabcolsep}{7pt}
\caption{APOE4/education stress test in ADNI (\(n=146\)). The benchmark adjusts
directly for APOE4 and education in addition to sex and intracranial volume.
Substitute confounder rows withhold APOE4 and education from the factor model.}
\label{tab:appendix-adni-apoe}
\begin{tabular}{lccc}
\toprule
Method & Posterior mean & 95\% CI & Collapsed factors \\
\midrule
APOE4/education benchmark & $-2356$ & $(-3012,\ -1702)$ & -- \\
Sex/ICV only & $-1670$ & $(-1922,\ -1431)$ & -- \\
Dense FA & $-1705$ & $(-2154,\ -1240)$ & p-tau \\
Sparse FA (NG) & $-1724$ & $(-2214,\ -1201)$ & tau \\
Sparse FA (MGP) & $-1716$ & $(-2219,\ -1234)$ & p-tau \\
Sparse FA (CSP) & $-1722$ & $(-2201,\ -1247)$ & p-tau \\
\bottomrule
\end{tabular}
\end{table}

The substitute adjustments in Table~\ref{tab:appendix-adni-apoe} move only
modestly away from the sex/ICV-only contrast toward the direct
APOE4/education benchmark. The sparse FA posterior means are slightly closer to the benchmark than dense FA, but all methods leave most of the APOE4/education shift unrecovered, and each factor model has a factor dominated by tau or phosphorylated tau. This stress test complements the CSF-withheld
result in the main text: the available proxies carry some disease-related structure, but not enough shared information to reproduce every clinically relevant adjustment.

Second, we considered a sensitivity analysis in which the CSF-withheld factor model also included non-hippocampal MRI volumes, such as whole-brain, entorhinal, fusiform, and middle-temporal volumes. This proxy set is more outcome proximal. It over-adjusted the diagnostic contrasts relative to the direct CSF benchmark: the mean absolute gaps from the CSF benchmark were approximately \(267\)-\(296\) \(\mathrm{mm}^3\), and mean gap reductions were negative for all factor methods. We therefore treat the demographic/genetic/FDG proxy
set in Table~\ref{tab:adni-csf-recovery} as the primary CSF-withheld benchmark and the MRI-heavy specification as a cautionary sensitivity check.

\bibliographystyle{plainnat}
\bibliography{references}
 
\end{document}